\newcommand{\bigCI}{\mathrel{\text{\scalebox{1.07}{$\perp\mkern-10mu\perp$}}}}
\newcommand{\nbigCI}{\centernot{\bigCI}}
\newtheorem{theorem}{Theorem}[section]
\newtheorem{lemma}[theorem]{Lemma}
\theoremstyle{definition}
\newtheorem{definition}{Definition}[section]
\theoremstyle{remark}
\renewcommand{\algocf@captiontext}[2]{#1\algocf@typo. \AlCapFnt{}#2} 
\def\@algocf@capt@plain{top}
\renewcommand{\algocf@makecaption}[2]{%
  \addtolength{\hsize}{\algomargin}%
  \sbox\@tempboxa{\algocf@captiontext{#1}{#2}}%
  \ifdim\wd\@tempboxa >\hsize
    \hskip .5\algomargin%
    \parbox[t]{\hsize}{\algocf@captiontext{#1}{#2}}
  \else%
    \global\@minipagefalse%
    \hbox to\hsize{\box\@tempboxa}
  \fi%
  \addtolength{\hsize}{-\algomargin}%
}
\def\AR{\textsc{AR}}
\def\CLR{\textsc{CLR}}
\def\TSLS{\textsc{TSLS}}
\def\TSLS{\textsc{TSLS}}
\def\SAR{\textsc{SAR}}
\title{Two Robust Tools for Inference about Causal Effects with Invalid Instruments}
\author{Hyunseung Kang*, Youjin Lee*, T. Tony Cai, Dylan S. Small*}
\date{}
\begin{document}
\maketitle

\begin{abstract}
Instrumental variables have been widely used to estimate the causal effect of a treatment on an outcome. Existing confidence intervals for causal effects based on instrumental variables assume that all of the putative instrumental variables are valid; a valid instrumental variable is a variable that affects the outcome only by affecting the treatment and is not related to unmeasured confounders. However, in practice, some of the putative instrumental variables are likely to be invalid. This paper presents two tools to conduct valid inference and tests in the presence of invalid instruments. First, we propose a simple and general approach to construct confidence intervals based on taking unions of well-known confidence intervals. Second, we propose a novel test for the null causal effect  based on a collider bias. Our two proposals, especially when fused together, outperform traditional instrumental variable confidence intervals when invalid instruments are present, and can also be used as a sensitivity analysis when there is concern that instrumental variables assumptions are violated. The new approach is applied to a Mendelian randomization study on the causal effect of low-density lipoprotein on the incidence of cardiovascular diseases.
\end{abstract}

Keywords: Anderson-Rubin test; Confidence interval; Invalid instrument; Instrumental variable; Likelihood ratio test; Weak instrument.

*: Denotes equal contribution.

\newpage
\setcounter{page}{1}
\setstretch{1.25}

\section{Introduction} \label{sec:intro}

Instrumental variables have been a popular method to estimate the causal effect of a treatment, exposure, or policy on an outcome when unmeasured confounding is present \citep{angrist_identification_1996, hernan2006instruments, baiocchi_instrumental_2014}. Informally speaking, the method relies on having instruments that are (A1) related to the exposure, (A2) only affect the outcome by affecting the exposure (no direct effect), and (A3) are not related to unmeasured confounders that affect the exposure and the outcome (see Section \ref{sec:model} for details). Unfortunately, in many applications, practitioners are unsure if all of the candidate instruments satisfy these assumptions. For example, in Mendelian randomization \citep{davey_smith_mendelian_2003, davey_smith_mendelian_2004}, a subfield of genetic epidemiology, the candidate instruments are genetic variants which are associated with the exposure. But, these instruments may have direct effects on the outcome, an effect known as pleiotropy, and violate (A2) \citep{solovieff_pleiotropy_2013}. Or, the genetic instruments may be in linkage disequilibrium and violate (A3)~\citep{ lawlor_mendelian_2008,burgess_use_2012}. A similar problem arises in economics where some instruments may not be exogenous, where being exogenous is a combination of assumptions (A2) and (A3) \citep{murray_avoiding_2006,conley_plausibly_2012}.


Violation of (A1), known as the weak instrument problem, has been studied in detail; see \citet{stock_survey_2002} for a survey. In contrast, the literature on violations of (A2) and (A3), known as the invalid instrument problem \citep{murray_avoiding_2006} is  limited. \citet{andrews_consistent_1999} and \citet{andrews_consistent_2001} considered selecting valid instruments within context of the generalized method of moments \citep{hansen_large_1982}, but not inferring a treatment effect after selection of valid instruments. \citet{liao_adaptive_2013} and \citet{cheng_select_2015} considered estimating a treatment effect when there is, a priori, a known, specified set of valid instruments and another set of instruments which may not be valid. 
\citet{conley_plausibly_2012} proposed different approaches to assess violations of (A2) and (A3). \citet{small_sensitivity_2007} considered a sensitivity analysis to assess violations of (A2) and (A3). Our work is most closely related to the recent works by \citet{kolesar_identification_2013}, \citet{kang_instrumental_2015}, \citet{bowden_mendelian_2015}, \citet{guo2018confidence}, \citet{windmeijer2018confidence}, \citet{zhao2018statistical} and \citet{windmeijer2019use}. \citet{kolesar_identification_2013} and \citet{bowden_mendelian_2015} considered the case when the instruments violate (A2) and proposed an orthogonality condition where the instruments' effect on the exposure are orthogonal to their effects on the outcome. \citet{kang_instrumental_2015} considered violations of (A2) and (A3) based on imposing an upper bound on the number of invalid instruments among candidate instruments, without knowing which instruments are invalid a priori or without imposing assumptions about instruments' effect like \citet{kolesar_identification_2013} and \citet{bowden_mendelian_2015}. \citet{windmeijer2019use}, under similar settings, discussed consistent selection of the invalid instruments and proposed a median-Lasso estimator that is consistent when less than 50\% candidate instruments are invalid. In addition, \citet{guo2018confidence} proposed sequential hard thresholding to select strong and valid instruments and provided valid confidence intervals for the treatment effect. \citet{windmeijer2018confidence} used multiple confidence intervals to select a set of valid instruments and to construct a valid confidence interval for the treatment effect. 

Instead of first selecting a set of valid or invalid instruments and subsequently testing the treatment effect, our paper directly focuses on testing the effect with invalid instruments by proposing two methods. First, we propose a simple and general confidence interval procedure based on taking unions of well-known confidence intervals and show that it achieves correct coverage rates in the presence of invalid instruments. Second, we propose a novel test for the null hypothesis of no treatment effect in the presence of multiple invalid instruments by leveraging a collider that arises with invalid instruments; we call this test the collider bias test. The null distribution of the collider bias test only depends on the number of valid instruments. Our two methods can also be interpreted as sensitivity analysis.  The usual instrumental variable analysis makes the assumption that all instrumental variables are valid.  Our methods allow one to relax this assumption and see how sensitive the results are by varying the parameter $\bar{s}$ that indicates the number of invalid instruments and observing how the proposed inferential quantities change from $\bar{s}=1$ (i.e. no invalid instruments) to $\bar{s} = L$ (i.e. at most $L-1$ instruments are valid). We also demonstrate that combining the two methods can produce a more powerful inferential procedure than if each method were used alone. We conclude by demonstrating our methods in both synthetic and real datasets.

\section{Setup}

\subsection{Notation}

We use the potential outcomes notation \citep{neyman1923application, rubin_estimating_1974} for instruments laid out in \citet{holland_causal_1988}. Specifically, let there be $L$ potential candidate instruments and $n$ individuals in the sample. Let $Y_i^{(d,\mathbf{z})}$ be the potential outcome if individual $i$ had exposure $d$, a scalar value, and instruments $\mathbf{z}$, an $L$ dimensional vector. Let $D_i^{(\mathbf{z})}$ be the potential exposure if individual $i$ had instruments $\mathbf{z}$. For each individual, we observe the outcome $Y_i$, the exposure, $D_i$, and instruments $\mathbf{Z}_{i\cdot}$. In total, we have $n$ observations of $(Y_i, D_i,\mathbf{Z}_{i\cdot})$. We denote $\mathbf{Y}_{n} = (Y_1,\ldots, Y_n)$ and $\mathbf{D}_{n} = (D_1,\ldots,D_n)$ to be vectors of $n$ observations. Finally, we denote $\mathbf{Z}_{n}$ to be an $n$ by $L$ matrix where row $i$ consists of $\mathbf{Z}_{i\cdot}^T$ and $\mathbf{Z}_{n}$ is assumed to have full rank. 

For a subset $A \subseteq \{1,\ldots,L\}$, denote its cardinality $c(A)$ and $A^C$ its complement. Let $\mathbf{Z}_{A}$ be an $n$ by $c(A)$ matrix of instruments where the columns of $\mathbf{Z}_A$ are from the set $A$. Similarly, for any $L$ dimensional vector $\boldsymbol{\pi}$, let $\boldsymbol{\pi}_A$ only consist of elements of the vector $\boldsymbol{\pi}$ determined by the set $A \subseteq \{1,\ldots,L\}$.

\subsection{Model and Definition of Valid Instruments} \label{sec:model}

For two possible values of the exposure $d',d$ and instruments $\mathbf{z}',\mathbf{z}$, we assume the following potential outcomes model
\begin{equation} \label{eq:model1}
Y_{i}^{(d',\mathbf{z}')} - Y_{i}^{(d,\mathbf{z})} =  (\mathbf{z}' - \mathbf{z})^T \boldsymbol{\phi}^* + (d' - d) \beta^*,\quad E\{Y_i^{(0,0)} \mid \mathbf{Z}_{i \cdot}\} =  \mathbf{Z}_{i \cdot}^T \boldsymbol{\psi}^* 
\end{equation}
where $\boldsymbol{\phi}^*, \boldsymbol{\psi}^*$, and $\beta^*$ are unknown parameters. The parameter $\beta^*$ represents the causal parameter of interest, the causal effect (divided by $d' - d$) of changing the exposure from $d'$ to $d$ on the outcome. The parameter $\boldsymbol{\phi}^*$ represents violation of (A2), the direct effect of the instruments on the outcome. If (A2) holds, then $\boldsymbol{\phi}^* = 0$. The parameter $\boldsymbol{\psi}^*$ represents violation of (A3), the presence of unmeasured confounding between the instruments and the outcome. If (A3) holds, then $\boldsymbol{\psi}^* = 0$. 

Let $\boldsymbol{\pi}^* = \boldsymbol{\phi}^* + \boldsymbol{\psi}^*$ and $\epsilon_i = Y_{i}^{(0,0)} - E\{Y_{i}^{(0,0)} \mid \mathbf{Z}_{i\cdot}\}$. When we combine equations \eqref{eq:model1} along with the definition of $\epsilon_i$, we arrive at the observed data model
\begin{equation} \label{eq:model2}
Y_i =  \mathbf{Z}_{i\cdot}^T \boldsymbol{\pi}^* +  D_i \beta^* + \epsilon_i,\quad{} E(\epsilon_i \mid \mathbf{Z}_{i\cdot}) = 0
\end{equation}
The observed model is also known as an under-identified single-equation linear model in econometrics (page 83 of \citet{wooldridge_econometrics_2010}). 
The observed model can have exogenous covariates, say $\mathbf{X}_{i\cdot}$, including an intercept term, and the Frisch-Waugh-Lovell Theorem allows us to reduce the model with covariates to model \eqref{eq:model2} \citep{davidson_estimation_1993}. The parameter $\boldsymbol{\pi}^*$ 
in the observed data model \eqref{eq:model2} combines both violation of (A2), represented by $\boldsymbol{\phi}^*$, and violation of (A3), represented by $\boldsymbol{\psi}^*$. If both (A2) and (A3) are satisfied, then $\boldsymbol{\phi}^* = \boldsymbol{\psi}^* = \mathbf{0}$ and $\boldsymbol{\pi}^* = \mathbf{0}$. Hence, the value of $\boldsymbol{\pi}^*$ captures whether instruments are valid or invalid. Definition \ref{def:validIV} formalizes this idea.

\begin{definition} \label{def:validIV}
	Suppose we have $L$ candidate instruments along with models \eqref{eq:model1}--\eqref{eq:model2}. We say that instrument $j = 1,\ldots,L$ is valid (i.e. satisfy (A2) and (A3)) if $\pi_j^* = 0$ and invalid if $\pi_j^* \neq 0$.
\end{definition}

When there is only one instrument, $L = 1$, Definition \ref{def:validIV} of a valid instrument is identical to the definition of a valid instrument in \citet{holland_causal_1988}. Specifically, assumption (A2), the exclusion restriction, which implies $Y_i^{(d,\mathbf{z})} = Y_i^{(d,\mathbf{z}')}$ for all $d,\mathbf{z},\mathbf{z}'$, is equivalent to $\boldsymbol{\phi}^* = 0$ and assumption (A3), no unmeasured confounding, which means $Y_i^{(d,\mathbf{z})}$ and $D_{i}^{(\mathbf{z})}$ are independent of $\mathbf{Z}_{i \cdot}$ for all $d$ and $\mathbf{z}$, is equivalent to $\boldsymbol{\psi}^* = \mathbf{0}$, implying $\boldsymbol{\pi}^* = \boldsymbol{\phi}^* + \boldsymbol{\psi}^* = \mathbf{0}$. Definition \ref{def:validIV} is also a special case of the definition of a valid instrument in \citet{angrist_identification_1996} where our setup assumes a model with additive, linear, and constant treatment effect $\beta^{*}$. Hence, when multiple instruments, $L > 1$, are present, our models \eqref{eq:model1}--\eqref{eq:model2} and Definition \ref{def:validIV} can be viewed as a generalization of the definition of valid instruments in \citet{holland_causal_1988}. The supplementary materials contain additional discussions of \eqref{eq:model1}--\eqref{eq:model2} and Definition \ref{def:validIV}.

Given the models above, our paper will focus on testing $\beta^*$ in the presence of invalid instruments, or formally
\begin{equation} \label{eq:hyp}
H_0: \beta^* = \beta_0
\end{equation}
for some value of $\beta_0$ when $\bm{\pi}^*$ may not equal to $0$. Specifically, let $s^{*} = 0,\ldots,L - 1$ be the number of invalid instruments and let $\bar{s}$ be an upper bound on $s^{*}$ plus 1, $s^* \leq \bar{s} +1$ or $s^* < \bar{s}$; the number of invalid instruments is assumed to be less than $\bar{s}$.
Let $v^{*} = L - s^{*}$ be the number of valid instruments. We assume that there is at least one valid IV, even if we don't know which among the $L$ instruments are valid. 
This setup was considered in \citet{kang_instrumental_2015} as a relaxation to traditional instrumental variables setups where one knows exactly which instruments are valid. Also, in Mendelian randomization where instruments are genetic, the setup represents a way for a genetic epidemiologist to impose prior beliefs about instruments' validity. For example, based on the investigator's expertise and prior genome wide association studies, the investigator may provide an upper bound $\bar{s}$, with a small $\bar{s}$ representing an investigator's strong confidence that most of the $L$ candidate instruments are valid and a large $\bar{s}$ representing an investigator's weak confidence in the candidate instruments' validity.

In the absence of prior belief about $\bar{s}$, the setup using the additional parameter $\bar{s}$ can also be viewed as a sensitivity analysis common in causal inference. In particular, similar to the sensitivity analysis presented in \citet{rosenbaum_observational_2002}, we can treat $\bar{s}$ as the sensitivity parameter and vary from $\bar{s} = 1$ to $\bar{s} = L$ where $\bar{s} = 1$ represents the traditional case where all instruments satisfy (A2) and (A3) and $\bar{s} = L$ represents the worst case where at most $L - 1$ instruments may violate (A2) and (A3). For each $\bar{s}$, we can construct confidence intervals from our two proposed methods below and observe how increasing violations of instrumental variables assumptions through increasing $\bar{s}$ impact the resulting conclusions about  $\beta^*$. Also, similar to a typical sensitivity analysis, we can find the smallest $\bar{s}$ that retains the null hypothesis of no causal effect.
If at $\bar{s} = L$, our methods still reject the null, then the conclusion about the causal effect $\beta^*$ is insensitive to violations of assumptions (A2) and (A3).

\section{Method 1: Union Confidence Interval With Invalid Instruments}
\label{sec:CI}

\subsection{Procedure}
\label{sec:genprocedure}

Let $B^* \subset \{1,\ldots,L\}$ be the true set of invalid instruments.
In the instrumental variables literature, there are many test statistics $T(\beta_0,B^*)$ of the null hypothesis in \eqref{eq:hyp} if $B^*$ is known. Some examples include the test statistic based on two-stage least squares, the Anderson-Rubin test \citep{anderson_estimation_1949}, and the conditional likelihood ratio test \citep{moreira_conditional_2003}; see the supplementary materials for details of these test statistics and additional test statistics in the literature. By the duality of hypothesis testing and confidence intervals, inverting any of the aforementioned test statistic $T(\beta_0,B^*)$ under size $\alpha$ provides a $1 - \alpha$ confidence interval for $\beta^*$, which we denote as $C_{1 - \alpha}(\mathbf{Y}_{n},\mathbf{D}_{n},\mathbf{Z}_{n},B^*)$
\begin{equation} \label{eq:confIntInversion}
C_{1 - \alpha}(\mathbf{Y}_{n},\mathbf{D}_{n},\mathbf{Z}_{n},B^*) = \{\beta_0 \mid T(\beta_0,B^*) \leq q_{1 - \alpha}\}
\end{equation}
Here, $q_{1 - \alpha}$ is the $1 - \alpha$ quantile of the null distribution of $T(\beta_0,B^*)$. 

Unfortunately, in our setup, we do not know the true set $B^*$ of invalid instruments, so we cannot directly use \eqref{eq:confIntInversion} to estimate confidence intervals of $\beta^*$. However, from Section \ref{sec:model}, we have a constraint on the number of invalid instruments, $s^* < \bar{s}$. We can use this constraint to take unions of $C_{1 - \alpha}(\mathbf{Y}_{n},\mathbf{D}_{n},\mathbf{Z}_{n},B)$ over all possible subsets of instruments $B \subset \{1,\ldots,L\}$ where $c(B) < \bar{s}$. The confidence interval with the true set of invalid instruments $C(\mathbf{Y}_{n},\mathbf{D}_{n},\mathbf{Z}_{n},B^*)$ will be in this union since $s^{*} = c(B^*) < \bar{s}$. Our proposal, which we call the union method, is exactly this except we restrict the subsets $B$ in the union to be only of size $c(B) = \bar{s} - 1$.
\begin{equation} \label{eq:confIntProposal}
C_{1 -\alpha}(\mathbf{Y}_{n},\mathbf{D}_{n},\mathbf{Z}_{n}) = \cup_{B,c(B) = \bar{s} - 1} \{C_{1 - \alpha}(\mathbf{Y}_{n},\mathbf{D}_{n},\mathbf{Z}_{n},B) \} 
\end{equation}
Theorem \ref{theorem1} shows that the confidence interval in \eqref{eq:confIntProposal} has the proper coverage in the presence of  invalid instruments.
\begin{theorem}
	\label{theorem1}
	Suppose model \eqref{eq:model2} holds and $s^* < \bar{s}$. Given $\alpha \in (0,1)$, consider any test statistic $T(\beta_0,B$) with the property that for any $B^* \subseteq B$, $T(\beta_0,B)$ has size at most $\alpha$ under the null hypothesis in \eqref{eq:hyp}. Then, $C_{1 - \alpha}(\mathbf{Y}_{n}, \mathbf{D}_{n},\mathbf{Z}_{n})$ in \eqref{eq:confIntProposal} has at least $1 - \alpha$ coverage of $\beta^*$.
\end{theorem}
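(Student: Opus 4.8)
The plan is to exploit the single structural fact that the true invalid set $B^*$ is small enough to be contained in one of the sets appearing in the union \eqref{eq:confIntProposal}, and then to transfer the size guarantee of the corresponding test into a coverage guarantee via the usual testing--confidence-interval duality. Concretely, I would first observe that since $s^* = c(B^*) < \bar{s}$ we have $c(B^*) \leq \bar{s} - 1$. Because $\bar{s} - 1 \leq L - 1$, there is room among $\{1,\ldots,L\}$ to enlarge $B^*$ to a set $\tilde{B}$ with $B^* \subseteq \tilde{B}$ and $c(\tilde{B}) = \bar{s} - 1$ (simply adjoin arbitrary extra indices to $B^*$). This $\tilde{B}$ is precisely one of the index sets over which the union in \eqref{eq:confIntProposal} is taken.

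Next I would invoke the hypothesis of the theorem applied to this $\tilde{B}$. Since $B^* \subseteq \tilde{B}$, the test statistic $T(\beta_0, \tilde{B})$ has size at most $\alpha$ for testing $H_0$ in \eqref{eq:hyp}. Evaluating this at the true value $\beta_0 = \beta^*$ and using the definition of size (the supremum of the rejection probability over the null, which here contains the true data-generating law under model \eqref{eq:model2}), the probability that $T(\beta^*, \tilde{B}) > q_{1-\alpha}$ is at most $\alpha$. By the inversion definition \eqref{eq:confIntInversion}, this is exactly the statement that $P\{\beta^* \in C_{1-\alpha}(\mathbf{Y}_n, \mathbf{D}_n, \mathbf{Z}_n, \tilde{B})\} \geq 1 - \alpha$.

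Finally, I would close the argument by monotonicity of probability under set inclusion: because $\tilde{B}$ indexes one of the members of the union, $C_{1-\alpha}(\mathbf{Y}_n, \mathbf{D}_n, \mathbf{Z}_n, \tilde{B}) \subseteq C_{1-\alpha}(\mathbf{Y}_n, \mathbf{D}_n, \mathbf{Z}_n)$, and hence
\[
P\{\beta^* \in C_{1-\alpha}(\mathbf{Y}_n, \mathbf{D}_n, \mathbf{Z}_n)\} \geq P\{\beta^* \in C_{1-\alpha}(\mathbf{Y}_n, \mathbf{D}_n, \mathbf{Z}_n, \tilde{B})\} \geq 1 - \alpha,
\]
which is the claimed coverage. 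The proof is short and essentially set-theoretic; the only point requiring care is the bookkeeping in the first step, namely verifying that $B^*$ can always be padded up to exactly size $\bar{s} - 1$ (so that it genuinely appears in the union, which ranges only over sets of cardinality $\bar{s}-1$ rather than all sets of cardinality at most $\bar{s}-1$), and checking that restricting to sets of exactly this size loses nothing. I do not anticipate a genuine obstacle; the substance of the result lies entirely in the assumed monotone size property $B^* \subseteq B \Rightarrow \text{size} \leq \alpha$, and the union construction is designed precisely to guarantee that at least one admissible $B$ dominates the unknown $B^*$.
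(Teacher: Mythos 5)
Your proof is correct and is essentially identical to the paper's: both pad $B^*$ to a set $\tilde{B}$ of cardinality $\bar{s}-1$ containing it, apply the size-$\alpha$ guarantee (via test inversion) to $C_{1-\alpha}(\mathbf{Y}_n,\mathbf{D}_n,\mathbf{Z}_n,\tilde{B})$, and conclude by monotonicity since $\tilde{B}$ appears in the union. Your version merely spells out the duality and padding bookkeeping in slightly more detail.
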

The proof is in the appendix. The proposed confidence interval $C_{1 - \alpha}(\mathbf{Y}_{n},\mathbf{D}_{n},\mathbf{Z}_{n})$ is not only robust to the presence of invalid instruments, but also simple and general. Specifically, for any test statistic $T(\beta_0,B)$ with a valid size, such as those mentioned above, one simply takes unions of confidence intervals of $T(\beta_0,B)$ over subsets of instruments $B$ where $c(B) = \bar{s} - 1$. In addition, a key feature of our procedure is that we do not have to iterate through all subsets of instruments where $c(B) < \bar{s}$; we only have to examine the largest possible set of invalid instruments, $c(B) = \bar{s} - 1$, to guarantee at least $1 - \alpha$ coverage. 

A caveat to our procedure is computational feasibility. Even though we restrict the union to subsets of exactly size $c(B) = \bar{s} - 1$, if there are many candidate instruments $L$ and $\bar{s}$ is moderately large, $C_{1-\alpha}(\mathbf{Y}_{n}, \mathbf{D}_{n},\mathbf{Z}_{n})$ becomes computationally burdensome. However, in many instrumental variables studies, it is difficult to find good candidate instruments that are both strong and plausibly valid. In economic applications, the number of good instruments rarely exceeds $L= 20$. In some, but not all, Mendelian randomization studies, after linkage disequilibrium clumping and p-value thresholding, $L$ remains small. In these two cases, our procedure in \eqref{eq:confIntProposal} is computationally tractable.


\subsection{A Shorter Interval With Pretesting} \label{sec:pretest}
As shown in Theorem \ref{theorem1}, our proposed interval $C_{1-\alpha}(\mathbf{Y}_{n},\mathbf{D}_{n},\mathbf{Z}_{n})$ achieves the desired coverage level by taking unions of confidence intervals $C_{1-\alpha}(\mathbf{Y}_{n},\mathbf{D}_{n},\mathbf{Z}_{n},B)$ over all $c(B) = \bar{s} - 1$. Some of these subsets $B$ have every invalid instrument, leading to unbiased confidence intervals (i.e. contain $\beta^*$ with probability greater than or equal to $1 -\alpha$). But, other subsets may not have every invalid instrument, leading to biased confidence intervals. Then, taking the union of both types of confidence intervals may elongate $C_{1-\alpha}(\mathbf{Y}_{n},\mathbf{D}_{n},\mathbf{Z}_{n})$ since we only need one unbiased confidence interval to have the desired coverage level; in other words, including biased intervals will make $C_{1 -\alpha}(\mathbf{Y}_{n},\mathbf{D}_{n},\mathbf{Z}_{n})$ unnecessarily conservative. In this section, we propose a way to shorten $C_{1-\alpha}(\mathbf{Y}_{n},\mathbf{D}_{n},\mathbf{Z}_{n})$ by pretesting whether each subset $B$ contain invalid instruments.

Formally, for a $1 - \alpha$ confidence interval of $\beta^*$, consider the null hypothesis that $B^C$, for $c(B^C) \geq 2$, contains only valid instruments, $H_0: \boldsymbol{\pi}_{B_C}^* = 0$. Suppose $S(B)$ is a test statistic for this null with level $\alpha_{s} < \alpha$ and $q_{1 - \alpha_{s}}$ is the $1 - \alpha_{s}$ quantile of the null distribution of $S(B)$. Let $\alpha_t = \alpha - \alpha_s$ be the confidence level for $C_{1-\alpha_t}(\mathbf{Y}_{n},\mathbf{D}_{n},\mathbf{Z}_{n},B)$. Then, a $1 -\alpha$ confidence interval for $\beta^*$ that incorporates the pretest $S(B)$ is
\begin{equation} \label{eq:confIntProposalScreen}
C_{1 - \alpha}'(\mathbf{Y}_{n},\mathbf{D}_{b},\mathbf{Z}_{n}) = \cup_{B} \{C_{1-\alpha_{t}}(\mathbf{Y}_{n},\mathbf{D}_{b},\mathbf{Z}_{n},B) \mid c(B) = \bar{s} - 1, S(B) \leq q_{1-\alpha_{s}} \}
\end{equation}
 For example, if the desired confidence level for $\beta^*$ is 95\% so that $\alpha =  0.05$, we can run the pretest $S(B)$ at $\alpha_s = 0.01$ level and compute $C_{1-\alpha_{t}}(\mathbf{Y}_{n},\mathbf{D}_{n},\mathbf{Z}_{n},B)$ at the $\alpha_t = 0.04$ level. Theorem \ref{theorem2} shows that $C_{1 - \alpha}'(\mathbf{Y}_{n},\mathbf{D}_{n},\mathbf{Z}_{n})$ achieves the desired $1 - \alpha$ coverage of $\beta^*$ in the presence of invalid instruments.
\begin{theorem} \label{theorem2} Suppose the assumptions in Theorem \ref{theorem1} hold. For any pretest $S(B)$ where $c(B^C) \geq 2$ and $S(B)$ has the correct size under the null hypothesis $H_0: \bm{\pi}_{B^C}^* = 0$, $C_{1 - \alpha}'(\mathbf{Y}_{n},\mathbf{D}_{n},\mathbf{Z}_{n})$ has at least $1 - \alpha$. 
\end{theorem}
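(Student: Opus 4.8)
The plan is to adapt the argument behind Theorem \ref{theorem1}: there, coverage was guaranteed by exhibiting a single subset $B_0$ of size $\bar{s}-1$ containing the entire true invalid set $B^*$, for which the interval based on $B_0$ is valid and is always a member of the union. The only new feature here is that the pretest $S(B)$ may discard this good subset before it enters the union in \eqref{eq:confIntProposalScreen}. The idea is therefore to split the total error budget $\alpha = \alpha_s + \alpha_t$ into the probability that the good subset is erroneously screened out (controlled by $\alpha_s$) and the probability that, once retained, its shorter level-$\alpha_t$ interval fails to cover $\beta^*$ (controlled by $\alpha_t$). Throughout I write $C_{1-\alpha_t}(\cdot,B)$ for $C_{1-\alpha_t}(\mathbf{Y}_{n},\mathbf{D}_{n},\mathbf{Z}_{n},B)$.

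First I would fix the same good subset as in Theorem \ref{theorem1}. Since $s^* = c(B^*) < \bar{s}$, I can choose $B_0$ with $c(B_0)=\bar{s}-1$ and $B^*\subseteq B_0$ by padding $B^*$ with valid instruments; because the union ranges over subsets with $c(B^C)\ge 2$, this requires $\bar{s}\le L-1$, which ensures $c(B_0^C)=L-\bar{s}+1\ge 2$ so that $S(B_0)$ is defined. The key observation is that $B_0^C \subseteq (B^*)^C$ consists entirely of valid instruments, so by Definition \ref{def:validIV} we have $\boldsymbol{\pi}_{B_0^C}^* = 0$; that is, the pretest's null hypothesis $H_0:\boldsymbol{\pi}_{B_0^C}^*=0$ genuinely holds at $B_0$. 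Because $S(B_0)$ has correct size $\alpha_s$ under this null, $P(S(B_0) > q_{1-\alpha_s}) \le \alpha_s$. Separately, since $B^* \subseteq B_0$, the size property inherited from Theorem \ref{theorem1} applies to $T(\beta_0,B_0)$ at every level, and in particular at level $\alpha_t$, so $C_{1-\alpha_t}(\cdot,B_0)$ covers $\beta^*$ with probability at least $1-\alpha_t$, i.e. $P(\beta^* \notin C_{1-\alpha_t}(\cdot,B_0)) \le \alpha_t$.

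Next I would decompose the failure event around the pretest outcome at $B_0$. If $B_0$ passes the pretest, $S(B_0)\le q_{1-\alpha_s}$, then $C_{1-\alpha_t}(\cdot,B_0)$ is one of the sets forming the union $C_{1-\alpha}'(\mathbf{Y}_{n},\mathbf{D}_{n},\mathbf{Z}_{n})$, so $\beta^* \in C_{1-\alpha_t}(\cdot,B_0)$ forces $\beta^* \in C_{1-\alpha}'(\mathbf{Y}_{n},\mathbf{D}_{n},\mathbf{Z}_{n})$. Contrapositively,
\[
\{\beta^* \notin C_{1-\alpha}'(\mathbf{Y}_{n},\mathbf{D}_{n},\mathbf{Z}_{n})\} \subseteq \{S(B_0) > q_{1-\alpha_s}\} \cup \{\beta^* \notin C_{1-\alpha_t}(\cdot,B_0)\}.
\]
A two-term union bound then gives $P(\beta^* \notin C_{1-\alpha}'(\mathbf{Y}_{n},\mathbf{D}_{n},\mathbf{Z}_{n})) \le \alpha_s + \alpha_t = \alpha$, which is exactly the claimed coverage.

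The main obstacle is really the one point requiring care rather than a genuine difficulty: verifying in the second step that the pretest's null hypothesis holds at the chosen good subset $B_0$, so that screening out $B_0$ is a rare event bounded by $\alpha_s$. This is what licenses the clean split of $\alpha$ into $\alpha_s$ and $\alpha_t$; without it one could not exclude the possibility that the pretest systematically discards the only subset guaranteeing coverage, and the union bound would be vacuous. Everything else is the Theorem \ref{theorem1} argument re-run at level $\alpha_t$ together with the elementary decomposition of the coverage-failure event above.
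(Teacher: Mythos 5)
Your proof is correct and follows essentially the same route as the paper's: both exhibit a subset $\tilde{B}$ (your $B_0$) of size $\bar{s}-1$ containing all invalid instruments, observe that the pretest null $\boldsymbol{\pi}_{\tilde{B}^C}^*=0$ genuinely holds there so mis-screening has probability at most $\alpha_s$, and then apply Bonferroni's inequality (your union bound on the complementary events is the identical step) to split the error as $\alpha_s+\alpha_t=\alpha$. Your extra remark that $\bar{s}\le L-1$ is needed for the pretest to be defined at $B_0$ is a nice point of care that the paper only addresses implicitly via the hypothesis $c(B^C)\ge 2$.
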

Similar to Theorem \ref{theorem1}, procedure \eqref{eq:confIntProposalScreen} is general in the sense that any pretest $S(B)$ with the correct size can be used to construct $C'_{1-\alpha}(\mathbf{Y}_{n},\mathbf{D}_{n},\mathbf{Z}_{n})$. For example, the Sargan test \citep{sargan_estimation_1958} can act as a pre-test for \eqref{eq:confIntProposalScreen}; see the supplementary materials for details of the Sargan test. 

Finally, while many tests satisfy the requirements for Theorems \ref{theorem1} and \ref{theorem2}, some tests will be better than others where ``better'' can be defined in terms of statistical power or length of the confidence interval. In the supplementary materials, we characterize the power of common tests in instrumental variables literature when invalid instruments are present and we show that under additional assumptions, the Anderson-Rubin test tends to have better power than the test based on two-stage least squares when invalid instruments are present.

\section{Method 2: A Collider Bias Test With Invalid Instruments}
\subsection{Motivation With Two Instruments}
\label{sec:conditional_test}


In this section, we introduce a new test statistic to test the null hypothesis of no treatment effect when invalid instruments are possibly present, i.e. when $\beta_0 = 0$ in equation \eqref{eq:hyp}. Broadly speaking, the new test is based on recognizing a collider bias in a directed acyclic graph when the null hypothesis of no effect does not hold and there is at least one valid instrument among a candidate set of $L$ instruments. To better illustrate the bias, we start with two, independent candidate instruments $L=2$  where at least one instrument is valid and generalize the idea to $L > 2$.

Suppose $H_0: \beta^* = 0$ holds and consider Figure~\ref{fig:diagram} which illustrates a directed acyclic graph with two mutually independent instruments. Each node indicates a variable and a directed edge connecting two nodes indicates a non-zero direct causal effect. For illustrative purposes, dotted directed edges represent \textit{possibly} non-zero causal effects. The variable $U$ represents an unmeasured confounder between $D$ and $Y$.
\begin{figure}[ht]
	\begin{centering}
		\begin{tikzpicture}[>=stealth, node distance=0.5cm]
		\tikzstyle{format} = [thin, circle, minimum size=5.0mm,
		inner sep=0.1pt]
		\tikzstyle{square} = [thin, rectangle, draw]
		\begin{scope}[xshift=0.0cm, yshift = 0cm]
		\path[->, thick]
		node[format, circle, line width=0.3mm] (z1) {$Z_{1}$}
		node[format, below of=z1, yshift = -2cm, circle, line width=0.3mm] (z2) {$Z_{2}$}
		node[format, right of=z1, xshift = 1cm, yshift = -1cm, circle, line width=0.3mm] (d) {$D$}
		node[format, right of=d, xshift = 1.5cm, circle, line width=0.3mm] (y) {$Y$}
		node[format, right of=d, xshift = 0.7cm, yshift =  -1cm, circle, line width=0.3mm] (u) {$U$}
		node[format, below of=d, yshift =  -1.5cm, circle, line width=0.3mm] (a) {(a)}
		
		
		(z1) edge[thick, black] node[below]{$\gamma^{*}_{1}$} (d)
		(z2) edge[thick, black] node[above]{$\gamma^{*}_{2}$} (d)
		(u) edge[thick, densely dotted]  (d)
		(u) edge[thick, densely dotted] (y)
		(z1) edge[thick, densely dotted,  color = Gray] node[above]{$\pi^{*}_{1}  = 0$} (y)
		(z2) edge[thick, densely dotted,  color = Gray] node[below]{$\pi^{*}_{2} = 0$} (y)
		;
		\end{scope}
		
		\begin{scope}[xshift=4.5cm, yshift = 0cm]
		\path[->, thick]
		node[format, circle, line width=0.3mm] (z1) {$Z_{1}$}
		node[format, below of=z1, yshift = -2cm, circle, line width=0.3mm] (z2) {$Z_{2}$}
		node[format, right of=z1, xshift = 1cm, yshift = -1cm, circle, line width=0.3mm] (d) {$D$}
		node[format, right of=d, xshift = 1.5cm, circle, line width=0.3mm] (y) {$Y$}
		node[format, right of=d, xshift = 0.7cm, yshift =  -1cm, circle, line width=0.3mm] (u) {$U$}
		node[format, below of=d, yshift =  -1.5cm, circle, line width=0.3mm] (b) {(b)}
		
		(z1) edge[very thick, color = RoyalBlue] node[below]{$\gamma^{*}_{1}$} (d)
		(z2) edge[very thick, color = RoyalBlue] node[above]{$\gamma^{*}_{2}$} (d)
		(u) edge[thick, densely dotted]  (d)
		(u) edge[thick, densely dotted] (y)
		(z1) edge[very thick, color = RoyalBlue] node[above]{$\pi^{*}_{1}$} (y)
		(z2) edge[thin, densely dotted, color = Gray] node[below]{$\pi^{*}_{2} = 0$} (y)
		;
		\end{scope}
		
		\begin{scope}[xshift=9.0cm, yshift = 0cm]
		\path[->, thick]
		node[format, circle, line width=0.3mm] (z1) {$Z_{1}$}
		node[format, below of=z1, yshift = -2cm, circle, line width=0.3mm] (z2) {$Z_{2}$}
		node[format, right of=z1, xshift = 1cm, yshift = -1cm, circle, line width=0.3mm] (d) {$D$}
		node[format, right of=d, xshift = 1.5cm, circle, line width=0.3mm] (y) {$Y$}
		node[format, right of=d, xshift = 0.7cm, yshift =  -1cm, circle, line width=0.3mm] (u) {$U$}
		node[format, below of=d, yshift =  -1.5cm, circle, line width=0.3mm] (c) {(c)}
		
		(z1) edge[very thick, color = RoyalBlue] node[below]{$\gamma^{*}_{1}$} (d)
		(z2) edge[very thick, color = RoyalBlue] node[above]{$\gamma^{*}_{2}$} (d)
		(u) edge[thick, densely dotted]  (d)
		(u) edge[thick, densely dotted] (y)
		(z1) edge[thin, densely dotted, color = Gray] node[above]{$\pi^{*}_{1} = 0$} (y)
		(z2) edge[very thick, color = RoyalBlue] node[below]{$\pi^{*}_{2}$} (y)
		;
		\end{scope}
		\end{tikzpicture} 
		\par\end{centering}
	\caption{ \label{fig:diagram} Causal directed acyclic graph with two candidate instruments $Z_{1}$ and $Z_{2}$ when  $H_0: \beta^* = 0$ holds. Solid lines indicate a non-zero causal path and dotted lines indicate a \textit{possibly non-zero} causal path.  A variable $U$ indicates an unmeasured confounder between $D$ and $Y$. We use $\gamma^*$ and $\pi^*$ to label each edge. Our setup supposes that at least one instrument is valid, i.e. $\pi^{*}_{1} \pi^{*}_2 = 0$, without knowing which $\pi^*$ is zero.}
\end{figure}
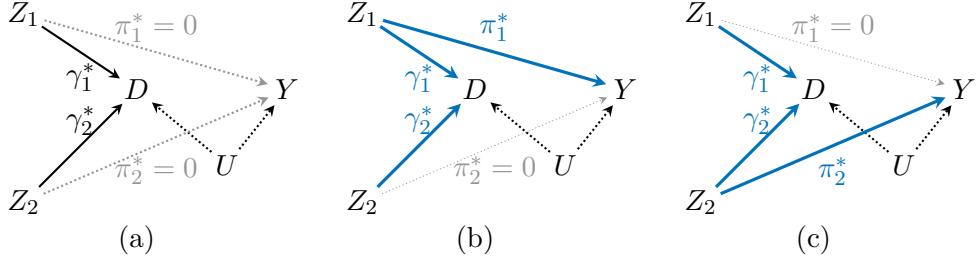
In all three graphs of Figure~\ref{fig:diagram}, $D$ is a collider, but $Y$ is not, thanks to a lack of edge between $D$ and $Y$ under the null hypothesis of no effect. 
It is well known that conditioning on a collider like $D$ induces correlation between two marginally independent variables, in this case $Z_{1}$ and $Z_{2}$;  see~\citet{cole2009illustrating} for one explanation.
But, so long as one instrument is valid so that there is no edge between one of the $Z$'s and $Y$, $Y$ remains a non-collider and $Z_{1}$ and $Z_{2}$ must be conditionally independent on $Y$, $Z_{1} \bigCI Z_{2} | Y$ under $H_0: \beta^* = 0$. Critically, the conditional independence does not require us knowing which instrument is invalid or valid a priori. For example, in Figure~\ref{fig:diagram} (a) where both instruments are valid or in Figures ~\ref{fig:diagram} (b) and (c) where one of the two instruments is invalid, $Y$ is still not a collider and the conditional independence of $Z_{1} \bigCI Z_{2} | Y$ remains true if $H_0: \beta^* = 0$.

The intuition above generalizes to more than two instruments. Formally, let $\{ (\mathbf{Z}, D, Y) : \mathbf{Z}^{T} = (Z_{1}, \ldots, Z_{L}) \in \mathbb{R}^{L} \}$ be a set of random variables containing $L$ instruments, the exposure, and the outcome. Let $\boldsymbol{\Sigma} = (\sigma_{jk}) \in \mathbb{R}^{(L+1) \times (L+1)}$ be the covariance matrix of the instrument-outcome pair $(\mathbf{Z}^{T}, Y)$. Similar to the case with two instruments, if the $L$ instruments $Z_{1}, Z_{2}, \ldots, Z_{L}$ are independent with each other, conditioning on $Y$ does not induce a collider bias between a valid instrument $Z_{j}$ and any other $L-1$ candidate instruments under the null causal effect $H_0: \beta^* = 0$, regardless of whether the $L-1$ candidate instruments are valid or not. Additionally, by using theory of conditional independence in graphs, for each valid instrument $j$, the following equivalences can be formally stated; see \citet{drton2006algebraic} for one example.
\begin{eqnarray}
\label{eq:equivalence}
& Z_{j} \bigCI Z_{k} | Y; k=1,2,\ldots, L,~k \neq j \nonumber \\
\Longleftrightarrow  &  \{Z_{j} \bigCI Z_{k}~\text{and}~Z_{j} \bigCI Y;~k = 1,2,\ldots,L,~k \neq j \} \text{ or } \{ Z_{k}  \bigCI Y; k=1,2,\ldots, L,k\neq j \} 
\end{eqnarray}
Define $H_{0j}$ to be the first condition in~\eqref{eq:equivalence} that involves valid instrument $j$, $H_{0j} = \{Z_{j} \bigCI Z_{k}~\text{and}~Z_{j} \bigCI Y;~k = 1,2,\ldots,L,~k \neq j \}$.  Then, we have
\begin{eqnarray*}
\text{For each valid $Z_j$: } & H_{0j} : Z_{j} \bigCI Z_{k}~\text{and}~Z_{j} \bigCI Y;~k = 1,2,\ldots,L,~k \neq j   \\
\Longleftrightarrow  &  \sigma_{jk} = 0;~ k=1,2,\ldots, L+1,~k \neq j  \\ 
\Longleftrightarrow & ~\sigma_{j,L+1} = 0 
\end{eqnarray*}
The theorem below translates the null hypothesis of no treatment effect into the collection of (conditional) independence tests denoted by $H_{0j}$'s across all $j=1,\ldots,L$ instruments. In particular, it shows that if at least one instrument among $L$ is valid, we only need to test the product of $\sigma_{j,L+1}$'s being zero.
\begin{theorem} \label{thm:hyp_imp} Suppose we have at least one valid instrument among $L$ candidate instruments and all the instruments are independent of each other. Then, the null of no treatment effect $H_0: \beta^* = 0$ is equivalent to the null of
\begin{equation} \label{eq:hyp_joint}
H_0: \prod_{j=1,2,\ldots,L} \sigma_{j,L+1} = 0.
\end{equation}
\end{theorem}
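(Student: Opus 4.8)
The plan is to reduce the statement to an elementary computation of the instrument--outcome covariances $\sigma_{j,L+1}=\mathrm{Cov}(Z_j,Y)$ and then read off both implications. First I would attach to the outcome model \eqref{eq:model2} a linear reduced form for the exposure, $D_i=\mathbf{Z}_{i\cdot}^{\T}\boldsymbol{\gamma}^{*}+\xi_i$ with $E(\xi_i\mid\mathbf{Z}_{i\cdot})=0$, where $\gamma_j^{*}$ is the instrument--exposure association depicted in Figure~\ref{fig:diagram}; since the unmeasured confounder $U$ is independent of the exogenous instruments, this holds with $\mathrm{Cov}(Z_j,\xi)=0$ even though $\xi$ and $\epsilon$ may be correlated. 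Substituting into \eqref{eq:model2} and using the two standing hypotheses---mutual independence of the instruments, so $\mathrm{Cov}(Z_j,Z_k)=0$ for $k\neq j$, and $E(\epsilon\mid\mathbf{Z})=0$, so $\mathrm{Cov}(Z_j,\epsilon)=0$---every cross term drops and I obtain the key identity
\begin{equation*}
\sigma_{j,L+1}=\mathrm{Cov}(Z_j,Y)=\mathrm{Var}(Z_j)\,\bigl(\pi_j^{*}+\beta^{*}\gamma_j^{*}\bigr),\qquad j=1,\ldots,L .
\end{equation*}

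The forward implication is then immediate, and it is the part that guarantees a size-controlling test. By hypothesis there is at least one valid instrument $j_0$, i.e. $\pi_{j_0}^{*}=0$. If $H_0:\beta^{*}=0$ holds, the identity gives $\sigma_{j_0,L+1}=\mathrm{Var}(Z_{j_0})(0+0)=0$, so the factor indexed by $j_0$ vanishes and hence $\prod_{j=1}^{L}\sigma_{j,L+1}=0$. Crucially this argument never uses knowledge of \emph{which} instrument is the valid one, matching the two-instrument intuition of Figure~\ref{fig:diagram}: under the null $Y$ is not a collider on the valid instrument's path, forcing $Z_{j_0}\bigCI Z_k\mid Y$ for every $k\neq j_0$, which by the equivalence \eqref{eq:equivalence} is exactly $\sigma_{j_0,L+1}=0$.

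For the converse I would again use the identity: $\prod_{j}\sigma_{j,L+1}=0$ forces $\sigma_{j_1,L+1}=0$ for some $j_1$, and since $\mathrm{Var}(Z_{j_1})>0$ this means $\pi_{j_1}^{*}+\beta^{*}\gamma_{j_1}^{*}=0$. If $j_1$ is valid then $\pi_{j_1}^{*}=0$ and, invoking relevance of a valid instrument ($\gamma_{j_1}^{*}\neq0$, assumption (A1)), this collapses to $\beta^{*}\gamma_{j_1}^{*}=0$ and hence $\beta^{*}=0$, as desired. The main obstacle is precisely that the vanishing factor need not come from a valid instrument: an \emph{invalid} $Z_{j_1}$ can satisfy the knife-edge cancellation $\pi_{j_1}^{*}=-\beta^{*}\gamma_{j_1}^{*}$ with $\beta^{*}\neq0$, so that $\sigma_{j_1,L+1}=0$ while the causal effect is nonzero. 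I would close this gap by excluding such exact cancellations---either through the genericity condition $\pi_{j}^{*}+\beta^{*}\gamma_j^{*}\neq0$ for every invalid $j$ (equivalently, restricting to parameter values outside the measure-zero cancellation hyperplane), or by phrasing the converse through the valid instruments, for which relevance makes $\sigma_{j,L+1}=0\Leftrightarrow\beta^{*}=0$ an exact equivalence. With that caveat the two displays combine to give $\beta^{*}=0\Leftrightarrow\prod_{j=1}^{L}\sigma_{j,L+1}=0$, while the forward direction alone already secures correct size for the collider bias test.
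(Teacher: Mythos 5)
Your proposal is correct in substance but takes a genuinely different route from the paper's proof, and the comparison is instructive. The paper never writes down your key identity $\sigma_{j,L+1}=\mathrm{Var}(Z_j)\bigl(\pi_j^{*}+\beta^{*}\gamma_j^{*}\bigr)$; it argues graphically. For the forward direction it observes that when $\beta^{*}=0$ a valid instrument $Z_j$ satisfies $H_{0j}$ (marginal independence from the other instruments and from $Y$), hence $\sigma_{j,L+1}=0$; for the converse it argues the contrapositive: if $\beta^{*}\neq 0$ then $Y$ is a common effect of all the instruments, conditioning on it induces dependence for every $j$, so by the equivalence \eqref{eq:equivalence} both disjuncts fail and $\sigma_{j,L+1}\neq 0$ for all $j$. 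Your algebraic route buys two things. First, since the target null \eqref{eq:hyp_joint} is a statement about covariances, computing $\mathrm{Cov}(Z_j,Y)$ directly is cleaner than routing through independence statements (the paper's claim that $Z_j \bigCI Y$ under the null actually needs slightly more than $E(\epsilon_i\mid \mathbf{Z}_{i\cdot})=0$, whereas zero covariance needs nothing more). Second, it makes the parameter dependence explicit and pinpoints exactly where the converse can break. What the paper's route nominally buys is freedom from the linear reduced form for $D$ (the paper stresses that the extra structure in \eqref{eq:powerSetup} is not needed for Theorem~\ref{thm:hyp_imp}), but this is not a real cost of your approach: defining $\gamma_j^{*}$ as the population projection coefficient $\mathrm{Cov}(Z_j,D)/\mathrm{Var}(Z_j)$ makes your reduced form hold by construction, with $\mathrm{Cov}(Z_j,\xi)=0$ automatic.

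The obstacle you flagged in the converse is not a defect of your proof relative to the paper's --- it is a genuine gap in the paper's own argument, which your explicit identity exposes. The paper's step ``since $Y$ is a collider, the conditional independence fails for all $j$'' is a faithfulness assertion. Under exact cancellation, say $L=2$ with $\pi_1^{*}=0$, $\gamma_1^{*}\neq 0$, and $\pi_2^{*}=-\beta^{*}\gamma_2^{*}\neq 0$, one gets $\sigma_{2,L+1}=0$, so the product in \eqref{eq:hyp_joint} vanishes even though $\beta^{*}\neq 0$; moreover in the Gaussian model $Z_1 \bigCI Z_2 \mid Y$ still holds there (the partial covariance is $\sigma_{12}-\sigma_{1,L+1}\sigma_{2,L+1}/\sigma_{L+1,L+1}=0$), so the collider claim itself fails at that parameter value. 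Thus the stated equivalence really does require the no-cancellation (genericity/faithfulness) condition plus relevance of the valid instruments --- exactly the caveats you propose --- while the forward direction, the one that guarantees size control of the collider bias test, holds unconditionally, as both you and the paper establish.
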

The proof of Theorem \ref{thm:hyp_imp} is in the Appendix. 

While the results above formally rely on independence between instruments,  it is possible to dependent instruments at the expense of having a more complex null hypothesis in \eqref{eq:hyp_joint} that varies depending on the exact nature of dependencies between instruments. Also, from an Mendelian randomization standpoint, we can enforce instruments to be independent of each other by choosing SNPs that are far apart from each other in genetic distance. Finally, our result concerning the collider bias with invalid instruments differs from a recent work by \citet{marden2018implementation} who also proposed to use collider bias, but to test the presence of selection bias using a single instrument. 

The next section discusses a test statistic to test the null hypothesis~\eqref{eq:hyp_joint}. 



\subsection{A Likelihood Ratio Test for Collider Bias}
\label{sec:LRT}
There are a myriad of statistical tests for the null in \eqref{eq:hyp_joint} concerning covariances. In this paper, we adapt the test statistic proposed by \citet{drton2006algebraic} and \citet{drton2009likelihood} which is based on a likelihood ratio test for Gaussian graphical models that allow for some singularity constraints. Specifically, consider the following model
\begin{eqnarray} \label{eq:powerSetup}
	\mathbf{Z}_{i \cdot} & \sim & N(\mathbf{0}, \Sigma_{Z} ), \quad{} \Sigma_{Z} = {\rm diag}( \upsilon^{2}_{1} ,\upsilon^{2}_{2}, \cdots, \upsilon^{2}_{L}) \nonumber \\
	D_i &= & \mathbf{Z}_{i\cdot}^T \boldsymbol{\gamma}^* + \xi_i \\
	Y_i &= & \mathbf{Z}_{i \cdot}^T \boldsymbol{\pi}^* + D_i \beta^* + \epsilon_i, \quad{} E(\epsilon_i, \xi_i | \mathbf{Z}_{i\cdot}) =0 \nonumber \\
	\begin{pmatrix} \epsilon_i \\ \xi_i \end{pmatrix} &\sim & N\left[\begin{pmatrix} 0 \\ 0 \end{pmatrix},\begin{pmatrix} \sigma_2^2  & \rho \sigma_1 \sigma_2 \\ \rho \sigma_1 \sigma_2  & \sigma_1^2 \end{pmatrix}\right] \nonumber
\end{eqnarray}
The setup in \eqref{eq:powerSetup} is a special case of model \eqref{eq:model2} with the additional assumptions that (i) $D_i$ is linearly associated to $\mathbf{Z}_{i\cdot}$, (ii) the error terms are bivariate i.i.d. Normal with an unknown covariance matrix, and (iii) $\Sigma_{Z}$ is diagonal. These additional assumptions are used to derive the asymptotic null distribution of the proposed test in \eqref{eq:CT}, which we call the collider bias test; they are not needed to establish the relationships between the null hypotheses in Theorem \ref{thm:hyp_imp}. Also, in the supplementary materials, we present empirical results when the Normality assumption is relaxed. In particular, we assess the performance of the collider bias test when the instruments and/or the outcome are binary and show that the test's size and power are largely insensitive to violations of the distribution assumptions. 


Let $\mathbf{S}^{(L)} = (s_{jk}) \in \mathbb{R}^{(L+1) \times (L+1)}$ be the sample covariance of $(\mathbf{Z}_{n}, \mathbf{Y}_{n})$. We propose to test \eqref{eq:hyp_joint} by computing the smallest determinant of sub-matrices of the estimated covariance matrix $\mathbf{S}^{(L)}$, i.e.
\begin{eqnarray}
\label{eq:CT}
\lambda_{n} & = & \min\limits_{j=1,2,\ldots, L} \left( n \log \left( \frac{s_{jj} \text{det} (\mathbf{S}^{(L)}_{-j, -j}) }{ \text{det}(\mathbf{S}^{(L)}) } \right) \right)
\end{eqnarray}
We call $\lambda_n$ the collider bias test and Theorem \ref{thm:CT} shows the limiting null distribution of $\lambda_n$.
\begin{theorem}
	\label{thm:CT} Let $\mathbf{W} = (W_{jk})$ be a $L \times L$ symmetric matrix where each entry is an independent $\chi^{2}_{1}$ random variable. Let $V^{*} \subset \{1,2,\ldots, L\}$ bet a set of valid instruments among $L$ candidate instruments and $v^{*} = c(V^{*})$. Under model \eqref{eq:powerSetup} and the null hypothesis of no effect, $H_0: \beta^{*} = 0$, or equivalently under \eqref{eq:hyp_joint}, the collider bias test $\lambda_{n}$ in \eqref{eq:CT} converges to the minimum of  $\chi^{2}_{L}$-distributed random variables, which we denote as $\underline{\chi}^{2}_{L, v^{*}}$
	\begin{equation}
	\lambda_{n} \overset{n \rightarrow \infty}{\longrightarrow} \min\limits_{j \in V^{*}} \left( \sum\limits_{k = 1}^{L} W_{jk} \right)
	:= \underline{\chi}^{2}_{L,v^{*}}.
	\end{equation}
\end{theorem}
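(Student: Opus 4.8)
The plan is to recognize each summand inside the minimum in \eqref{eq:CT} as a Gaussian likelihood ratio statistic, to show that the summands indexed by invalid instruments diverge while those indexed by valid instruments converge jointly, and then to read off the symmetric-matrix structure of the limit. Write $T_j = n\log\{s_{jj}\det(\mathbf{S}^{(L)}_{-j,-j})/\det(\mathbf{S}^{(L)})\}$, so $\lambda_n=\min_j T_j$. First I would record the Schur-complement identity $\det(\mathbf{S}^{(L)})/\det(\mathbf{S}^{(L)}_{-j,-j}) = s_{jj}-\mathbf{S}^{(L)}_{j,-j}(\mathbf{S}^{(L)}_{-j,-j})^{-1}\mathbf{S}^{(L)}_{-j,j}$, which gives $T_j=-n\log(1-R_j^2)$, where $R_j^2$ is the sample squared multiple correlation of $Z_j$ with the remaining $L$ variables $(\mathbf{Z}_{-j},Y)$. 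A short computation with the Gaussian likelihood shows that $T_j$ is exactly the likelihood ratio statistic for the marginal independence hypothesis $H_{0j}: Z_j$ independent of $(\mathbf{Z}_{-j},Y)$, i.e.\ $\sigma_{jk}=0$ for all $k\neq j$; under \eqref{eq:powerSetup} the instruments are already mutually independent, so this reduces to testing $\sigma_{j,L+1}=0$, matching the equivalences preceding Theorem \ref{thm:hyp_imp}.

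Next I would dispose of the invalid instruments. For $j\notin V^*$ we have $\pi_j^*\neq 0$, hence $\sigma_{j,L+1}\neq 0$, so the population multiple correlation is bounded away from zero and $T_j=-n\log(1-R_j^2)\to+\infty$ in probability. Since the valid-instrument statistics are shown below to be $O_p(1)$, with probability tending to one the minimum in \eqref{eq:CT} is attained on $V^*$. It therefore suffices to obtain the joint limit of $(T_j)_{j\in V^*}$ and apply the continuous mapping theorem to the map $(t_j)\mapsto\min_j t_j$.

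The crux is the joint analysis for valid $j$, where I would exploit an exact algebraic invariance. Let $\epsilon_i=Y_i-\sum_k\pi_k^* Z_{ik}$ be the structural residual; by \eqref{eq:powerSetup} the coordinates $(Z_1,\ldots,Z_L,\epsilon)$ are mutually independent Gaussians with diagonal covariance. For valid $j$ we have $\pi_j^*=0$, so $\epsilon=Y-\sum_{k\neq j}\pi_k^* Z_k$ lies in the linear span of $(\mathbf{Z}_{-j},Y)$; because the determinants $\det(\mathbf{S}^{(L)})$ and $\det(\mathbf{S}^{(L)}_{-j,-j})$ and the quantity $s_{jj}$ are invariant under the unimodular change of variables $Y\mapsto\epsilon$, the statistic $T_j$ is unchanged when $Y$ is replaced by $\epsilon$. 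Thus every $T_j$ with $j\in V^*$ is an independence statistic computed from mutually independent coordinates, whose sample cross-covariances are asymptotically diagonal; a standard delta-method expansion then shows that $T_j$ has the same limit as $\sum_{k\neq j}(\sqrt n\,\hat\rho_{Z_jZ_k})^2+(\sqrt n\,\hat\rho_{Z_j\epsilon})^2$, a sum of $L$ terms. Setting $W_{jk}=(\sqrt n\,\hat\rho_{Z_jZ_k})^2$ for $k\neq j$ and $W_{jj}=(\sqrt n\,\hat\rho_{Z_j\epsilon})^2$ gives $T_j\to\sum_{k=1}^L W_{jk}$, and $W_{jk}=W_{kj}$ holds automatically because $\hat\rho_{Z_jZ_k}=\hat\rho_{Z_kZ_j}$ is literally the same statistic shared between $T_j$ and $T_k$; this is the source of the symmetry of $\mathbf{W}$. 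Invoking the asymptotic normality of sample covariances of jointly Gaussian vectors, together with the fact that for a diagonal population the limiting covariance of $\hat\sigma_{ab}$ and $\hat\sigma_{cd}$ vanishes unless $\{a,b\}=\{c,d\}$, the normalized correlations over distinct index pairs are asymptotically independent $N(0,1)$, so the $W_{jk}$ with $j\le k$ are asymptotically independent $\chi^2_1$. Combining this joint convergence with the invalid-instrument divergence and the continuous mapping theorem yields $\lambda_n\to\min_{j\in V^*}\sum_{k=1}^L W_{jk}=\underline{\chi}^{2}_{L,v^{*}}$.

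I expect the main obstacle to be precisely this joint step: showing that the limit is the minimum over the rows of a single symmetric matrix of i.i.d.\ $\chi^2_1$ entries, rather than a minimum of independent $\chi^2_L$ variables. The substitution $Y\mapsto\epsilon$ is what makes the shared off-diagonal correlations—and hence both the symmetry of $\mathbf{W}$ and the dependence among the $T_j$—transparent, so the two points demanding care are verifying that this substitution leaves $T_j$ exactly invariant for valid $j$ (where $\pi_j^*=0$ is essential, since otherwise $\epsilon$ is not recoverable from $(\mathbf{Z}_{-j},Y)$) and confirming that distinct sample correlations among independent Gaussians are jointly asymptotically independent. The divergence of the invalid-instrument terms and the passage to the minimum are comparatively routine.
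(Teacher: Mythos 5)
Your proof is correct, but it takes a genuinely different route from the paper's. The paper proceeds by induction on the number of instruments: the base case $L=2$ is imported from Proposition 4.2 of \citet{drton2006algebraic}, the inductive step factors $\lambda_{n,j}$ into the $(L-1)$-instrument statistic plus a Schur-complement ratio, and a separate lemma (their Lemma A.1) applies a delta-method expansion with the Isserlis matrix to show that this extra ratio contributes one new independent $\chi^2_1$ entry $W_{jL}$; the passage to the minimum over $V^*$ is then handled by citing Proposition 4.3 of \citet{drton2008lectures}, with invalid instruments asserted not to contribute. You avoid the induction entirely: after writing $T_j=-n\log(1-R_j^2)$ via the Schur complement, your key move is the exact unimodular invariance $Y\mapsto\epsilon$, valid precisely because $\pi_j^*=0$ for $j\in V^*$, which reduces every valid-instrument statistic simultaneously to an independence statistic computed from fully independent Gaussian coordinates; a single delta-method expansion then yields the whole row sum $\sum_{k}W_{jk}$ at once, with the symmetry $W_{jk}=W_{kj}$ visible as the literal sharing of the statistic $n\hat\rho^2_{Z_jZ_k}$ between rows $j$ and $k$. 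You also treat the invalid instruments more explicitly than the paper does, showing $T_j\to\infty$ in probability from the nonvanishing population multiple correlation, so that the minimum localizes on $V^*$ and the continuous mapping theorem finishes the argument. What each approach buys: the paper's induction stays close to the algebraic-statistics machinery it adapts and builds the limit entry by entry, but it leans on two external propositions and leaves the joint convergence of $(T_j)_{j\in V^*}$ somewhat implicit; your argument is self-contained, makes the cross-row dependence and the symmetric-matrix structure of $\mathbf{W}$ transparent, and isolates exactly where instrument validity enters (the recoverability of $\epsilon$ from $(\mathbf{Z}_{-j},Y)$). The two proofs rest on the same analytic core—Gaussian likelihood-ratio asymptotics and the diagonal Isserlis covariance of sample covariances—so the limiting objects coincide, as they must.
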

For any size $\alpha \in (0,1)$, we can use the asymptotic null distribution of $\lambda_n$ in Theorem \ref{thm:CT} to obtain a critical value $\underline{\chi}^{2}_{L, v^{*}, 1-\alpha}$, which is the $1-\alpha$ quantile of the $\underline{\chi}^{2}_{L, v^{*}}$ distribution. We would reject the null of no effect if the observed $\lambda_n$ exceeds the critical value. Theorem~\ref{thm:CT} also shows that the asymptotic null distribution of the collider bias test $\lambda_{n}$ does not depend on the exact set of valid instruments $V^{*}$; it only depends on the number of valid instruments $v^*$.
Finally, for a fixed $\alpha$, as the number of valid instruments $v^{*}$ increases, the critical value becomes smaller. In other words, by allowing a greater number of invalid instruments into our test statistic, we push the critical value farther away from zero and make it harder to reject the null hypothesis of no effect.

In comparison to the union method in Section \ref{sec:genprocedure}, a disadvantage of the collider bias test is that it does not directly produce confidence intervals for $\beta^{*}$; it only produces statistical evidence in the form of a p-value for or against the null hypothesis of no effect. But, the collider bias test does not depend on a pre-specified $\bar{s}$ like the method in Section \ref{sec:genprocedure} and consequently, is computationally efficient. 
Also, both the method in Section \ref{sec:genprocedure} and the collider bias test $\lambda_{n}$ can handle a very small proportion of valid instruments and maintain the correct size; they only require one valid instrument while other methods in the literature require more valid instruments.

\subsection{Combining Method 1 and Method 2} \label{sec:comb_test}
Given the advantages and disadvantages of each method, we propose a combined procedure to test the hypothesis of no causal effect in the presence of invalid instruments. The combined testing procedure, which is described formally in Theorem \ref{prop:combine}, essentially splits the Type-I error between the two methods introduced in Sections \ref{sec:CI} and \ref{sec:conditional_test} and rejects the null hypothesis of no effect if either test rejects it.
\begin{theorem} \label{prop:combine} 
    For any $\alpha \in (0,1)$, pick $\alpha_1, \alpha_2 \in (0,1)$ so that $\alpha = \alpha_1 + \alpha_2$. Consider a combined testing procedure where we reject the null hypothesis of no causal effect if $C_{1-\alpha_1}(\mathbf{Y}_{n}, \mathbf{D}_{n},\mathbf{Z}_{n})$ contains $0$ or if the collider bias test rejects the null with $\alpha_2$. Then, the Type-I error of this combined test is less than or equal to $\alpha$. 
\end{theorem}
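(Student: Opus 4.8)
The plan is to prove the bound by a straightforward Bonferroni (union bound) argument that exploits the duality between the confidence interval $C_{1-\alpha_1}(\mathbf{Y}_{n}, \mathbf{D}_{n}, \mathbf{Z}_{n})$ and a test of $H_0: \beta^* = 0$. First I would fix the null $H_0: \beta^* = 0$ and name the two rejection events: let $R_1$ be the event that $0 \notin C_{1-\alpha_1}(\mathbf{Y}_{n}, \mathbf{D}_{n}, \mathbf{Z}_{n})$, and let $R_2$ be the event that the collider bias test rejects at level $\alpha_2$. The combined procedure rejects precisely on $R_1 \cup R_2$, so its Type-I error under $H_0$ equals $P(R_1 \cup R_2)$, which I would bound by $P(R_1) + P(R_2)$.

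The first key step is to convert the coverage guarantee of Theorem \ref{theorem1} into a bound on $P(R_1)$. Since under $H_0$ the hypothesized value $0$ coincides with the true causal parameter $\beta^*$, the event $R_1 = \{0 \notin C_{1-\alpha_1}\}$ is identical to $\{\beta^* \notin C_{1-\alpha_1}\}$. Theorem \ref{theorem1} guarantees $P(\beta^* \in C_{1-\alpha_1}) \geq 1 - \alpha_1$, so that $P(R_1) \leq \alpha_1$. The second step is to bound $P(R_2)$: by construction the collider bias test is calibrated to have (asymptotic) size $\alpha_2$ under $H_0: \beta^* = 0$, which by Theorem \ref{thm:CT} follows from comparing $\lambda_n$ to the appropriate quantile of $\underline{\chi}^2_{L, v^*}$; hence $P(R_2) \leq \alpha_2$. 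Combining via the union bound yields $P(R_1 \cup R_2) \leq \alpha_1 + \alpha_2 = \alpha$, which is the claim.

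The argument carries no serious obstacle, but I would be careful about two points. First, $R_1$ and $R_2$ need not be independent, and indeed they are computed from the same data; this is harmless because Bonferroni requires no independence, only that the two marginal probabilities are each controlled. Second, the size of the collider bias test in Theorem \ref{thm:CT} depends on the unknown number of valid instruments $v^*$ through its critical value, and is asymptotic rather than exact. To make $P(R_2) \leq \alpha_2$ hold uniformly over the unknown $v^* \geq 1$, I would use the most conservative critical value, namely the one for $v^* = 1$ (the largest quantile), since $\underline{\chi}^2_{L, v^*}$ is stochastically decreasing in $v^*$; with that choice the size bound holds for every admissible $v^*$, and the stated Type-I error guarantee follows in the limit.
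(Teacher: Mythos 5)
Your proof is correct and is essentially the paper's own argument: the paper gives no separate appendix proof of this theorem, describing the combined procedure in the text precisely as a split of the Type-I error between the two tests, i.e.\ the Bonferroni bound $P(R_1 \cup R_2) \leq P(R_1) + P(R_2) \leq \alpha_1 + \alpha_2$ that you formalize via the coverage guarantee of Theorem \ref{theorem1} and the asymptotic calibration of Theorem \ref{thm:CT}. Your two side remarks are also apt: rejection must indeed be read as $0 \notin C_{1-\alpha_1}(\mathbf{Y}_{n}, \mathbf{D}_{n},\mathbf{Z}_{n})$ (the theorem statement's ``contains $0$'' is a typo), and taking the critical value $\underline{\chi}^{2}_{L, v, 1-\alpha_2}$ at the smallest admissible number of valid instruments (the paper uses $v = L - \bar{s} + 1$; your $v^*=1$ is the fully agnostic case) is exactly how the unknown $v^*$ is handled, since $\underline{\chi}^{2}_{L,v^*}$ is stochastically decreasing in $v^*$.
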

We remark that the combined test reduces to the test based on $C_{1-\alpha}(\mathbf{Y}_{n}, \mathbf{D}_{n},\mathbf{Z}_{n})$ if $\alpha_1 = \alpha$ and $\alpha_2 = 0$. Similarly, the combined test reduces to the collider bias test if $\alpha_1 = 0$ and $\alpha_2 = \alpha$. If $\alpha_{1} = \alpha/2$ and $\alpha_{2} = \alpha/2$, we are using both procedures to test the null of no effect, but each test is conducted at more stringent Type-I error rates than if they are used alone. While this may seem statistically costly, numerical studies below demonstrate that the cost is minimal in comparison to the gain in power across different values of the alternative.


\section{Simulation Study With Invalid Instruments}
\label{sec:simGeneral}

We conduct a simulation study to evaluate the performance of our two methods when invalid instruments are present. The simulation setup follows equation \eqref{eq:powerSetup} with $n=1000$ individuals, $L = 10$ candidate instruments, and each instrument is independent each other. For each simulation setting, we generate $1000$ independent replicates.  
We test the null causal effect $H_{0} : \beta^{*} = 0$ and vary $\beta^{*}$. We change $\bm{\pi}^*$'s support from $0$ to $1$ and vary the number of invalid instruments (i.e. number of $1$s in $\bm{\pi}^*$) by changing the number of non-zero $\bm{\pi}^{*}$'s. We set $\sigma_1 = \sigma_2 = 2$ and $\rho = 0.8$. We consider two different values for $\boldsymbol{\gamma}^*$ that correspond to concentration parameters $100$ and $5$. The concentration parameter is the expected value of the F statistic for the coefficients $\mathbf{Z}_{V^{*}}$ in the regression of $D$ and $\mathbf{Z}$ and is a measure of instrument strength \citep{stock_survey_2002}. Here, a concentration parameter of $100$ represents strong instruments and a concentration parameter of $5$ represents weak instruments. 

\subsection{Choice of Test Statistics for Method 1}
In the first simulations study, we compare different test statistics that can be used in \eqref{eq:confIntProposal} and \eqref{eq:confIntProposalScreen}. We also include ``naive'' and ``oracle'' methods as two baseline procedures where ``naive'' methods assume all candidate instruments are valid, typical in practice, and ``oracle'' methods assume one knows exactly which instruments are valid and invalid, i.e. $V^*$ is known, and use \eqref{eq:confIntInversion}. Note that the oracle methods are not practical because an investigator rarely has complete knowledge about which instruments are invalid versus valid. We use five different types of test statistics in the union method and examining the length and coverage of confidence intervals. 

Table~\ref{tab:strong20_U5} shows the case where we set $\bar{s} = 5$, the instruments are strong, and $s^{*}$ varies from $0$ to $4$; this is the case where at most 50\% of instruments are invalid. When there are no invalid instruments, $s^{*}=0$, the naive and oracle procedures have the desired 95\% coverage. Our methods have higher than 95\% coverage because they do not assume that all $10$ candidate instruments are valid. As the number of invalid instruments, $s^{*}$, increases, the naive methods fail to have any coverage. Our methods, in contrast, have the desired level of coverage, with coverage levels reaching nominal levels when $s^{*}$ is at the boundary of $s^{*} < \bar{s}$, i.e., $s^{*} = 4$; our method does this without knowing which instruments are valid or invalid a priori. The oracle methods always have approximately the desired coverage at every $s^{*}$ since they know which instruments are valid and invalid. 
\begin{table}[H]
	\centering
	\resizebox{0.6\textwidth}{!}{\begin{tabular}{rllllll}
			\hline
			Method & Test Statistic & $s^{*}$=0 & $s^{*}$=1 & $s^{*}$=2 & $s^{*}$=3 & $s^{*}$=4 \\ 
			\hline
			Naive &  TSLS & 94.3 & 0.0 & 0.0 & 0.0 & 0.0 \\ 
			& AR & 93.0 & 0.0 & 0.0 & 0.0 & 0.0 \\ 
			&
			CLR & 94.7 & 0.0 & 0.0 & 0.0 & 0.0 \\ 
			Union &  TSLS & 100.0 & 100.0 & 100.0 & 99.3 & 94.2 \\ 
			& AR & 100.0 & 100.0 & 100.0 & 99.5 & 95.0 \\ 
			& CLR & 100.0 & 100.0 & 99.9 & 99.1 & 94.5 \\ 
			& SarganTSLS & 100.0 & 100.0 & 100.0 & 99.2 & 93.9 \\ 
			& SarganCLR & 100.0 & 100.0 & 100.0 & 99.5 & 94.3 \\ 
			Oracle & TSLS & 94.3 & 94.4 & 93.7 & 94.0 & 94.2 \\ 
			& AR & 93.0 & 94.5 & 93.0 & 94.3 & 95.0 \\
			& CLR & 94.7 & 94.8 & 95.2 & 94.5 & 94.5 \\ 
			\hline
	\end{tabular}}
	\caption{\label{tab:strong20_U5} TSLS, two-stage least squares; AR, Anderson-Rubin test; CLR, conditional likelihood ratio test; Sargan--, Sargan test used as a pretest. Comparison of coverage between 95\% confidence intervals with strong instruments when we set the upper bound to $\bar{s}=5$.}
\end{table}
Table~\ref{tab:strong20_U5_length} examines the median length of the 95\% confidence intervals simulated in Table ~\ref{tab:strong20_U5}. We only compare between our methods and the oracles since they always have at least 95\% coverage. The table shows that our method and the oracles become similar in terms of length as the number of invalid instruments $s^{*}$ grows, with the Anderson-Rubin test and methods with pretesting achieving oracle performance at $s^{*}=4$ while two-stage least squares and the conditional likelihood ratio test, both without pretesting, not reaching oracle performance at $s^{*}=4$. The improved performance using pretesting is expected since pretesting removes unnecessary unions of intervals in~\eqref{eq:confIntProposalScreen}. 
\begin{table}[H]
	\centering
	\resizebox{0.7\textwidth}{!}{\begin{tabular}{rlllllll}
			\hline
			Method & Test Statistic & $s^{*}$=0 & $s^{*}$=1 & $s^{*}$=2 & $s^{*}$=3 & $s^{*}$=4 \\ 
			\hline
			Union & TSLS &  0.238 & 0.500 & 0.912 & 1.390 & 1.878 \\ 
			& AR & 0.337 & 0.318 & 0.290 & 0.254 & 0.202 \\   
			& CLR &  0.243 & 9.694 & 68.175 & 117.631 & 160.678 \\ 
			& SarganTSLS & 0.258 & 0.242 & 0.222 & 0.194 & 0.155 \\ 
			& SarganCLR & 0.264 & 0.247 & 0.227 & 0.198 & 0.157 \\ 
			Oracle & TSLS & 0.105 & 0.111 & 0.117 & 0.126 & 0.136 \\ 
			& AR &  0.168 & 0.176 & 0.181 & 0.190 & 0.202 \\ 
			& CLR & 0.106 & 0.113 & 0.119 & 0.128 & 0.138 \\  
			\hline
	\end{tabular}}
	\caption{\label{tab:strong20_U5_length} TSLS, two-stage least squares; AR, Anderson-Rubin test; CLR, conditional likelihood ratio test. Comparison of median lengths between 95\% confidence intervals with strong instruments and $\bar{s} = 5$.}
\end{table}

In Table~\ref{tab:strong20_U10}, we set $\bar{s} = 10$, the instruments are strong, and $s^{*}$ varies from $0$ to $9$; this is the case where the investigator is very conservative about the number of valid instruments and sets $\bar{s}$ at its maximum value $L$. Note that pretesting methods cannot be applied in this extreme case because Theorem \ref{theorem2} requires $c(B^C) \geq 2$; in this case, $c(B^C) = 1$. Table~\ref{tab:strong20_U10} shows that similar to Table~\ref{tab:strong20_U5}, our method without pretesting and the oracles become similar as the number of invalid instruments $s^{*}$ grows. 
\begin{table}[H]
	\centering
	\resizebox{0.9\textwidth}{!}{\begin{tabular}{rlllllllllll}
			\hline
			Method & Test Statistic & $s^{*}$=0 & $s^{*}$=1 & $s^{*}$=2 & $s^{*}$=3 & $s^{*}$=4 & $s^{*}$=5 & $s^{*}$=6 & $s^{*}$=7 & $s^{*}$=8 & $s^{*}$=9 \\ 
			\hline
			Naive & TSLS & 94.3 & 0.0 & 0.0 & 0.0 & 0.0 & 0.0 & 0.0 & 0.0 & 0.0 & 0.0 \\
			& AR & 93.0 & 0.0 & 0.0 & 0.0 & 0.0 & 0.0 & 0.0 & 0.0 & 0.0 & 0.0 \\ 
			&  CLR & 94.7 & 0.0 & 0.0 & 0.0 & 0.0 & 0.0 & 0.0 & 0.0 & 0.0 & 0.0 \\ 
			Union & TSLS & 100.0 & 100.0 & 100.0 & 100.0 & 100.0 & 100.0 & 100.0 & 99.9 & 100.0 & 95.6 \\ 
			& AR & 100.0 & 100.0 & 100.0 & 100.0 & 100.0 & 100.0 & 100.0 & 100.0 & 99.8 & 95.4 \\ 
			& CLR & 100.0 & 100.0 & 100.0 & 100.0 & 100.0 & 100.0 & 100.0 & 100.0 & 99.8 & 95.4 \\ 
			Oracle & TSLS & 94.3 & 94.4 & 93.7 & 94.0 & 94.2 & 93.9 & 94.1 & 93.6 & 94.5 & 95.6 \\ 
			& AR & 93.0 & 94.5 & 93.0 & 94.3 & 95.0 & 94.3 & 95.1 & 95.3 & 95.3 & 95.4 \\ 
			& CLR & 94.7 & 94.8 & 95.2 & 94.5 & 94.5 & 95.2 & 94.2 & 94.3 & 93.9 & 95.4 \\ 
			\hline
	\end{tabular}}
	\caption{\label{tab:strong20_U10} TSLS, two-stage least squares; AR, Anderson-Rubin test; CLR, conditional likelihood ratio test. Comparison of coverage between 95\% confidence intervals with strong instruments and $\bar{s}=10$.}
\end{table}
The simulation results suggest that when there are strong instruments, the Anderson-Rubin test and the pretesting method with two-stage least squares or conditional ratio test perform well with respect to power and length, with the Anderson-Rubin test being the simpler alternative since it doesn't use a pretest. Between the Anderson-Rubin test and the conditional likelihood ratio test, the Anderson-Rubin test dominates the conditional likelihood ratio test for $s^{*} > 0$. This finding differs from the advice in the weak instruments literature where the conditional likelihood ratio generally dominates the Anderson-Rubin test~\citep{andrews_optimal_2006, mikusheva_robust_2010}.

The supplementary materials present median lengths of our proposed confidence interval when we set $\bar{s}=10$ and the instruments are weak. In brief, in the worst case where the instruments are weak and there are many invalid instruments (i.e. all instrumental variables assumptions are violated), some of the test statistics used in our procedure lead to infinite-length, but valid confidence intervals. The presence of an infinite interval can be disappointing at first, but we believe it is informative in the sense that it alerts the investigator that the observed data is insufficient to draw any meaningful and robust conclusion about the treatment effect $\beta^{*}$.

\subsection{Power Comparison Between Methods}

\label{sec:power}
In this simulation study, we compare the statistical power between the union method, the collider bias test, and the combined method. Similar to the previous section, we consider both strong and weak instruments and vary the true number of invalid instruments $s^{*}$. For the union method, we set the upper bound on $s^*$ to be $\bar{s} = s^{*} + 1$. We use the Anderson-Rubin test for strong and weak instruments and the conditional likelihood ratio test for weak instruments.
\begin{figure}[!ht]
	\centering
	\begin{subfigure}[b]{0.4\textwidth}
		\includegraphics[width=\textwidth]{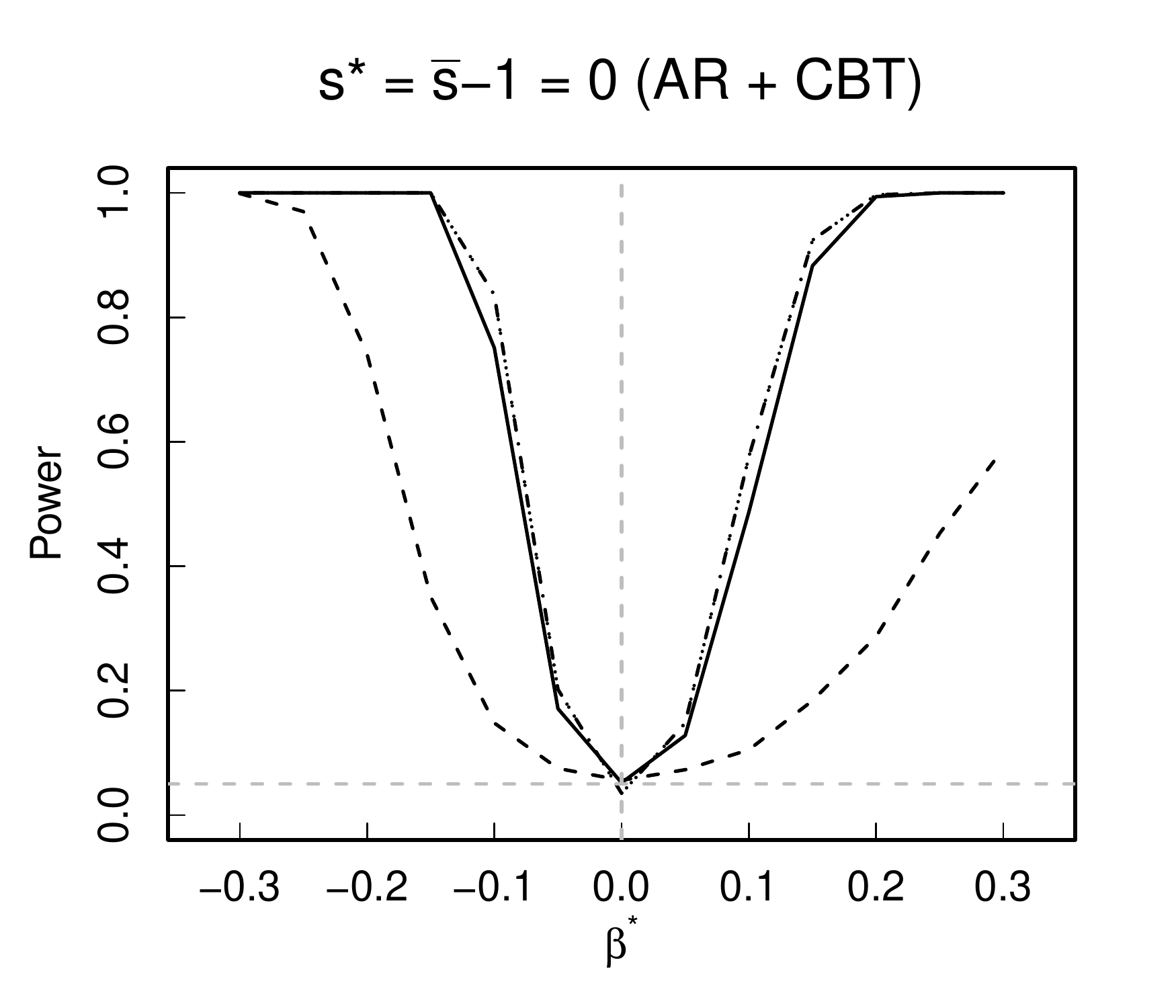}
	\end{subfigure}
	\begin{subfigure}[b]{0.4\textwidth}
		\includegraphics[width=\textwidth]{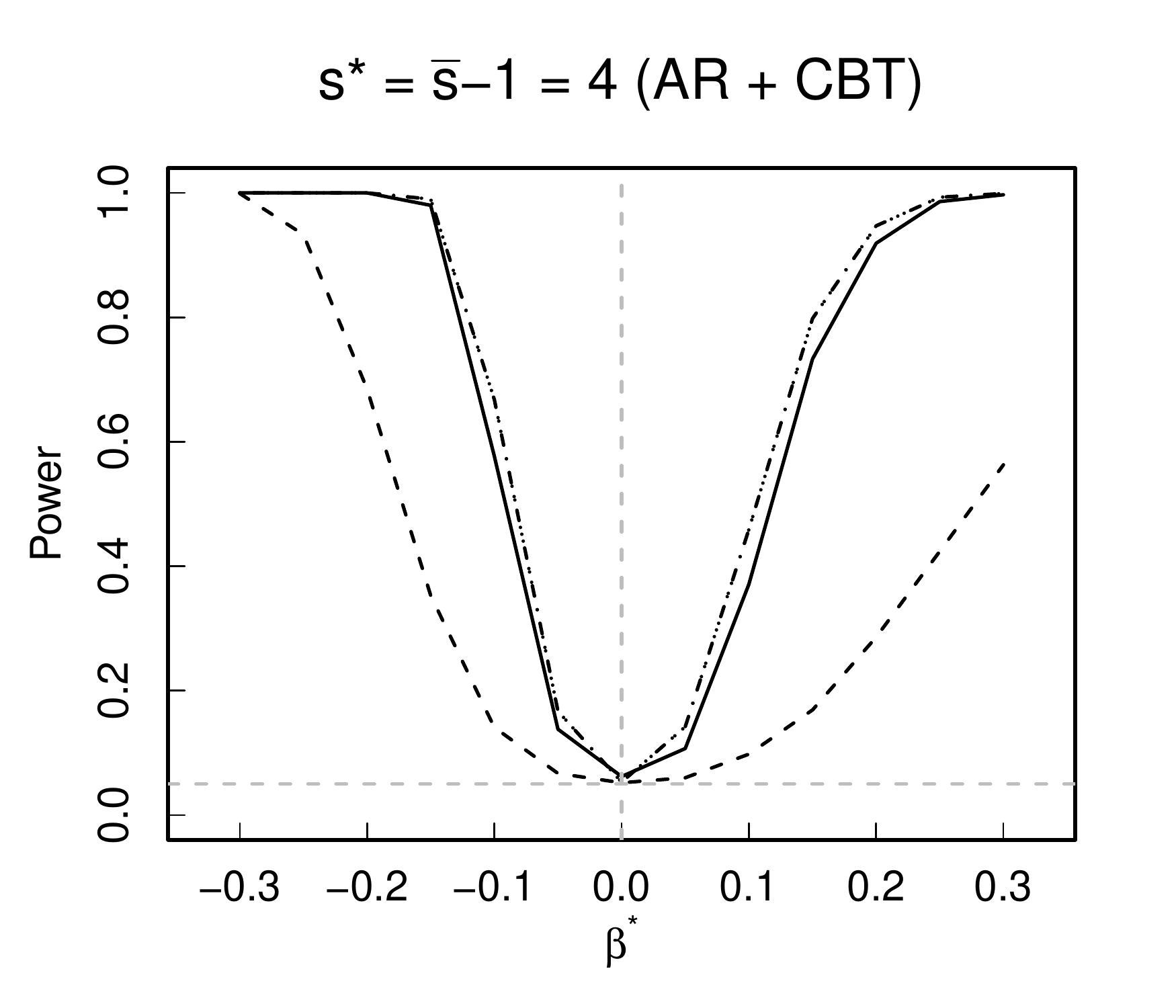}
	\end{subfigure}
	\begin{subfigure}[b]{0.4\textwidth}
		\includegraphics[width=\textwidth]{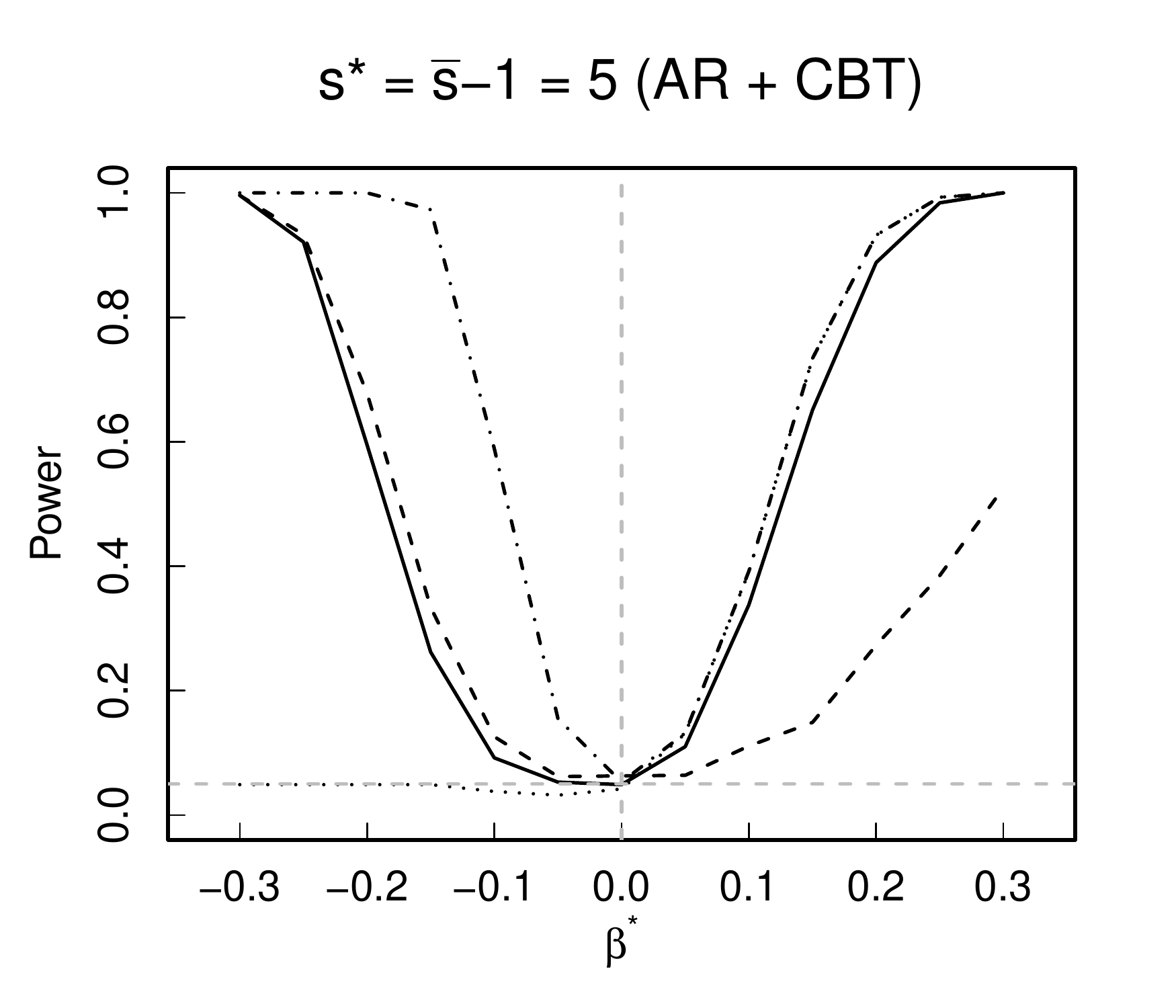}
	\end{subfigure}	
	\begin{subfigure}[b]{0.4\textwidth}
		\includegraphics[width=\textwidth]{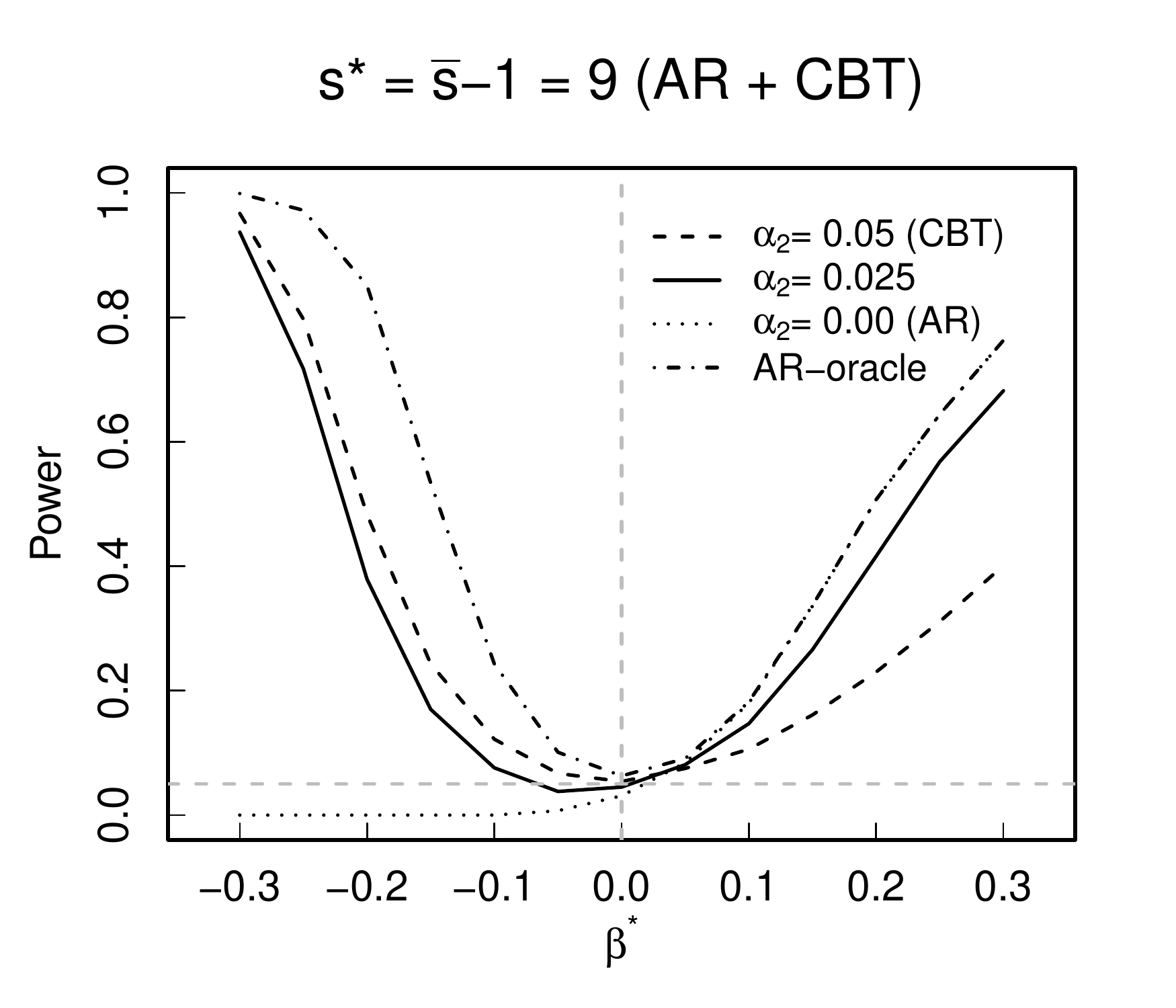}
	\end{subfigure}
	\caption{\label{fig:strong_either} AR: Anderson-Rubin test; CBT: collider bias test. Power of different methods under strong instruments with different numbers of invalid instruments. We fix $\alpha = 0.05$ and vary $\alpha_{2}$ to be $0.0, 0.025$, or $0.05$. When $\alpha_{2} = 0.05$, the combined test is equivalent to the collider bias test. When $\alpha_{2} = 0.0$, the combined test is equivalent to the union method. The oracle AR test is the AR test that knows exactly which instruments are valid.}
\end{figure}

Figure~\ref{fig:strong_either} presents the collider bias test ($\alpha_{2} = 0.05$), the union method using the Anderson-Rubin test $(\alpha_{2} = 0.00)$, the combined test $(\alpha_{1} = \alpha_{2} = 0.025)$, and the oracle Anderson-Rubin test that knows which instruments are valid when the instruments are strong.
When $s^{*}=0$ to $s^{*}=4$, the union method using the Anderson-Rubin test has similar power as the oracle method. But, when $s^{*}$ is greater than or equal to $5$, i.e.~when 50\% or more instruments are invalid, the Anderson-Rubin test only has power if the treatment effect $\beta^{*}$ is positive; it has no power when $\beta^{*}$ is negative. This asymmetric power of the Anderson-Rubin test may arise from the inflection points of the likelihood function \citep{kleibergen2007generalizing}.
The collider bias test has less power than the union method when $s^{*} \leq 4$, but has more power than the union method when $s^{*} \geq 5$ and $\beta^*$ is negative. The combined test achieves the best of both worlds where it has non-trivial power when $s^{*} \geq 5$ and $\beta^* < 0$ and has nearly similar performance as the union method used alone when  $s^{*} \leq 4$. Overall, the combined test shows the best performance among tests that do not assume any knowledge about which instruments are valid.
\begin{figure}[!ht]
	\centering
	\begin{subfigure}[b]{0.48\textwidth}
		\includegraphics[width=\textwidth]{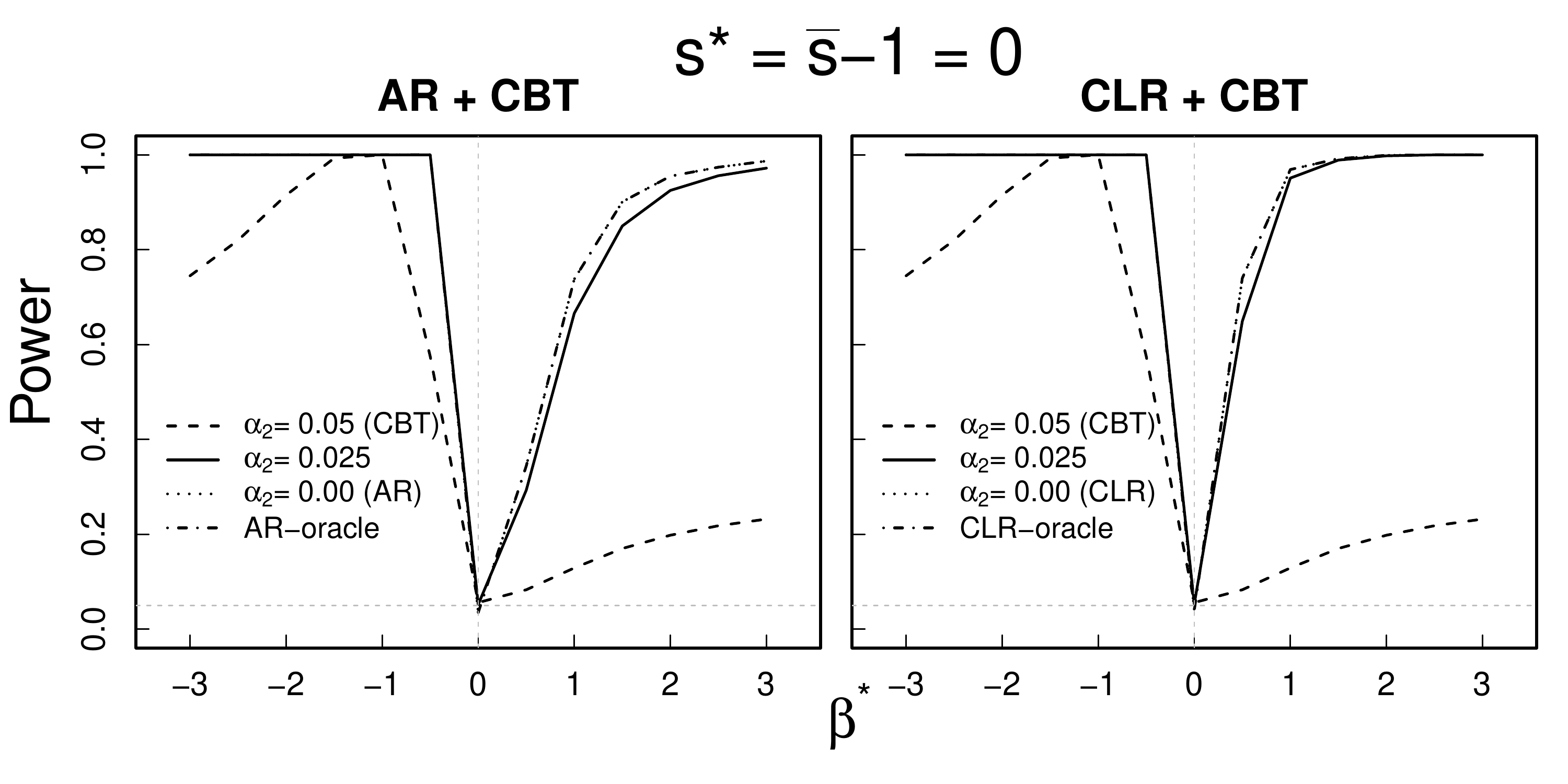}
	\end{subfigure}	{}
	\begin{subfigure}[b]{0.48\textwidth}
		\includegraphics[width=\textwidth]{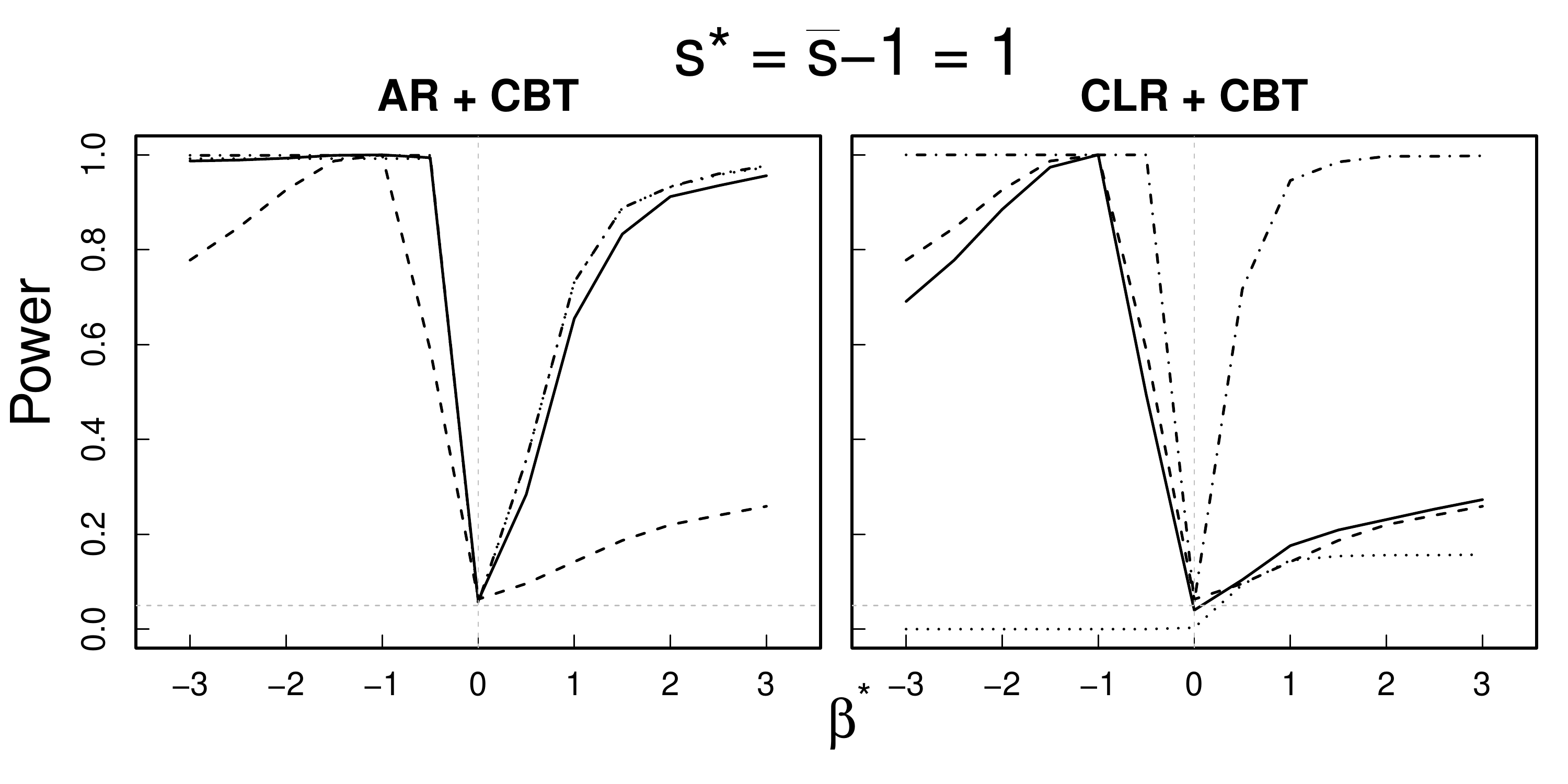}
	\end{subfigure} {}
	\begin{subfigure}[b]{0.48\textwidth}
		\includegraphics[width=\textwidth]{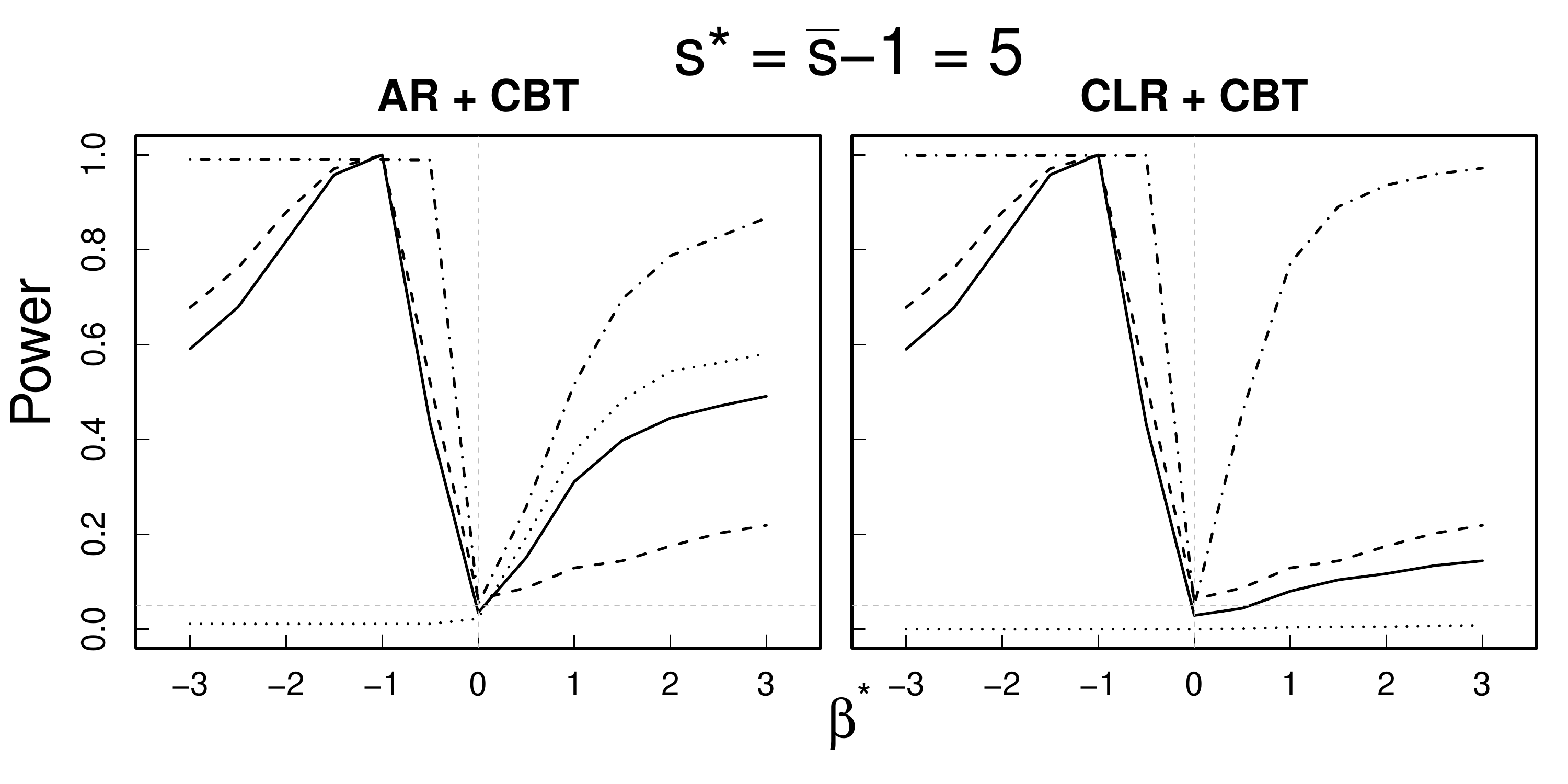}
	\end{subfigure}
	\begin{subfigure}[b]{0.48\textwidth}
		\includegraphics[width=\textwidth]{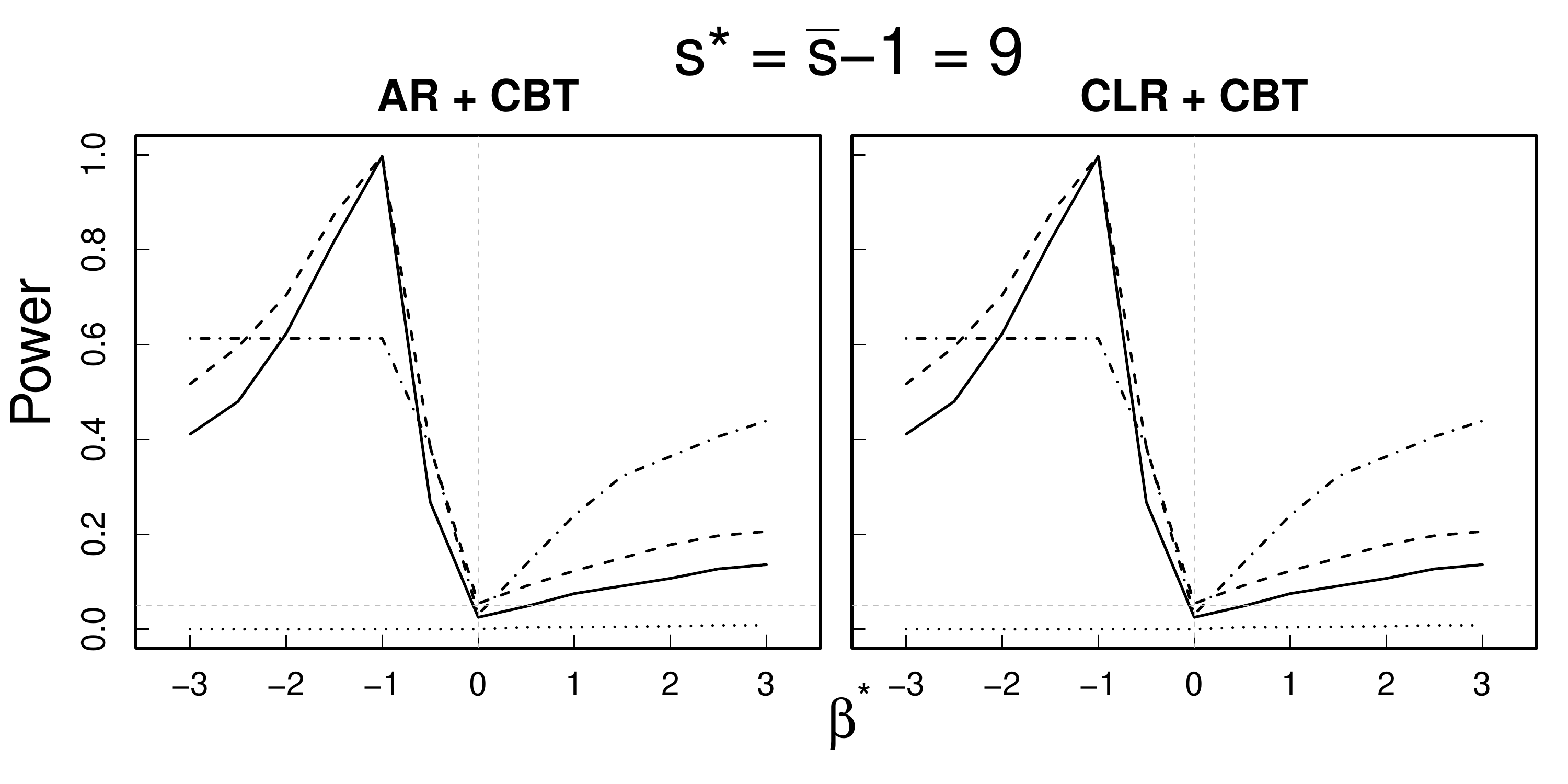}	
	\end{subfigure}	
	\caption{\label{fig:weak_either} AR: Anderson-Rubin test; CBT: collider bias test; CLR: conditional likelihood ratio test. Power of different methods under weak instruments with different numbers of invalid instruments. We fix $\alpha = 0.05$ and vary $\alpha_{2}$ to be $0.0, 0.025$, or $0.05$. When $\alpha_{2} = 0.05$, the combined test is equivalent to the collider bias test. When $\alpha_{2} = 0.0$, the combined test is equivalent to the union method. The oracle tests are tests that know exactly which instruments are valid. 
}
\end{figure}


Figure~\ref{fig:weak_either} presents the power of different methods under weak instruments. Similar to Figure~\ref{fig:strong_either}, we see that the collider bias test has better power than the union method when the treatment effect is negative, i.e. when $\beta^* < 0$, and $s^{*} > 4$; the Anderson-Rubin test and the conditional likelihood ratio test have zero power in this region.
Also, the combined test using the Anderson-Rubin test at $\alpha_{1} = \alpha_{2} = 0.025$ generally has higher power than the combined test using the conditional likelihood ratio test when $s^{*} > 0$. Finally, while not shown in the graph, we note that the power of the collider bias test decreases as $\beta^{*} < -1$. There are multitude of reasons for this, but the most likely explanation is the opposite signs of $\pi^{*}$ and $\beta^{*}$ can attenuate the collider bias and thus decrease power. 

Overall, the simulation studies suggest that there is no uniformly dominant test for the treatment effect across all scenarios concerning invalid instruments. The performance depends both on the number of invalid instruments and instrument strength. Nevertheless, nearly all proposed tests achieve near-oracle power when $s^{*}$ is close to $\bar{s}$ and the combined test has substantially better power overall than if each method is used alone. 

\section{Data Analysis: Mendelian Randomization in the Framingham Heart Study}
\label{sec:FHS}

We use our two methods to study the effect of low-density lipoprotein (LDL-C) on the incidence of cardiovascular disease (CVD) among individuals in the Framingham Heart Study (FHS) Offspring Cohort. Over several decades, the FHS has been one of the most popular epidemiologic cohort studies to identify risk factors for CVD, and recently, Mendelian randomization has been used to uncover the causal relationships in the presence of unmeasured confounding~\citep{smith2014association, mendelson2017association}. Traditional Mendelian randomization requires every instrument to be valid in order to test for a treatment effect, a tall order for many studies. Our two proposed methods relax this requirement and allow some of the instruments to be invalid.

For the main analysis, we selected ten SNPs that are known to be significantly associated with LDL-C measured in mg/dL ~\citep{kathiresan2007genome, ma2010genome, smith2014association} and are located in different chromosomes or in linkage equilibrium; the latter ensures that candidate instruments are statistically uncorrelated each other and reasonably satisfy the mutual independence requirement of instruments.
Our outcome $Y$ is binary indicating an incidence of CVD occurring at Exam 1 or later. Among $n=2,982$ subjects who had their LDL-C measured during Exam, $1,532$ ($17.5\%$) subjects had CVD afterwards. We also use subjects' age and sex as covariates $\mathbf{X}$.  The supplementary materials have additional details about the data.

Table~\ref{tab:tenmat} shows the confidence intervals based on method 1 as we vary $\bar{s}$ varies from $1$ to $3$. Since LDL-C is known to have a positive causal effect with CVD incidence ~\citep{kathiresan2008polymorphisms, voight2012plasma}, we estimated one-sided confidence intervals instead of two-sided confidence intervals. Only when there are no invalid instruments (i.e $\bar{s} = 1$) and we use the Anderson-Rubin test for the union method do we reject the null hypothesis of no effect; for all other tests and values of $\bar{s}$, we don't have the power to reject the null of no effect if we allow some instruments to be invalid. We also observed that as $\bar{s}$ increases, confidence intervals become wider and the null becomes harder to reject.

\begin{table}[ht]
	\centering
	\resizebox{0.7\textwidth}{!}{\begin{tabular}{rrrrrr}
			\hline
			& AR & CLR & TSLS & SarganTSLS & SarganCLR \\ 
			\hline
			$\mathbf{\alpha_{1} = 0.05}$ \\ 
			$\bar{s}$=1  & \cellcolor{Gray}+0.000 & -0.001 & -0.001 & -0.001 & -0.001 \\ 
			$\bar{s}$=2  &  -0.002 & -0.002 & -0.002 & -0.002 & -0.002 \\ 
			$\bar{s}$=3  & -0.002 & -0.002 & -0.002 & -0.003 & -0.003 \\ 
			\hline
			$\mathbf{\alpha_{1} = 0.025}$ \\ 
			$\bar{s}$=1  & -0.001 & -0.001 & -0.001 & -0.001 & -0.002 \\  
			$\bar{s}$=2 & -0.002 & -0.002 & -0.002 & -0.002 & -0.002 \\
			$\bar{s}$=3  & -0.003 & -0.003 & -0.003 & -0.003 & -0.003 \\  
			\hline
	\end{tabular}}
	\caption{\label{tab:tenmat} TSLS, two-stage least squares; AR, Anderson-Rubin test; CLR, conditional likelihood ratio test; Sargan--, Sargan test used as a pretest. Lower bound of an one-sided confidence interval for $\beta^{*}$ using different test statistics. We use $\alpha_1$
to be $0.05$ and $0.025$ and vary $\bar{s}$. There are $L=10$ candidate SNPs. Grey cells represent confidence intervals that rejected the null hypothesis of no effect.}
\end{table}

Table \ref{tab:tenstat} shows the result using the collider bias test by calculating the critical values $\{ \underline{\chi}^{2}_{L, v, 1-\alpha_2}: v = L - \bar{s} + 1 \}$ as a function of $\bar{s}$.
We vary the size of the test $\alpha_{2}$ between $0.05$ and $0.025$. The observed value of the collider bias test statistic is $\lambda_n = 11.019$. For $\alpha_2 = 0.05$, we can reject the null of no effect when less than $\bar{s}=7$ out of $L=10$ instruments are invalid. For $\alpha_2 = 0.025$, we can reject the null when less than $\bar{s} = 6$ out of $L=10$ instruments are valid. Compared to the union method in Table \ref{tab:tenmat}, the collider bias test suggests more evidence against the null hypothesis for this data set.

\begin{table}[H]
	\centering
	\resizebox{0.9\textwidth}{!}{\begin{tabular}{rrrrrrrrrrr}
			\hline
			$\bar{s}$ & 10 & 9 & 8 & 7 & 6 & 5 & 4 & 3 & 2 & 1 \\ 
			\hline
			$\alpha_{2} = 0.05$ & 18.227 & 13.463 & 11.316 & \cellcolor{Gray}10.087 & \cellcolor{Gray}9.275 & \cellcolor{Gray}8.679 & \cellcolor{Gray}8.148 & \cellcolor{Gray}7.891 & \cellcolor{Gray}7.584 & \cellcolor{Gray}7.366 \\ 
			$\alpha_{2} = 0.025$ &    20.172 &  14.800 &  12.253 & 11.057 & \cellcolor{Gray}10.137 & \cellcolor{Gray} 9.486 & \cellcolor{Gray} 8.973 & \cellcolor{Gray} 8.536 & \cellcolor{Gray} 8.246 & \cellcolor{Gray} 7.972 \\ 
			\hline
	\end{tabular}}
	\caption{\label{tab:tenstat} The critical value of the collider bias test $\lambda_{n}$ under the null hypothesis of no effect. Each row represents the size $\alpha_2$ and each column represents the number of valid instruments allowed. Grey cells represent settings where the observed value of the collider bias test exceeds the critical value and therefore, we can reject the null hypothesis of no effect.}
\end{table}

Table~\ref{tab:combined_result} shows the combined testing procedure in Section \ref{sec:comb_test} to test the null hypothesis of no effect. Here, $\alpha_1$ corresponds to the size of $C_{1-\alpha_1}(\mathbf{Y}_n, \mathbf{D}_{n}, \mathbf{Z}_n)$ and $\alpha_2$ corresponds to the size for the collider bias test; both sum to $0.05$. We see that when we evenly split the size into $\alpha_{1} = \alpha_{2} = 0.025$, we reject the null with less than $\bar{s} = 6$ invalid instruments. This holds across different test statistics used in $C_{1-\alpha_1}(\mathbf{Y}_n, \mathbf{D}_{n}, \mathbf{Z}_n)$.

\begin{table}[ht]
	\centering
	\resizebox{\textwidth}{!}{\begin{tabular}{r|rrrrrrrrrrr}
			\hline
			$\alpha_{1}$ & 0 & 0.005 & 0.01 & 0.015 & 0.02 & \cellcolor{Gray} 0.025 & 0.03 & 0.035 & 0.04 & 0.045 & 0.05 \\ 
			$\alpha_{2}$ & 0.05 & 0.045 & 0.04 & 0.035 & 0.03 & \cellcolor{Gray} 0.025 &  0.02 & 0.015 & 0.01 & 0.005 & 0 \\
			\hline
			AR & 7 & 7 & 7 & 7 & 7 & \cellcolor{Gray} 6 & 6 & 6 & 5 & 4 & 1 \\
			CLR $\cdot$ TSLS $\cdot$ SarganTSLS $\cdot$ SarganCLR  & 7 & 7 & 7 & 7 & 7 & \cellcolor{Gray} 6 & 6 & 6 & 5 & 4 & NR \\
			\hline
	\end{tabular}}
	\caption{\label{tab:combined_result} NR: No rejection of the null hypothesis. Smallest upper bound $\bar{s}$ needed to reject the null hypothesis of no effect using the combined testing procedure. $\alpha_1$ represents the size of $C_{1-\alpha_1}(\mathbf{Y}_n, \mathbf{D}_{n}, \mathbf{Z}_n)$ and $\alpha_2$ represents the size of the collider bias test. Grey cells represent regions where the the null hypothesis of no effect was rejected.}
\end{table}

Overall, the data analysis reaffirms the presence of a causal effect between LDL-C and risk of CVD and this conclusion is robust so long as there are at least 5 valid instruments among the 10 candidate instruments we used in our analysis. The supplementary materials conduct additional analysis where we change the candidate instruments. We also rerun the analysis when we use at most one subject from each family in the Offspring Cohort. The latter analysis reduces confounding due to population structure and cryptic relatedness by not including genetically and/or socially related subjects in the study~\citep{astle2009population, sul2018population}.  Overall, we find that if we use a different set of candidate instruments, (i) $C_{1-\alpha_1}(\mathbf{Y}_n, \mathbf{D}_{n}, \mathbf{Z}_n)$ has more evidence against the null and (ii) the combined test has better performance than either methods if used alone.



\section{Discussion}

This paper proposes two methods to conduct valid inference for the treatment effect
when instruments are possibly invalid. The first method is a simple modification of pre-existing methods in instrumental variables to construct robust confidence intervals for causal effect. 
The second method is a novel test that leverages the presence of a collider bias and test the null hypothesis of no causal effect; the second method produces valid inference so long as there is at least one valid instrument.
We also propose a combined test that generally has better power than either method used alone.
We show through numerical experiments and data analysis how our proposed methods can be used to arrive at more robust conclusions about the presence of a causal effect when invalid instruments are present.
\section*{Acknowledgements}
The Framingham Heart Study is conducted and supported by the National Heart,
Lung, and Blood Institute (NHLBI) in collaboration with Boston University (Contract No.N01-HC-25195 and HHSN268201500001I). This manuscript was not prepared in
collaboration with investigators of the Framingham Heart Study and does not
necessarily reflect the opinions or views of the Framingham Heart Study, Boston
University, or NHLBI.
Funding for CARe genotyping was provided by NHLBI Contract N01-HC-65226. The research of Hyunseung Kang was supported in part by NSF Grant DMS-1811414.

\bibliography{example}

\appendix
\section{Appendix}
\begin{proof}[Proof of Theorem~\ref{theorem1}]
By $s^{*} = c(B^*) < \bar{s}$, there is a subset $\tilde{B}$ where $c(\tilde{B}) = \bar{s}-1$ and $B^* \subseteq \tilde{B}$. Also, its complement $\tilde{B}^C$ only contains valid instruments and thus, $\text{pr}\{\beta^* \in C_{1 - \alpha}(\mathbf{Y}_{n}, \mathbf{D}_{n},\mathbf{Z}_{n},\tilde{B}) \} \geq 1 - \alpha$. Hence, we have 
\[
\text{pr}\{\beta^* \in C_{1 - \alpha}(\mathbf{Y}_{n}, \mathbf{D}_{n}, \mathbf{Z}_{n})\} \geq \text{pr}\{\beta^* \in C_{1- \alpha}(\mathbf{Y}_{n},\mathbf{D}_{n},\mathbf{Z}_{n},\tilde{B})\} \geq 1- \alpha
\]
for all values of $\beta^*$.
\end{proof}

\begin{proof}[Proof of Theorem~\ref{theorem2}] Similar to the proof for Theorem \ref{theorem1}, $\tilde{B}$, which is a superset containing all invalid instruments, has to exist. Also, $\tilde{B}$ has the property $\text{pr}\{S(\tilde{B}) \geq q_{1 - \alpha_s}\} \leq \alpha_s$. Then, we can use Bonferroni's inequality to obtain
\begin{align*}
\text{pr}\{\beta^* \in C_{1 - \alpha}'(\mathbf{Y}_{n}, \mathbf{D}_{n},\mathbf{Z}_{n})\} &\geq \text{pr}\{\beta^* \in C_{1- \alpha_{t}}(\mathbf{Y}_{n}, \mathbf{D}_{n},\mathbf{Z}_{n},\tilde{B}) \cap S(\tilde{B}) \leq q_{1 - \alpha_{s}} \} \\
&\geq 1 - \text{pr}\{\beta^* \notin C_{1- \alpha_{t}}(\mathbf{Y}_{n}, \mathbf{D}_{n},\mathbf{Z}_{n},\tilde{B})\} - \text{pr}\{S(\tilde{B}) \geq q_{1 - \alpha_{s}} \} \\
&\geq 1 - \alpha_{s} - \alpha_{t} = 1 - \alpha
\end{align*}
thereby guaranteeing at least $1-\alpha$ coverage.
\end{proof}

\begin{proof}[Proof of Theorem~\ref{thm:hyp_imp}]
First suppose that $H_{0} : \beta^{*} = 0$ holds under (A4), and we have at least one valid instrument among $L$. Then there exists at least one $j \in \{1,2,\ldots, L\}$ that satisfies $\{ Z_{j} \bigCI Z_{k} \mbox{ and } Z_{j} \bigCI Y; k = 1,2,\ldots, L,~k \neq j \}$, so $\prod\limits_{j=1,2,\ldots, L} \sigma_{j, L+1} = 0$. 

Next suppose that $H_{0} : \beta^{*} = 0$ does not hold. Then since $Y$ is a collider between the $L$ instruments, for all $j = 1,2,\ldots,L$, $\{Z_{j} \bigCI Z_{k} | Y;~k=1,2,\ldots,L,~k \neq j\}$ does not hold. Therefore, $H_{0j}$ does not hold and $\{Z_{k} \nbigCI Y:~k\neq j\}$, either of which results to $\prod\limits_{j=1,2,\ldots, L} \sigma_{j, L+1} \neq 0$.
\end{proof}

\begin{proof}[Proof of Theorem~\ref{thm:CT}]

Consider $L$ candidate instruments. Let $A$ denote a set of indices for the first $(L-1)$ instruments $\{ 1,2, \ldots, L \} \setminus \{L\}$ and the outcome, and $A^{-j}$ denote a set of indices for $(L-2)$ instruments $\{ 1,2, \ldots, L \} \setminus \{j, L\}$ and the outcome $(j=1,2,\ldots, L)$.
For any $K \times K~(K \in \mathbb{N})$ matrix $\mathbf{M} = (m_{kl})$,  $\mathbf{M}_{-j, -j}$ denotes a submatrix of $\mathbf{M}$ with $j^{\text{th}}$ column and row removed; 
and $m_{(jj.A)} := m_{jj} - \mathbf{M}_{ \{j\} \times A} \mathbf{M}^{-1}_{A \times A} \mathbf{M}_{A \times \{j\}}$, following the same notations introduced in \cite{drton2006algebraic}. 

We prove Theorem~\ref{thm:CT} first by showing that the likelihood ratio test statistic for $H_{0j} : \sigma_{jk} = 0;~k=1,2,\ldots, L+1,~k\neq j$:
\begin{eqnarray*}
\lambda_{n,j} := n\log\left( \frac{s_{jj} \mbox{det}(\mathbf{S}^{(L)}_{-j,-j})}{\mbox{det}(\mathbf{S}^{(L)})} \right) \overset{d}{\longrightarrow} \sum\limits_{k=1}^{L} W_{jk}
\end{eqnarray*}
by induction.
First consider $L = 2$ instruments and $\mathbf{S}^{(2)} = (s^{(2)}_{kl})$ is a sample covariance matrix of $(Z_{1}, Z_{2}, Y)$. Based on Proposition 4.2 in~\cite{drton2006algebraic}, we can claim that for a valid instrument $Z_{j}$, $j=1,2$ that satisfies $H_{0j}$:
\begin{eqnarray*}
n\log\left( \frac{s^{(2)}_{jj} \text{det}(\mathbf{S}^{(2)}_{-j,-j})}{\text{det} (\mathbf{S}^{(2)}) } \right) \overset{d}{\longrightarrow} W_{jj} + W_{jk}~\mbox{ for } k \neq j~,k=1,2.
\end{eqnarray*}
Next consider $L-1$ candidate instruments. Let $\mathbf{S}^{(L-1)} = (s^{(L-1)}_{kl})$ denote a $L \times L$ covariance matrix of $(Z_{1}, Z_{2}, \ldots, Z_{L-1}, Y)$ and $\mathbf{S}^{(L-1)}_{-j, -j}$ denote a submatrix of $\mathbf{S}^{(L-1)}$ with $j^{\text{th}}$ column and row removed from $\mathbf{S}^{(L)}$ ($j=1,2,\ldots,L-1$). \textit{Suppose} that the following equation holds under the null of $H_{0,j}$ for $j=1,2,\ldots, L-1$.
 \begin{eqnarray}
 \label{eq:induction}
n\log \left( \frac{s^{(L-1)}_{jj} \text{det} ( \mathbf{S}^{(L-1)}_{-j, -j} )}{\text{det}(\mathbf{S}^{(L-1)}) }    \right)  \overset{d}{\longrightarrow}   \sum\limits_{k=1}^{L-1} W_{jk}.
\end{eqnarray}
Lastly consider we have $L$ valid instruments with a sample covariance matrix $\mathbf{S}^{(L)} = (s_{kl})  \in \mathbb{R}^{(L+1) \times (L+1)}$ and a covariance matrix $\mathbf{\Sigma}^{(L)} = (\sigma_{kl}) \in \mathbb{R}^{(L+1) \times (L+1)}$. Then we can show that for any $j=1,2,\ldots, L$, the null of $H_{0, j} : \sigma_{jk} = 0$ for $k=1,2,\ldots, L+1,~k \neq j$ leads to the following decomposition under (A4), each term of which converges to the $\chi^{2}$ distribution.
\begin{eqnarray}
\label{eq:laststep}
\lambda_{n,j}=n\log \left( \frac{s_{jj} \text{det} ( \mathbf{S}^{(L)}_{-j, -j} )}{\text{det}(\mathbf{S}^{(L)}) }    \right) & =  & n\log \left( \frac{s^{(L-1)}_{jj} \text{det} ( \mathbf{S}^{(L-1)}_{-j, -j} )}{\text{det}(\mathbf{S}^{(L-1)}) }    \right) + n \log \left( \frac{ \text{det}(s^{(L)}_{LL.A^{-j}}) }{ \text{det}( s^{(L)}_{LL.A} ) }  \right) \nonumber \\  
& \overset{d}{\longrightarrow} & \sum\limits_{k=1}^{L-1} W_{jk} + W_{j L} = \sum\limits_{k=1}^{L} W_{jk}. 
\end{eqnarray}
The first term in~\eqref{eq:laststep} holds by our assumption with $(L-1)$ instruments \eqref{eq:induction}; and the second term can be proven by the next Lemma~\ref{lemma:chisq}. From \eqref{eq:laststep}, when the null for an instrument $Z_{j}$, i.e. $H_{0j}$, holds, $\lambda_{n,j}$ follows the distribution of $\sum\limits_{k=1}^{L} W_{jk} \sim \chi^{2}_{L}$. 

\begin{lemma}
\label{lemma:chisq} 
Let $\mathbf{\Sigma}^{(L)} = (\sigma_{kl})$ and $\mathbf{S}^{(L)} = (s_{kl})$ denote the $(L+1) \times (L+1)$ covariance matrix and sample covariance matrix of $(\mathbf{Z}_{n}, \mathbf{Y}_{n})$, respectively. 
Then under the null of $H_{0, j}: \sigma_{j, L+1} = 0$ and with mutually independent instruments:
\begin{eqnarray}
\label{eq:CT_onlyone}
n \log \left( \frac{ \text{det}(\mathbf{S}^{(L)}_{LL.A^{-j}}) }{ \text{det}( \mathbf{S}^{(L)}_{LL.A} ) }  \right)  \overset{d}{\longrightarrow} W_{Lj} \overset{d}{=} \chi^{2}_{1},
\end{eqnarray}
where $W_{Lj} \overset{d}{=} W_{jL}$ and $W_{Lj}$ is mutually independent with $\{ W_{kl}:~k,l=1,2,\ldots, L : W_{kl} \neq W_{jL}, W_{Lj}\}$.
\end{lemma}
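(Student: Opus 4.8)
The plan is to recognize the ratio of Schur complements in \eqref{eq:CT_onlyone} as one minus a squared sample partial correlation, reduce the null to the statement that a single population partial correlation vanishes, and then read off the $\chi^{2}_{1}$ limit from the classical asymptotics of sample partial correlations. Since $A = A^{-j} \cup \{j\}$, the recursive conditioning identity for Gaussian covariances gives
\[
\frac{s^{(L)}_{LL.A}}{s^{(L)}_{LL.A^{-j}}} = 1 - \hat{\rho}^{2},
\]
where $\hat{\rho}$ is the sample partial correlation between $Z_{L}$ and $Z_{j}$ given the variables indexed by $A^{-j}$, namely the outcome $Y$ together with all instruments other than $Z_{j}$ and $Z_{L}$. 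Hence the left-hand side of \eqref{eq:CT_onlyone} equals $-n\log(1 - \hat{\rho}^{2})$, and the lemma reduces to identifying the limiting law of $\hat{\rho}$.

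First I would check that the corresponding population partial correlation is exactly zero under the null. Under model \eqref{eq:powerSetup} with $\beta^{*} = 0$ we have $Y = \mathbf{Z}^{T}\boldsymbol{\pi}^{*} + \epsilon$, and $H_{0j}: \sigma_{j,L+1} = 0$ is equivalent to $\pi_{j}^{*} = 0$ because the instruments are mutually independent. Conditioning on the instruments other than $Z_{j}, Z_{L}$ strips their contribution from $Y$ and leaves the residual $\pi_{L}^{*} Z_{L} + \epsilon$, which does not involve $Z_{j}$ once $\pi_{j}^{*} = 0$; since $Z_{j}$ is independent of $(Z_{L}, \epsilon)$ it is independent of this residual, so $Z_{j} \bigCI Z_{L} \mid A^{-j}$ and the population partial correlation is zero. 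This is precisely the collider statement: with $\pi_{j}^{*} = 0$ the edge $Z_{j} \to Y$ is absent, so conditioning on $Y$ induces no association between $Z_{j}$ and $Z_{L}$.

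Given a vanishing population partial correlation, the standard theory for sample partial correlations in Gaussian models yields $\sqrt{n}\,\hat{\rho} \to N(0,1)$, hence $n\hat{\rho}^{2} \to \chi^{2}_{1}$; a first-order expansion of $-\log(1-x)$ about $x=0$ then gives $-n\log(1-\hat{\rho}^{2}) = n\hat{\rho}^{2} + o_{P}(1)$, which delivers the claimed limit $W_{Lj} \overset{d}{=} \chi^{2}_{1}$. Equivalently, this increment is the likelihood ratio statistic for the single smooth (codimension-one) constraint $\sigma_{Lj.A^{-j}} = 0$ in the Gaussian model, so the $\chi^{2}_{1}$ limit also follows from the general machinery of \cite{drton2006algebraic, drton2009likelihood}.

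The hard part will be the asymptotic independence assertion, namely that $W_{Lj}$ is independent of every other increment $W_{kl}$ entering the induction. I would handle this by reparametrizing the Gaussian model through the triangular system of conditional regressions of each variable on its predecessors, equivalently through the sequential Schur-complement (block-determinant) factorization of the likelihood that already underlies \eqref{eq:laststep}. In this parametrization the constraint $\sigma_{Lj.A^{-j}} = 0$ is a single coordinate (a conditional regression coefficient, equivalently the partial correlation $\rho_{Lj.A^{-j}}$) that is variation-independent of the coordinates governing the $(L-1)$-instrument subproblem. Maximum likelihood asymptotics then make the coordinate estimators jointly Gaussian with a block-diagonal information matrix, so to leading order the increment $-n\log(1-\hat{\rho}^{2})$ depends only on its own coordinate and is asymptotically independent of the first term of \eqref{eq:laststep}; the single shared quantity is $W_{jL} = W_{Lj}$, forced by the symmetry of $\mathbf{W}$. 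This is exactly the independent-chi-square decomposition of \cite{drton2006algebraic, drton2009likelihood}, whose Proposition 4.2 furnishes the $L=2$ base case, so I would invoke that framework rather than recompute the joint covariances of the sample second moments by hand.
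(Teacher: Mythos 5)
Your derivation of the marginal $\chi^{2}_{1}$ limit is correct, and it takes a genuinely different route from the paper. You identify $\det(\mathbf{S}^{(L)}_{LL.A^{-j}})/\det(\mathbf{S}^{(L)}_{LL.A})$ with $(1-\hat{\rho}^{2})^{-1}$ for the sample partial correlation $\hat{\rho}=\hat{\rho}_{Lj.A^{-j}}$, argue via weak union that the population partial correlation vanishes under $H_{0j}$ alone (notably, without requiring $Z_{L}$ itself to be valid), and then invoke classical Gaussian partial-correlation asymptotics plus $-n\log(1-\hat{\rho}^{2})=n\hat{\rho}^{2}+o_{P}(1)$. The paper instead performs a second-order delta-method expansion of $\phi(\mathbf{S}^{(L)})$ around $\boldsymbol{\Sigma}^{(L)}$, shows that under $H_{0j}$ and independent instruments the only surviving term is $\tfrac{1}{2}\cdot 2\,\partial^{2}\phi/\partial\sigma_{jL}^{2}=(\sigma_{jj}\sigma_{LL})^{-1}$ times $(s_{jL}-\sigma_{jL})^{2}$, and then applies the Isserlis-matrix CLT for sample covariances to conclude $n\phi(\mathbf{S}^{(L)})\approx n\,s_{jL}^{2}/(\sigma_{jj}\sigma_{LL})\to\chi^{2}_{1}$. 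For this part your argument is cleaner and buys a transparent population-level justification; the paper's buys something else, which is exactly where your proposal has a gap.

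The gap is the independence assertion. The paper's expansion does more than deliver the limit law: it identifies the increment, up to $o_{P}(1)$, with an explicit function of the \emph{single} sample-covariance entry $s_{jL}$, namely $W_{jL}:=n\,s_{jL}^{2}/(\sigma_{jj}\sigma_{LL})$. Mutual independence of $W_{Lj}$ from every other $W_{kl}$ then follows at once from the (asymptotic) diagonality of the Isserlis covariance of the $s_{kl}$'s, and this same identification is what makes visible which term is shared between two statistics $\lambda_{n,j}$ and $\lambda_{n,k}$ in Theorem \ref{thm:CT}. Your triangular-regression/variation-independence argument is tied to one conditioning order: it can yield independence of the increment from the $(L-1)$-variable subproblem \emph{within} the single decomposition \eqref{eq:laststep} for a fixed $j$. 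But the lemma, as used in the minimum over $j\in V^{*}$, requires $W_{Lj}$ to be independent of increments arising in the \emph{other} statistics $\lambda_{n,k}$, $k\neq j$, which are built from different conditioning sets $A^{-k}$, i.e.\ from coordinates of a different triangular parametrization; variation independence within one ordering says nothing about the joint law of $\hat{\rho}_{Lj.A^{-j}}$ and $\hat{\rho}_{Lk.A^{-k}}$ across orderings, nor does it identify the common term shared by $\lambda_{n,j}$ and $\lambda_{n,k}$ --- writing that this is ``forced by the symmetry of $\mathbf{W}$'' asserts the conclusion rather than proving it. To close the gap you would have to linearize each sample partial correlation in the underlying sample covariances and compute the joint asymptotic covariance --- at which point you are redoing the paper's Isserlis computation --- or else locate and verify a result in \cite{drton2006algebraic} that covers the joint limit across all $L$ orderings simultaneously, which is stronger than the Proposition 4.2 base case you cite.
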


Now under $H_{0} = \prod\limits_{j=1,2,\ldots, L} \sigma_{j, L+1} = 0$ and (A4), suppose that we know $v^{*} (\geq 1)$ out of $L$ nulls $\{H_{0,j} : j =1,2,\ldots, L \}$'s hold. Using the idea of Proposition 4.3 in~\cite{drton2008lectures}, we use the minimum of $\{ \lambda_{n,j} \}_{j=1}^{L}$'s as a test statistic that converges to the minimum of $\{ \sum\limits_{j=1,2,\ldots, L} W_{jk} \}_{j=1}^{L}$, and apply \cite{drton2008lectures}'s proposal that $\sum_{k=1}^{L} W_{jk}$ for an invalid instrument $Z_{j}$ would not contribute to the minimum; while each of other $v$, valid instruments'  $\sum\limits_{k=1}^{L} W_{jk}$ follows the asymptotic distribution of $\chi^{2}_{L}$. Therefore, the minimum of $L$ statistics in the parenthesis~\eqref{eq:CT} turns into the minimum of $v^{*}$, $\chi^{2}_{L}$-distributed variables and these variables are dependent each other through $W_{jk}$'s in $j,k \in V^{*}$. The following equivalences show that the distribution of $\min\limits_{j =  1,2,\ldots,n}( \sum\limits_{k=1}^{L} W_{jk})$ is invariant to the actual set of $V^{*}$ but only depends on the number of valid instruments $v^{*}$.

\begin{eqnarray*}
\lambda_{n} &=& \min\limits_{j=1,2,\ldots, L} \left( n\log \left( \frac{s_{jj} \text{det} ( \mathbf{S}^{(L)}_{-j, -j} )}{\text{det}(\mathbf{S}^{(L)}) }    \right) \right)  \overset{d}{\longrightarrow} \min \limits_{j=1,2,\ldots, L} \left( \sum\limits_{k=1}^{L} W_{jk} \right) \\
&=& \min\limits_{j \in V^{*}} \left( \sum\limits_{k=1}^{L} W_{jk}  \right)
=  \min\limits_{j \in V^{*}}\left( W_{jj} +   \sum\limits_{k \in V^{*} \setminus \{j\}} W_{jk} + \sum\limits_{k \notin V^{*}} W_{jk} \right) 
 := \underline{\chi}^{2}_{L,v^{*}}.
\end{eqnarray*}

\end{proof}

\begin{proof}[Proof of Lemma~\ref{lemma:chisq}]

First consider the asymptotic distribution of the log of two determinants from a sample covariance matrix $\mathbf{S}^{(L)}$ without scaled by $n$. We denote it as $\phi(\mathbf{S}^{(L)})$. Then under $H_{0j} : \sigma_{j, L+1} = 0$ and (A4), two determinants can be simplified using the Schur complement. 
\begin{eqnarray*}
\phi\left( \mathbf{S}^{(L)} \right)  := \log \left( \frac{ \text{det}(\mathbf{S}^{(L)}_{LL.A^{-j}}) }{\text{det}(\mathbf{S}^{(L)}_{LL.A}) }\right) &=& \log \left( \frac{s_{LL} - \sum\limits_{k,l \neq j,L}^{L+1} s_{Lk} s_{Ll} \tilde{s}^{-1}_{kl}}{ s_{LL} - \sum\limits_{k,l \neq L}^{L+1} s_{Lk} s_{Ll} \dot{s}^{-1}_{kl} } \right),
\end{eqnarray*}
where $\tilde{\mathbf{S}} = (\tilde{s}_{kl})$ is a $(L+1) \times (L+1)$ matrix with the same entries as those in the covariance matrix of $(\mathbf{Z}\setminus \{Z_{j}, Z_{L}\}, Y )$ at the row and column index in $A^{-j}$; and with a zero vector of length $(L+1)$ at $j^{\text{th}}$ and $L^{\text{th}}$ rows and columns. Similarly define a $(L+1) \times (L+1)$ matrix $\dot{\mathbf{S}} = (\dot{s}_{kl})$ with the same entries as the covariance matrix of $\{Z_{1}, Z_{2}, \cdots, Z_{L}, Y  \} \setminus \{Z_{L}\}$ at the rows and columns with index in $A$ and with a zero vector at the $L^{\text{th}}$ row and column. 
Let $\widetilde{\mathbf{S}}^{-1} = (\tilde{s}^{-1}_{kl})$ and $\dot{\mathbf{S}}^{-1} = (\dot{s}^{-1}_{kl})$ denote an inverse of each matrix. Define the covariance matrices of $\widetilde{\mathbf{\Sigma}} = (\widetilde{\sigma}_{kl})$ and $\dot{\mathbf{\Sigma}} = (\dot{\sigma}_{kl})$ from $\mathbf{\Sigma}$ matrix in a similar manner.

Then using the delta-method, we approximate the distribution of $\phi\left(  \mathbf{S}^{(L)} \right)$:
\begin{eqnarray*}
\phi(\mathbf{S}^{(L)}) &=& \phi(\mathbf{\Sigma}^{(L)}) + \sum\limits_{k,l=1;~k \leq l}^{L+1} \frac{\partial \phi(\mathbf{\Sigma}^{(L)}) }{ \partial \sigma_{kl}} (s_{kl} - \sigma_{kl} )  \\ && \quad + 
\frac{1}{2} \sum\limits_{k,l,g,h=1;~k \leq l, g \leq h}^{L+1} \frac{\partial^{2} \phi(\mathbf{\Sigma}^{(L)})}{\partial \sigma_{kl} \sigma_{gh}} (s_{kl} - \sigma_{kl})(s_{gh} - \sigma_{gh}) + \Delta_{n}.
\end{eqnarray*}
Under the null $H_{0j}: \sigma_{j, L+1} = 0$ and (A4), we have $\phi(\mathbf{\Sigma}^{(L)}) = \phi^{\prime}(\mathbf{\Sigma}^{(L)}) = 0$;
$\partial^{2} \phi / \partial \sigma_{kl} \sigma_{gh} = 0$ for $(k,l),(g,h) \neq (j,L)$; $\tilde{\sigma}_{kl} = \dot{\sigma}_{kl}$;
$\tilde{\sigma}^{-1}_{jL} = 0$ and $\bar{\sigma}^{-1}_{jL} = 0$; therefore, $\phi(\mathbf{S}^{(L)})$ ends up only depending on the one of second-order derivatives. 
\begin{eqnarray*}
\phi(\mathbf{S}^{(L)}) &=&  \frac{1}{2} \left( \frac{\partial^{2} \phi(\mathbf{\Sigma}^{(L)})}{\partial \sigma_{jL} \sigma_{jL}} + \frac{\partial^{2} \phi(\mathbf{\Sigma}^{(L)})}{\partial \sigma_{jL} \sigma_{jL}} \right) + \Delta_{n}, \\
\frac{\partial^{2} \phi(\mathbf{S}^{(L)})}{\partial \sigma_{jL} \sigma_{jL}} & = &  \frac{\partial^{2}}{\partial \sigma_{jL} \sigma_{jL}} \log \left( \frac{\sigma_{LL} - \sum\limits_{k,l \neq j,L}^{L+1} \sigma_{Lk} \sigma_{Ll} \tilde{\sigma}^{-1}_{kl}}{ \sigma_{LL} - \sum\limits_{k,l \neq L}^{L+1} \sigma_{Lk} \sigma_{Ll} \bar{\sigma}^{-1}_{kl} } \right) \\
&= & - \frac{\partial^{2}}{\partial \sigma_{jL} \sigma_{jL}} \log \left( \sigma_{LL} - \sum\limits_{k,l \neq j, L}^{L+1} \sigma_{Lk} \sigma_{Ll} \bar{\sigma}^{-1}_{kl} - \sigma^2_{Lj} \bar{\sigma}^{-1}_{jj} - 2 \sum\limits_{k=1, k \neq j, L}^{L+1} \sigma_{Lj}\sigma_{Lk} \bar{\sigma}^{-1}_{jk} \right) \bigg\rvert_{\sigma_{Lk} = 0, k \neq L}  \\ 
&= & (\sigma_{LL} \sigma_{jj})^{-1}.
\end{eqnarray*}
Since the terms beyond the second order, $\Delta_{n}$, converge to zero as the sample size increases, we have $\phi(\mathbf{S}^{(L)}) \approx (\sigma_{LL} \sigma_{jj})^{-1} (s_{jL} - \sigma_{jL})^{2}$. We finally use an Isserlis matrix~\citep{roverato1998isserlis, drton2006algebraic} to derive the distribution of each component of the sample covariance matrix $\mathbf{S}^{(L)} = (s_{kl})$ of which upper triangular components are known to follow: 
\begin{eqnarray}
\label{eq:Isserlis}
n^{-1/2}\left\{ (s_{11}, s_{12}, \ldots, s_{1,L+1}, s_{22}, \ldots, s_{L+1, L+1})^{T} -  (\sigma_{11}, \sigma_{12}, \ldots, \sigma_{1,L+1}, \sigma_{22}, \ldots, \sigma_{L+1, L+1})^{T}\right\} \nonumber \\ \overset{d}{\longrightarrow} \mathcal{N}\left(0, (\sigma_{ik}\sigma_{jm} + \sigma_{im}\sigma_{jk} )_{ij, km} \right).
\end{eqnarray}
Hence, we have $\sqrt{n}(s_{jL} - \sigma_{jL}) \overset{d}{\longrightarrow} \mathcal{N}(0, \sigma_{LL}\sigma_{jj} + \sigma_{jL}^2)$. Because $\sigma_{kL} = 0$ for $k \neq L, L+1$ under independent instruments (A4), we have:
\begin{eqnarray*}
n \phi(\mathbf{S}^{(L)}) \approx n(\sigma_{LL}\sigma_{jj})^{-1}(s_{Lj} - \sigma_{Lj})^2 := W_{jL} \overset{d}{\longrightarrow}  \chi^{2}_{1}.
\end{eqnarray*}
Because an Isserlie matrix~\eqref{eq:Isserlis} is diagonal, $W_{jL}$ is independent with $\{W_{kl}, k,l=1,2,\ldots, L: W_{kl} \neq W_{jL}, W_{Lj}\}$, and by the definition of the notation, $W_{kl} = W_{lk}$.
\end{proof}
\clearpage
\setcounter{equation}{0}
	\setcounter{figure}{0}
	\setcounter{table}{0}
	\setcounter{page}{1}
	\setcounter{section}{0}
	\renewcommand{\theequation}{S\arabic{equation}}
	\renewcommand{\thefigure}{S\arabic{figure}}
	\renewcommand{\thesection}{S\arabic{section}}
	\renewcommand{\thetable}{S\arabic{table}}
\renewcommand{\thefigure}{S\arabic{figure}}
\begin{center}
{\huge\bf Supplementary Material}
\end{center}

\section{Additional Discussion about the Model}
We make a few remarks about model \eqref{eq:model1} and \eqref{eq:model2} in the main paper. First, if the magnitudes of $\phi^*$  and $\psi^*$ are equal, but with opposite signs, $\pi^* = 0$ and all of our instruments are valid under Definition \ref{def:validIV}. But, if we consider $\phi^*$ and $\psi^*$ individually, these are violations of the IV assumptions (A2) and (A3). While such scenario will probably not occur in practice, it does raise the limitations of the linear constant effects modeling assumption in model \eqref{eq:model1}. 

Second, our framework assumes an additive, linear, constant effects model between $Y$, $D$, and $Z$, which may not be met in practice. However, similar to the development in weak instrument literature where homogeneity assumption is used as a basis to study properties of IV estimators under near-violations of IV assumption (A1) \citep{staiger_instrumental_1997,stock_survey_2002}, we also find the homogeneity assumption as a good starting point to tackle the question of invalid instruments and violations of (A2) and (A3). Also, our setup is the most popular IV model that's typically introduced in standard econometrics \citep{wooldridge_econometrics_2010}. Nevertheless, we leave it as a future research topic to investigate the effect of heterogeneity and non-linearity in the presence of invalid instruments.

\section{Test Statistics for Theorem 1}\label{sec:choiceTest}
Let $\mathbf{P}_{\mathbf{Z}_A} = \mathbf{Z}_A (\mathbf{Z}_A^T \mathbf{Z}_A)^{-1}\mathbf{Z}_{A}^T$ be the $n$ by $n$ orthogonal projection matrix onto the column space of $\mathbf{Z}_{A}$. Let $\mathbf{R}_{\mathbf{Z}_A}$ be the $n$ by $n$ residual projection matrix so that $\mathbf{R}_{\mathbf{Z}_A} + \mathbf{P}_{\mathbf{Z}_A} = \mathbf{I}_{n}$ and $\mathbf{I}_{n}$ is an $n$ by $n$ identity matrix. 

In the instrumental variables literature, there are many tests of causal effects $T(\beta_0,B)$ that satisfy Theorem 1. We start with a discussion of the most popular test statistic in instrumental variables, the t-test based on two stage least squares. For a given subset $B \subset \{1,\ldots,L\}$, consider the following optimization problem
\begin{equation} \label{eq:tsls_est}
\hat{\bm{\pi}}_{B}^{TSLS(B)}, \hat{\beta}^{TSLS(B)} = {argmin}_{\bm{\pi}_B,\beta} \|\mathbf{P}_{\mathbf{Z}_n} (\mathbf{Y}_n - \mathbf{Z}_{B} \bm{\pi}_B - \mathbf{D}_n \beta)\|_2^2 
\end{equation}
The estimates from \eqref{eq:tsls_est} are known as two-stage least squares estimates of $\bm{\pi}^*$ and $\beta^*$ where for $\beta^*$; note that $\hat{\bm{\pi}}_{B^C}^{TSLS(B)} = 0$ since the estimation assumes $B$ contains all the invalid instruments. Let $\hat{\bm{\epsilon}}^{TSLS(B)}$ be the residuals from the optimization problem, $\hat{\bm{\epsilon}}^{TSLS(B)} = \mathbf{Y}_n - \mathbf{Z}_B \hat{\bm{\pi}}_{B}^{TSLS(B)} - \mathbf{D}_n \hat{\beta}_B^{TSLS(B)}$. Then, the t-test based on the two-stage least squares estimator of $\beta^*$ in \eqref{eq:tsls_est} is
\begin{equation} \label{eq:tsls}
\text{\TSLS}(\beta_0,B) = \sqrt{n - c(B) - 1} \left\{ \frac{\hat{\beta}^{TSLS(B)} - \beta_0}{\sqrt{\|\hat{\bm{\epsilon}}^{TSLS(B)}\|_2^2 / \| \mathbf{R}_{\mathbf{Z}_B}\mathbf{P}_{\mathbf{Z}_n}\mathbf{D}_n \|_2^2}} \right\}
\end{equation}
Under $H_0: \beta^* = \beta_0$ and if $B^* \subseteq B$, \eqref{eq:tsls} is asymptotically standard Normal and consequently, the null is rejected for the alternative $H_a: \beta^* \neq \beta_0$ when $|\text{\TSLS}(\beta_0,B)| \geq z_{1-\alpha/2}$ where $z_{1-\alpha/2}$ is the $1- \alpha/2$ quantile of the standard Normal. Unfortunately, in practice, instruments can be weak and the nominal size of the two-stage least squares based on asymptotic Normal  can be misleading \citep{staiger_instrumental_1997}.

The Anderson-Rubin (AR) test \citep{anderson_estimation_1949} is another simple and popular test in instrumental variables based on the partial F-test of the regression coefficients between $\mathbf{Y}_n - \mathbf{D}_n\beta_0$ versus $\mathbf{Z}_{B^C}$, i.e.
\begin{equation} \label{eq:AR}
\text{\AR}(\beta_0,B) = \frac{ (\mathbf{Y}_n - \mathbf{D}_n \beta_0)^T (\mathbf{P}_{\mathbf{Z}_n} - \mathbf{P}_{\mathbf{Z}_B})(\mathbf{Y}_n  - \mathbf{D}_n \beta_0) / L - c(B)}{ (\mathbf{Y}_n - \mathbf{D}_n \beta_0)^T \mathbf{R}_{\mathbf{Z}_n} (\mathbf{Y}_n - \mathbf{D}_n \beta_0) / (n - L)}
\end{equation}
The Anderson-Rubin test has some attractive properties, including robustness to weak instruments (i.e. violation of (A1)) \citep{staiger_instrumental_1997}, robustness to modeling assumptions on $D_i$, and an exact null distribution under Normality, to name a few; see \citet{dufour_identification_2003} for a full list. A caveat to the Anderson-Rubin test is its lackluster power, especially compared to the conditional likelihood ratio test \citep{moreira_conditional_2003,andrews_optimal_2006, mikusheva_robust_2010}. However, the Anderson-Rubin test can be used as a pretest to check whether the candidate subset of instruments $B$ contains all the invalid instruments \citep{kleibergen_generalizing_2007}. This feature is particularly useful for our problem where we have possibly invalid instruments and we want to exclude subsets $B$ that contain invalid instruments.

Finally, \citet{moreira_conditional_2003} proposed the conditional likelihood ratio test which also satisfies the condition for Theorem 1 and, more importantly, is robust to weak instruments. Specifically, for a given $B$, let $\mathbf{W}_n$ be an $n$ by $2$ matrix where the first column contains $\mathbf{Y}_n$ and the second column contains $\mathbf{D}_n$. Let $\mathbf{a}_0 = (\beta_0, 1)$ and $\mathbf{b}_0 = (1, -\beta_0)$ to be two-dimensional vectors and $\hat{\Sigma} = \mathbf{W}_n^T \mathbf{R}_{\mathbf{Z}_n} \mathbf{W}_n / (n - L)$.  Consider the 2 by 2 matrix $Q(\beta_0,B)$
\begin{align*}
Q(\beta_0,B) &= 
\begin{pmatrix}
Q_{11}(\beta_0,B) & Q_{12}(\beta_0,B) \\
Q_{21}(\beta_0,B) & Q_{22}(\beta_0,B)
\end{pmatrix} \\
&= 
\begin{pmatrix}
\frac{\mathbf{b}_0^T \mathbf{W}_n^T (\mathbf{P}_{\mathbf{Z}_n} - \mathbf{P}_{\mathbf{Z}_B}) \mathbf{W}_n \mathbf{b}_0}{\mathbf{b}_0^T \hat{\Sigma} \mathbf{b}_0} & \frac{ \mathbf{b}_0^T \mathbf{W}_n^T (\mathbf{P}_{\mathbf{Z}_n} - \mathbf{P}_{\mathbf{Z}_B}) \mathbf{W}_n \hat{\Sigma}^{-1} \mathbf{a}_0}{\sqrt{\mathbf{a}_0^T \hat{\Sigma}^{-1} \mathbf{a}_0} \sqrt{\mathbf{b}_0^T \hat{\Sigma}^{-1} \mathbf{b}_0}} \\
\frac{ \mathbf{b}_0^T \mathbf{W}_n^T (\mathbf{P}_{\mathbf{Z}_n} - \mathbf{P}_{\mathbf{Z}_B}) \mathbf{W}_n \hat{\Sigma}^{-1} \mathbf{a}_0}{\sqrt{\mathbf{a}_0^T \hat{\Sigma}^{-1} \mathbf{a}_0} \sqrt{\mathbf{b}_0^T \hat{\Sigma}^{-1} \mathbf{b}_0}}&\frac{\mathbf{a}_0^T \hat{\Sigma}^{-1} \mathbf{W}_n^T (\mathbf{P}_{\mathbf{Z}_n} - \mathbf{P}_{\mathbf{Z}_B}) \mathbf{W}_n \hat{\Sigma}^{-1} \mathbf{a}_0 }{\mathbf{a}_0^T \hat{\Sigma}^{-1} \mathbf{a}_0}
\end{pmatrix}
\end{align*}
Then, the conditional likelihood ratio test \citep{moreira_conditional_2003} is 
\begin{align}
\text{\CLR}(\beta_0,B) &= \frac{1}{2} \left\{Q_{11}(\beta_0,B) - Q_{22}(\beta_0,B) \right\} \label{eq:CLR} \\
&+ \frac{1}{2}\sqrt{\{Q_{11}(\beta_0,B) + Q_{22}(\beta_0,B)\}^2 - 4\{Q_{11}(\beta_0,B) Q_{22}(\beta_0,B) - Q_{12}^2(\beta_0,B)\}} \nonumber
\end{align}
The null $H_0: \beta^* = \beta_0$ for the conditional likelihood ratio test is rejected at level $\alpha$ when $\text{\CLR}(\beta_0,B) \geq q_{1-\alpha}^{CLR}$ where $q_{1-\alpha}^{CLR}$ is the $1 -\alpha$ quantile of a conditional null distribution dependent on $Q_{22}(\beta_0,B)$ (see \citet{andrews_optimal_2006} for computing the exact quantile). 

The Anderson-Rubin test and the conditional likelihood ratio test, both of which are robust to weak instruments, have characteristics that are unique to each test. \citep{staiger_instrumental_1997, stock_survey_2002, moreira_conditional_2003, dufour_identification_2003, andrews_optimal_2006}. There is no uniformly most powerful test among the two tests, but \citet{andrews_optimal_2006} and \citet{mikusheva_robust_2010} suggest using \eqref{eq:CLR} due to its generally favorable power compared to the Anderson-Rubin test in most cases when weak instruments are present. However, the Anderson-Rubin test has the unique feature that it can be used as a pretest to check whether the candidate subset of instruments $B$ contains all the invalid instruments. Also, between the two tests, the Anderson-Rubin test is the simplest in that it can be written as a standard F-test in regression. In addition, the conditional likelihood ratio test requires an assumption that the exposure, $D_i$, is linearly related to the instruments $\mathbf{Z}_{i\cdot}$; the Anderson-Rubin test does not require this linearity assumption \citep{dufour_identification_2003}.

\section{Test Statistic for Theorem 2: The Sargan Test}
There are pretests in the instrumental variables literature that satisfy the conditions for Theorem 2. The most well-known pretest is the Sargan test for overidentification \citep{sargan_estimation_1958}, which tests, among other things, whether the instruments $B^C$ contain only valid instruments, $\bm{\pi}_{B^C}^* = 0$. The Sargan test is
\begin{equation} \label{eq:SarganTest}
\text{\SAR}(B) = \frac{\| (\mathbf{P}_{\mathbf{Z}_n} - \mathbf{P}_{\mathbf{Z}_B}) \hat{\bm{\epsilon}}^{TSLS(B)}\|_2^2}{\|\hat{\bm{\epsilon}}^{TSLS(B)}\|_2^2/n}
\end{equation}
The null hypothesis is rejected at level $\alpha_s$ when $\text{\SAR}(B)$ exceeds the $1-\alpha_s$ quantile of a Chi-square distribution with $c(B)-1$ degrees of freedom. Thus, if we use the Sargan test as a pretest for $C_{1-\alpha}'(\mathbf{Y}_n,\mathbf{D}_n,\mathbf{Z}_n)$, then $q_{1 - \alpha_s}$ in our pretest confidence interval would be the $1 - \alpha_s$ quantile of a Chi-square distribution with $c(B) - 1$ degrees of freedom and we would only proceed to construct a confidence interval with the test statistic $T(\beta_0,B)$ at $1 - \alpha_t$ if the null hypothesis is retained. Unfortunately, the null of the Sargan test can be misleading when weak instruments are present \citep{staiger_instrumental_1997} and to the best of our knowledge, pretests that are robust to weak instruments do not exist.

\section{Power Under Invalid Instruments} \label{sec:powerTheory}
\subsection{Anderson-Rubin Test} \label{sec:powerSetup}
To study power under invalid instruments, consider the following model 
\begin{subequations} \label{eq:powerSetupSupp}
	\begin{align}
	Y_i &= \mathbf{Z}_{i\cdot}^T \boldsymbol{\pi}^* + D_i \beta^* + \epsilon_i, \quad{} E(\epsilon_i, \xi_i | \mathbf{Z}_{i\cdot}) =0 \\
	D_i &= \mathbf{Z}_{i\cdot}^T \boldsymbol{\gamma}^* + \xi_i \\
	\begin{pmatrix} \epsilon_i \\ \xi_i \end{pmatrix} &\sim N\left[\begin{pmatrix} 0 \\ 0 \end{pmatrix},\begin{pmatrix} \sigma_2^2  & \rho \sigma_1 \sigma_2 \\ \rho \sigma_1 \sigma_2  & \sigma_1^2 \end{pmatrix}\right]
	\end{align}
\end{subequations}
The setup above is a special case of model \eqref{eq:model2} with two additional assumptions. First, the treatment variable $D_i$ is linearly associated to $\mathbf{Z}_{i\cdot}$. Second, the error terms are bivariate i.i.d. Normal with an unknown covariance matrix. Both the linearity of $D_i$ and the Normality assumption are common in the IV literature to theoretically study the property of tests \citep{staiger_instrumental_1997,andrews_performance_2007}. 

Under the model in \eqref{eq:powerSetupSupp}, we study whether a particular statistical test has the power to detect the alternative $H_a: \beta^* \neq \beta_0; \boldsymbol{\pi}_{B^C}^* \neq 0$ under the null $H_0: \beta^* = \beta_0; \boldsymbol{\pi}_{B^C}^* = \mathbf{0}$ for a given set $B$. The first alternative $\beta^* \neq \beta_0$ measures whether the treatment is away from the null value $\beta_0$. The second alternative $\boldsymbol{\pi}_{B^C}^* \neq \mathbf{0}$  measures whether a wrong subset $B$ is in the union of $C_{1-\alpha}(\mathbf{Y}_n,\mathbf{D}_n,\mathbf{Z}_n)$. A wrong subset $B$ is where $B$ does not contain all the invalid instruments so that $\boldsymbol{\pi}_{B^C}^* \neq \mathbf{0}$. If a test has good power against the second alternative, we would be less likely to take unions over wrong $B$s in $C_{1-\alpha}(\mathbf{Y}_n,\mathbf{D}_n,\mathbf{Z}_n)$ and our union confidence interval will tend to be shorter.

Under the framework introduced in Section \ref{sec:powerSetup}, Theorem \ref{thm:powerAR} shows the power of the Anderson-Rubin test under invalid instruments.
\begin{theorem} \label{thm:powerAR} Consider any set $B \subset \{1,\ldots,L\}$ with $c(B) = \bar{s} - 1$ and model \eqref{eq:powerSetupSupp}. Suppose we are testing the null hypothesis $H_0: \beta^* = \beta_0, \boldsymbol{\pi}_{B^C}^* = \mathbf{0}$ against the alternative $H_a: \beta^* \neq \beta_0$ or $\boldsymbol{\pi}_{B^C}^* \neq \mathbf{0}$. The exact power of $\text{AR}(\beta_0,B)$ in \eqref{eq:AR} is 
	\begin{equation} \label{eq:powerAR}
	\text{pr}\{\AR(\beta_0,B) \geq q_{1-\alpha}^{F_{L - c(B),n-L,0}} \} = 1 - F_{L - c(B), n - L,\eta(B)}(q_{1-\alpha}^{F_{L - c(B),n-L,0}})
	\end{equation}
	where $q_{1-\alpha}^{F_{L - c(B),n-L,\eta(B)}}$ is the $1 -\alpha$ quantile of the non-central F distribution with degrees of freedom $L - c(B)$, $n - L$ and non-centrality parameter $\eta(B) = ||\mathbf{R}_{\mathbf{Z}_B} \mathbf{Z}_{B^C}(\boldsymbol{\pi}_{B^C}^* + \boldsymbol{\gamma}_{B^C}^* (\beta^* - \beta_0)) ||_2^2$.
\end{theorem}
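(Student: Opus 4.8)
The plan is to reduce $\mathbf{Y}_n - \mathbf{D}_n\beta_0$ to an ordinary Gaussian linear regression and then recognize $\AR(\beta_0,B)$ as a ratio of two independent scaled chi-square quadratic forms, i.e.\ a non-central $F$ statistic. First I would substitute the structural equations of \eqref{eq:powerSetupSupp}. Writing $\mathbf{D}_n = \mathbf{Z}_n\boldsymbol{\gamma}^* + \boldsymbol{\xi}$, a short computation gives
\begin{equation*}
\mathbf{Y}_n - \mathbf{D}_n\beta_0 = \mathbf{Z}_n\mathbf{c} + \mathbf{u}, \qquad \mathbf{c} = \boldsymbol{\pi}^* + \boldsymbol{\gamma}^*(\beta^* - \beta_0), \qquad \mathbf{u} = \boldsymbol{\epsilon} + (\beta^* - \beta_0)\boldsymbol{\xi}.
\end{equation*}
Since $(\epsilon_i,\xi_i)$ are i.i.d.\ bivariate Normal, each $u_i$ is i.i.d.\ $N(0,\sigma_u^2)$ with $\sigma_u^2 = \sigma_2^2 + 2(\beta^*-\beta_0)\rho\sigma_1\sigma_2 + (\beta^*-\beta_0)^2\sigma_1^2$, so $\mathbf{u}\sim N(\mathbf{0},\sigma_u^2\mathbf{I}_n)$ and $\mathbf{Y}_n - \mathbf{D}_n\beta_0$ is Gaussian with mean $\mathbf{Z}_n\mathbf{c}$.

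For the numerator I would use that $\mathbf{M} := \mathbf{P}_{\mathbf{Z}_n} - \mathbf{P}_{\mathbf{Z}_B}$ is an orthogonal projection of rank $L - c(B)$. The key algebra is that $\mathbf{M}$ annihilates the column space of $\mathbf{Z}_B$: since $\mathrm{col}(\mathbf{Z}_B)\subseteq\mathrm{col}(\mathbf{Z}_n)$ we have $\mathbf{P}_{\mathbf{Z}_n}\mathbf{Z}_B = \mathbf{P}_{\mathbf{Z}_B}\mathbf{Z}_B = \mathbf{Z}_B$, while $\mathbf{P}_{\mathbf{Z}_n}\mathbf{Z}_{B^C} = \mathbf{Z}_{B^C}$. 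Splitting $\mathbf{Z}_n\mathbf{c} = \mathbf{Z}_B\mathbf{c}_B + \mathbf{Z}_{B^C}\mathbf{c}_{B^C}$ then gives $\mathbf{M}\mathbf{Z}_n\mathbf{c} = \mathbf{R}_{\mathbf{Z}_B}\mathbf{Z}_{B^C}\mathbf{c}_{B^C}$, whose squared norm is exactly $\eta(B)$. By the standard quadratic-form theory for Gaussian vectors, $(\mathbf{Y}_n - \mathbf{D}_n\beta_0)^\T\mathbf{M}(\mathbf{Y}_n - \mathbf{D}_n\beta_0)/\sigma_u^2$ is non-central $\chi^2_{L-c(B)}$ with non-centrality $\eta(B)/\sigma_u^2$.

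For the denominator, $\mathbf{R}_{\mathbf{Z}_n}$ is an orthogonal projection of rank $n - L$ that annihilates the \emph{entire} mean, $\mathbf{R}_{\mathbf{Z}_n}\mathbf{Z}_n\mathbf{c} = \mathbf{0}$, so $(\mathbf{Y}_n - \mathbf{D}_n\beta_0)^\T\mathbf{R}_{\mathbf{Z}_n}(\mathbf{Y}_n - \mathbf{D}_n\beta_0)/\sigma_u^2$ is central $\chi^2_{n-L}$. Independence of numerator and denominator follows from $\mathbf{M}\mathbf{R}_{\mathbf{Z}_n} = \mathbf{0}$, which I would verify using $\mathbf{P}_{\mathbf{Z}_B}\mathbf{P}_{\mathbf{Z}_n} = \mathbf{P}_{\mathbf{Z}_B}$ (again from the nesting of column spaces); the two forms live on orthogonal subspaces. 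Forming the ratio, the common $\sigma_u^2$ cancels and $\AR(\beta_0,B)$ is non-central $F_{L-c(B),\,n-L}$. The power is therefore $1 - F_{L-c(B),n-L,\eta(B)}(q)$ evaluated at the critical value $q = q_{1-\alpha}^{F_{L-c(B),n-L,0}}$, the \emph{central}-$F$ quantile, which is the correct cutoff because under the complete null $\beta^*=\beta_0,\,\boldsymbol{\pi}_{B^C}^*=\mathbf{0}$ one has $\mathbf{c}_{B^C}=\mathbf{0}$ and $\eta(B)=0$.

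The main obstacle is bookkeeping rather than a deep difficulty: the crux is the projection identity $\mathbf{M}\mathbf{Z}_n\mathbf{c} = \mathbf{R}_{\mathbf{Z}_B}\mathbf{Z}_{B^C}\mathbf{c}_{B^C}$ together with the orthogonality $\mathbf{M}\mathbf{R}_{\mathbf{Z}_n}=\mathbf{0}$, both of which hinge on the nesting $\mathrm{col}(\mathbf{Z}_B)\subseteq\mathrm{col}(\mathbf{Z}_n)$. One must also track the reduced-form variance $\sigma_u^2$, which cancels in the $F$-ratio and so leaves the noncentrality equal to $\eta(B)$ only up to the normalization by $\sigma_u^2$; stating the result with $\sigma_u^2$ absorbed (or normalized to one) reconciles it with the displayed $\eta(B)$.
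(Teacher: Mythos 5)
Your proposal is correct and takes essentially the same route as the paper's proof, which invokes Cochran's theorem to obtain exactly your decomposition: the reduced form $\mathbf{Y}_n - \mathbf{D}_n\beta_0 = \mathbf{Z}_n\{\boldsymbol{\pi}^* + \boldsymbol{\gamma}^*(\beta^*-\beta_0)\} + (\beta^*-\beta_0)\boldsymbol{\xi} + \boldsymbol{\epsilon}$, a numerator that is non-central $\chi^2_{L-c(B)}$ after scaling by $\tilde{\sigma}^2 = \sigma_u^2$, a central $\chi^2_{n-L}$ denominator, independence of the two, and the same projection identity $(\mathbf{P}_{\mathbf{Z}_n}-\mathbf{P}_{\mathbf{Z}_B})\mathbf{Z}_n = \mathbf{R}_{\mathbf{Z}_B}\mathbf{Z}_n$ to identify the non-centrality with $\eta(B)$. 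Your closing remark that the non-centrality is really $\eta(B)/\sigma_u^2$ unless the variance is normalized away is a genuine subtlety that the paper's own proof also elides, so you are, if anything, slightly more careful on that point.
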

Theorem \ref{thm:powerAR} generalizes the power of the Anderson-Rubin test when some instruments are invalid. Specifically, suppose $B$ contains all the invalid instruments so that $B^* \subseteq B $ and $\boldsymbol{\pi}_{B^C}^* = 0$. Then, the non-centrality parameter $\eta(B)$ in Theorem \ref{thm:powerAR} would only consist of instruments' strength, specifically $||\mathbf{R}_{\mathbf{Z}_B} \mathbf{Z}_{B^C} \boldsymbol{\gamma}_{B^C}^* (\beta^* - \beta_0)) ||_2^2$, and we would return to the usual power of the Anderson-Rubin test with all valid instruments. On the other hand, if $B$ does not contain invalid instruments so that $B^* \not\subseteq B$ and $\boldsymbol{\pi}_{B^C}^* \neq \mathbf{0}$, the Anderson-Rubin test will still have power to reject $H_0$, even if $\beta^* = \beta_0$. In other words, the Anderson-Rubin test will reject $H_0$ and will generally have shorter intervals when $B$ does not contain all the invalid instruments. Also, Theorem \ref{thm:powerAR} shows that the Anderson-Rubin has no power when $\boldsymbol{\pi}_{B^C}* +  \boldsymbol{\gamma}_{B^C}^* (\beta^* - \beta_0) = 0$; a similar result was shown in \citet{kadane_comment_1977} and \citet{small_sensitivity_2007} when studying the power of overidentifying restrictions tests. Finally, we note that our power formula is exact and does not invoke asymptotics. 

\begin{proof}[Proof of Theorem~\ref{thm:powerAR}] 
By Cochran's theorem, (i) the numerator and the denominator of \eqref{eq:AR} are independent, (ii) the denominator, scaled by $\tilde{\sigma}^2 = \sigma_2^2 + (\beta^* - \beta_0)^2 \sigma_1^2 + 2(\beta^* - \beta_0)\rho \sigma_1 \sigma_2$, is a central chi-square with $n-L$ degrees of freedom, i.e.
\[
\frac{(\mathbf{Y}_n - \mathbf{D}_n \beta_0)^T \mathbf{R}_{\mathbf{Z}_n} (\mathbf{Y}_n - \mathbf{D}_n \beta_0)}{\tilde{\sigma}^2 (n - L)} = \frac{\{(\beta^* - \beta_0) \bm{\xi} + \bm{\epsilon}\}^T \mathbf{R}_{\mathbf{Z}_n} \{(\beta^* - \beta_0) \bm{\xi} + \bm{\epsilon}\}}{\tilde{\sigma}^2 (n - L)} \sim \chi_{n - L,0}^2
\]
and (iii), the numerator, scaled by $\tilde{\sigma}^2$, is a non-central chi-square distribution with non-centrality $\eta(B)$
\[
 \frac{ (\mathbf{Y}_n - \mathbf{D}_n \beta_0)^T (\mathbf{P}_{\mathbf{Z}_n} - \mathbf{P}_{\mathbf{Z}_B})(\mathbf{Y}_n  - \mathbf{D}_n \beta_0)}{\tilde{\sigma}^2 \{L - c(B)\}} \sim \chi_{L - c(B),\eta(B)}^2, \eta(B) = \| (\mathbf{P}_{\mathbf{Z}_n} - \mathbf{P}_{\mathbf{Z}_B}) \mathbf{Z}_n\{ \boldsymbol{\pi}^* + \boldsymbol{\gamma}^*(\beta^* - \beta_0)\} \|_2^2
\]
Since $(\mathbf{P}_{\mathbf{Z}_n} - \mathbf{P}_{\mathbf{Z}_B})\mathbf{Z}_n$ can be rewritten as the residual projection of $\mathbf{Z}_n$ onto $\mathbf{Z}_{B}$, i.e.
\[
(\mathbf{P}_{\mathbf{Z}_n}  - \mathbf{P}_{\mathbf{Z}_B})\mathbf{Z}_n = \mathbf{Z}_n - [\mathbf{Z}_B : \mathbf{P}_{\mathbf{Z}_B} \mathbf{Z}_{B^C}] = [0 : \mathbf{Z}_{B^C} - \mathbf{P}_{\mathbf{Z}_B} \mathbf{Z}_{B^C}] =[0 : \mathbf{R}_{\mathbf{Z}_B} \mathbf{Z}_{B^C}] = \mathbf{R}_{\mathbf{Z}_B} \mathbf{Z}_n
\]
the $\text{\AR}(\beta_0,B)$ is a non-central F distribution with degrees of freedom. 
\end{proof}

\subsection{Two-stage Least Squares} \label{sec:asymptoticPower}
This section derives power  of the two-stage least squares test statistic under invalid instruments by using local asymptotics. We also evaluate the accuracy of the asymptotic approximation to power in Section \ref{sec:accuracy}. 

As before, consider the setup in equation \eqref{eq:powerSetupSupp}. Given a subset $B \subset \{1,\ldots,L\}$, we consider the the null hypothesis $H_0: \beta^* = \beta_0; \bm{\pi}_{B^C}^* = \mathbf{0}$ versus the local alternative $H_a: \beta^*  = \beta_0 + \Delta_1/\sqrt{n}; \bm{\pi}_{B^C}^* = \bm{\Delta}_2 /\sqrt{n}$ where $\Delta_{1} \neq 0$, $\bm{\Delta}_2 \neq \mathbf{0}$. This differs from the alternative in Section \ref{sec:powerSetup} because it is  $\sqrt{n}$ ``local'' to the null hypothesis; see Section \ref{sec:accuracy} and \citet{lehmann_elements_2004} for details on local alternative hypothesis. We also note that this type of asymptotics, specifically the alternative $\bm{\pi}_{B^C}^* = \bm{\Delta}_2 / \sqrt{n}$ was use in \citet{staiger_instrumental_1997} to characterize the behavior of the Sargan test under weak instruments. 

Theorem \ref{thm:powerTSLS} states the asymptotic power of the two-stage least squares test under invalid instruments.
\begin{theorem} \label{thm:powerTSLS} Consider any set $B \subset \{1,\ldots,L\}$ with $c(B) = \bar{s} - 1$ and model \eqref{eq:powerSetupSupp}. Suppose we are testing the null hypothesis $H_0: \beta^* = \beta_0, \bm{\pi}_{B^C}^* = \mathbf{0}$ against the local alternative $H_a: \beta^*  = \beta_0 + \Delta_1/\sqrt{n}$, $\bm{\pi}_{B^C}^* = \bm{\Delta}_2 /\sqrt{n}$ where $\Delta_{1} \neq 0$, $\bm{\Delta}_2 \neq \mathbf{0}$. Let $\mu(B) = \lim_{n \to \infty} {\bm{\gamma}_{B^C}^*}^T \mathbf{Z}_{B^C}^T \mathbf{R}_{\mathbf{Z}_B} \mathbf{Z}_{B^C} \bm{\gamma}_{B^C}^*/n \neq 0$ and $\kappa(B) = \lim_{n \to \infty} {\bm{\gamma}_{B^C}^*}^T $ $\mathbf{Z}_{B^C}^T \mathbf{R}_{\mathbf{Z}_B} \mathbf{Z}_{B^C} \bm{\Delta}_{2} /n$. Then, as $n\to \infty$, the asymptotic power of $\text{\TSLS}(\beta_0,B)$ in \eqref{eq:tsls} is
\begin{align*}
\text{pr}(|TSLS(\beta_0,B)|\geq z_{1-\alpha/2}) \to& 1 - \Phi\left\{z_{1-\alpha/2} - \left(\frac{\Delta_1}{\sqrt{\sigma_{2}^2/\mu(B)}} + \frac{\kappa(B)}{\sqrt{\sigma_{2}^2 \mu(B)}}\right)\right\} \\
&+ \Phi\left\{-z_{1-\alpha/2} - \left(\frac{\Delta_1}{\sqrt{\sigma_{2}^2/\mu(B)}} + \frac{\kappa(B)}{\sqrt{\sigma_{2}^2\mu(B) }}\right)\right\}
\end{align*}
where $\Phi(\cdot)$ is the cumulative distribution function of the Normal distribution.
\end{theorem}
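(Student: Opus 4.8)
The plan is to reduce the studentized two-stage least squares statistic to a single scalar ratio, establish a central limit theorem for its numerator under the $\sqrt{n}$-local alternative, and then read off the power from the limiting $N(\delta,1)$ law via the standard two-sided formula. First I would obtain a closed form for $\hat{\beta}^{TSLS(B)}$. Since $\mathbf{Z}_B$ lies in the column space of $\mathbf{Z}_n$, we have $\mathbf{P}_{\mathbf{Z}_n}\mathbf{P}_{\mathbf{Z}_B} = \mathbf{P}_{\mathbf{Z}_B}$, hence the identity $\mathbf{P}_{\mathbf{Z}_n}\mathbf{R}_{\mathbf{Z}_B}\mathbf{P}_{\mathbf{Z}_n} = \mathbf{P}_{\mathbf{Z}_n} - \mathbf{P}_{\mathbf{Z}_B}$. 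Partialling $\mathbf{Z}_B$ out of the objective in \eqref{eq:tsls_est} (Frisch--Waugh--Lovell) then gives
\[
\hat{\beta}^{TSLS(B)} - \beta_0 = \frac{\mathbf{D}_n^T(\mathbf{P}_{\mathbf{Z}_n} - \mathbf{P}_{\mathbf{Z}_B})(\mathbf{Y}_n - \mathbf{D}_n\beta_0)}{\mathbf{D}_n^T(\mathbf{P}_{\mathbf{Z}_n} - \mathbf{P}_{\mathbf{Z}_B})\mathbf{D}_n},
\]
and the same identity shows the standard-error denominator satisfies $\|\mathbf{R}_{\mathbf{Z}_B}\mathbf{P}_{\mathbf{Z}_n}\mathbf{D}_n\|_2^2 = \mathbf{D}_n^T(\mathbf{P}_{\mathbf{Z}_n} - \mathbf{P}_{\mathbf{Z}_B})\mathbf{D}_n$. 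Writing $\hat\sigma^2 = \|\hat{\boldsymbol{\epsilon}}^{TSLS(B)}\|_2^2/(n - c(B) - 1)$, these combine to
\[
\TSLS(\beta_0,B) = \frac{1}{\hat\sigma}\cdot\frac{\mathbf{D}_n^T(\mathbf{P}_{\mathbf{Z}_n} - \mathbf{P}_{\mathbf{Z}_B})(\mathbf{Y}_n - \mathbf{D}_n\beta_0)}{\sqrt{\mathbf{D}_n^T(\mathbf{P}_{\mathbf{Z}_n} - \mathbf{P}_{\mathbf{Z}_B})\mathbf{D}_n}}.
\]

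Next I would substitute $\mathbf{Y}_n - \mathbf{D}_n\beta_0 = \mathbf{Z}_B\boldsymbol{\pi}_B^* + \mathbf{Z}_{B^C}\boldsymbol{\pi}_{B^C}^* + \mathbf{D}_n(\beta^* - \beta_0) + \boldsymbol{\epsilon}$ and $\mathbf{D}_n = \mathbf{Z}_n\boldsymbol{\gamma}^* + \boldsymbol{\xi}$ into this ratio. Using $(\mathbf{P}_{\mathbf{Z}_n} - \mathbf{P}_{\mathbf{Z}_B})\mathbf{Z}_B = 0$ and $(\mathbf{P}_{\mathbf{Z}_n} - \mathbf{P}_{\mathbf{Z}_B})\mathbf{Z}_{B^C} = \mathbf{R}_{\mathbf{Z}_B}\mathbf{Z}_{B^C}$, the invalid-instrument term $\mathbf{Z}_B\boldsymbol{\pi}_B^*$ drops out and the numerator splits into (i) the drift term $(\beta^*-\beta_0)\mathbf{D}_n^T(\mathbf{P}_{\mathbf{Z}_n} - \mathbf{P}_{\mathbf{Z}_B})\mathbf{D}_n$, (ii) the drift term ${\boldsymbol{\gamma}_{B^C}^*}^T\mathbf{Z}_{B^C}^T\mathbf{R}_{\mathbf{Z}_B}\mathbf{Z}_{B^C}\boldsymbol{\pi}_{B^C}^*$, (iii) the dominant stochastic term $(\mathbf{R}_{\mathbf{Z}_B}\mathbf{Z}_{B^C}\boldsymbol{\gamma}_{B^C}^*)^T\boldsymbol{\epsilon}$, and (iv) remainders $\boldsymbol{\xi}^T\mathbf{R}_{\mathbf{Z}_B}\mathbf{Z}_{B^C}\boldsymbol{\pi}_{B^C}^*$ and $\boldsymbol{\xi}^T(\mathbf{P}_{\mathbf{Z}_n} - \mathbf{P}_{\mathbf{Z}_B})\boldsymbol{\epsilon}$. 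Inserting $\beta^* - \beta_0 = \Delta_1/\sqrt{n}$ and $\boldsymbol{\pi}_{B^C}^* = \boldsymbol{\Delta}_2/\sqrt{n}$ and dividing by $\sqrt{\mathbf{D}_n^T(\mathbf{P}_{\mathbf{Z}_n} - \mathbf{P}_{\mathbf{Z}_B})\mathbf{D}_n} \approx \sqrt{n\mu(B)}$, a term-by-term order count shows (i) and (ii) converge to the constants $\Delta_1\sqrt{\mu(B)}$ and $\kappa(B)/\sqrt{\mu(B)}$, that (iii) is exactly Gaussian with variance $\to\sigma_2^2$ (because $\|\mathbf{R}_{\mathbf{Z}_B}\mathbf{Z}_{B^C}\boldsymbol{\gamma}_{B^C}^*\|_2^2/n \to \mu(B)$ and $\mathrm{Var}(\epsilon_i)=\sigma_2^2$), and that the remainders in (iv) are $O_p(n^{-1/2})$ and vanish. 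The structural fact making (iv) negligible is that $\mathbf{P}_{\mathbf{Z}_n} - \mathbf{P}_{\mathbf{Z}_B}$ has fixed rank $L - c(B)$, so quadratic forms in the noise are $O_p(1)$.

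Finally I would confirm the two normalizers converge: $\mathbf{D}_n^T(\mathbf{P}_{\mathbf{Z}_n} - \mathbf{P}_{\mathbf{Z}_B})\mathbf{D}_n/n \to \mu(B)$ (again because the noise contribution has fixed rank) and $\hat\sigma^2 \to \sigma_2^2$, the latter from consistency of $\hat{\beta}^{TSLS(B)}$ and $\hat{\boldsymbol{\pi}}_B^{TSLS(B)}$ together with $\boldsymbol{\pi}_{B^C}^* \to \mathbf{0}$, so that the structural residuals behave like $\boldsymbol{\epsilon}$. Slutsky's theorem then yields $\TSLS(\beta_0,B) \overset{d}{\to} N(\delta,1)$ with $\delta = \Delta_1\sqrt{\mu(B)}/\sigma_2 + \kappa(B)/(\sigma_2\sqrt{\mu(B)})$, which equals the displayed non-centrality after rewriting $\sqrt{\mu(B)}/\sigma_2 = 1/\sqrt{\sigma_2^2/\mu(B)}$ and $1/(\sigma_2\sqrt{\mu(B)}) = 1/\sqrt{\sigma_2^2\mu(B)}$. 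The two-sided power is then $\mathrm{pr}(|N(\delta,1)| \geq z_{1-\alpha/2}) = 1 - \Phi(z_{1-\alpha/2} - \delta) + \Phi(-z_{1-\alpha/2} - \delta)$, as claimed. The main obstacle is the order bookkeeping in the numerator decomposition: one must verify that (iii) is the single dominant random term while confirming that the cross-term (iv) between $\boldsymbol{\xi}$ and $\boldsymbol{\epsilon}$, which carries the endogeneity correlation $\rho$, is asymptotically negligible; a secondary subtlety is justifying $\hat\sigma^2 \to \sigma_2^2$ under the drifting alternative, where consistency of the two-stage least squares estimates must be argued despite $\boldsymbol{\pi}_{B^C}^* \neq \mathbf{0}$ at finite $n$.
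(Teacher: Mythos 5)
Your proposal is correct and follows essentially the same route as the paper's proof: the same closed-form reduction of $\hat{\beta}^{TSLS(B)}$ (your $\mathbf{P}_{\mathbf{Z}_n}-\mathbf{P}_{\mathbf{Z}_B}$ is exactly the paper's projection $\mathbf{R}_{\mathbf{Z}_B}\mathbf{Z}_{B^C}(\mathbf{Z}_{B^C}^T\mathbf{R}_{\mathbf{Z}_B}\mathbf{Z}_{B^C})^{-1}\mathbf{Z}_{B^C}^T\mathbf{R}_{\mathbf{Z}_B}$), the same law-of-large-numbers and central-limit-theorem limits yielding the drift $\Delta_1\sqrt{\mu(B)}+\kappa(B)/\sqrt{\mu(B)}$ and variance $\sigma_2^2\mu(B)$, the same consistency argument for $\hat{\sigma}^2$ under both hypotheses, and the same Slutsky conclusion. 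Your explicit bookkeeping of the endogeneity cross-term $\bm{\xi}^T(\mathbf{P}_{\mathbf{Z}_n}-\mathbf{P}_{\mathbf{Z}_B})\bm{\epsilon}$ via the fixed-rank argument is a slightly more careful spelling-out of a step the paper folds into its central-limit-theorem claim, but it is not a different method.
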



In Theorem \ref{thm:powerTSLS}, the term $\mu(B)$ in the power formula represents the concentration parameter of the instruments in set $B^C$ (up to a scaling by the variance) \citep{stock_survey_2002}. The term $\kappa(B)$ represents the interaction between instruments' strength (via $\boldsymbol{\gamma}_{B^C}^*$) and instruments' invalidity (via $\bm{\Delta}_2$) for the instruments in set $B^C$. Note that if all the instruments in $B^C$ are actually valid, $\kappa(B) = 0$ and we would end up with the usual power formula for two-stage least squares. The presence of invalid instruments essentially shifts the usual power curve by a factor $\kappa(B) / \sqrt{\sigma_{2}^2  \mu(B)}$. But, if the instruments are stronger than the effects of the invalid instruments so that $\kappa(B) / \sqrt{\mu(B)} \approx 0$, then you would have the usual power formula for the two-stage least squares that assumes all the instruments are valid. Hence, for two-stage least squares, having strong instruments can mitigate the effect of invalid instruments.

Between the power of the Anderson-Rubin test and the two-stage least squares under invalid instruments, the Anderson-Rubin test does not rely on the variance $\sigma_2^2$. Also, geometrically speaking, the power of the Anderson-Rubin test depends ``spherically''  (i.e. in $\ell_2$ ball) via the term $\| \mathbf{R}_{\mathbf{Z}_B} \mathbf{Z}_{B^C} \bm{\pi}_{B^C}^*\|_2^2$ while the power of the two-stage least squares depends linearly by $\kappa(B)/\sqrt{\mu(B)}$. Intuitively, this implies that the Anderson-Rubin should be able to detect invalid instruments in $B^C$ across all directions of $\bm{\pi}_{B^C}^*$ while the two-stage least squares will only be able to detect invalid instruments in $B^C$ only along certain directions. 

\begin{proof}[Proof of Theorem~\ref{thm:powerTSLS}]
We analyze the test statistic $\text{\TSLS}(\beta_0,B)$ by looking at the numerator and the denominator separately. First, the two-stage least squares estimator, which makes up the numerator of $\text{\TSLS}(\beta_0,B)$, can be written as follows.
\begin{align*}
&\hat{\beta}^{TSLS(B)} \\
=& \frac{\mathbf{D}_n^T \mathbf{R}_{\mathbf{Z}_B} \mathbf{Z}_{B^C} (\mathbf{Z}_{B^C}^T \mathbf{R}_{\mathbf{Z}_B} \mathbf{Z}_{B^C})^{-1} \mathbf{Z}_{B^C}^T \mathbf{R}_{\mathbf{Z}_B} \mathbf{Y}_n}{\mathbf{D}_n^T \mathbf{R}_{\mathbf{Z}_B} \mathbf{Z}_{B^C} (\mathbf{Z}_{B^C}^T \mathbf{R}_{\mathbf{Z}_B} \mathbf{Z}_{B^C})^{-1} \mathbf{Z}_{B^C}^T \mathbf{R}_{\mathbf{Z}_B} \mathbf{D}_n}   \\
=& \beta^* + \frac{\mathbf{D}_n^T \mathbf{R}_{\mathbf{Z}_B} \mathbf{Z}_{B^C} \bm{\pi}_{B^C}^*}{\mathbf{D}_n^T \mathbf{R}_{\mathbf{Z}_B} \mathbf{Z}_{B^C} (\mathbf{Z}_{B^C}^T \mathbf{R}_{\mathbf{Z}_B} \mathbf{Z}_{B^C})^{-1} \mathbf{Z}_{B^C}^T \mathbf{R}_{\mathbf{Z}_B} \mathbf{D}_n} + \frac{\mathbf{D}_n^T \mathbf{R}_{\mathbf{Z}_B} \mathbf{Z}_{B^C} (\mathbf{Z}_{B^C}^T \mathbf{R}_{\mathbf{Z}_B} \mathbf{Z}_{B^C})^{-1} \mathbf{Z}_{B^C}^T \mathbf{R}_{\mathbf{Z}_B} \bm{\epsilon}}{\mathbf{D}_n^T \mathbf{R}_{\mathbf{Z}_B} \mathbf{Z}_{B^C} (\mathbf{Z}_{B^C}^T \mathbf{R}_{\mathbf{Z}_B} \mathbf{Z}_{B^C})^{-1} \mathbf{Z}_{B^C}^T \mathbf{R}_{\mathbf{Z}_B} \mathbf{D}_n}
\end{align*}
Some algebra can show that the terms in the denominator can be reduced to 
\[
\mathbf{D}_n^T \mathbf{R}_{\mathbf{Z}_B} \mathbf{Z}_{B^C} (\mathbf{Z}_{B^C}^T \mathbf{R}_{\mathbf{Z}_B} \mathbf{Z}_{B^C})^{-1} \mathbf{Z}_{B^C}^T \mathbf{R}_{\mathbf{Z}_B} \mathbf{D}_n = \| \mathbf{R}_{\mathbf{Z}_B} \mathbf{P}_\mathbf{Z} \mathbf{D}_n \|_2^2
\]
Also, using standard asymptotics, we can arrive at the following limiting quantities
\begin{align*}
&\frac{1}{n} \mathbf{D}_n^T \mathbf{R}_{\mathbf{Z}_B} \mathbf{Z}_{B^C} (\mathbf{Z}_{B^C}^T \mathbf{R}_{\mathbf{Z}_B} \mathbf{Z}_{B^C})^{-1} \mathbf{Z}_{B^C}^T \mathbf{R}_{\mathbf{Z}_B} \mathbf{D}_n &&\to 
\mu(B) \neq 0 \\
&\frac{1}{\sqrt{n}} \mathbf{D}_n^T \mathbf{R}_{\mathbf{Z}_B} \mathbf{Z}_{B^C} \frac{\bm{\Delta}_2}{\sqrt{n}} &&\to
\kappa(B) \\
&\frac{1}{\sqrt{n}} \mathbf{D}_n^T \mathbf{R}_{\mathbf{Z}_B} \mathbf{Z}_{B^C} (\mathbf{Z}_{B^C}^T \mathbf{R}_{\mathbf{Z}_B} \mathbf{Z}_{B^C})^{-1} \mathbf{Z}_{B^C}^T \mathbf{R}_{Z_B}\bm{\epsilon} &&\to N(0,\sigma_{2}^2  \mu(B)) 
\end{align*}
where the first two relies on the law of large numbers and the last limit result uses a central limit theorem. For the denominator of $\text{\TSLS}(\beta_0,B)$, we can rewrite $\|\hat{\bm{\epsilon}}^{TSLS(B)} \|_2^2 / (n - c(B) - 1)$ as follows.
\[
\frac{\|\hat{\bm{\epsilon}}^{TSLS(B)} \|_2^2}{n - c(B) - 1} = \frac{\| \mathbf{R}_{\mathbf{Z}_B}(\mathbf{Y}_n - \mathbf{D}_n \hat{\beta}^{TSLS(B)}) \|_2^2}{n - c(B) - 1}  = \frac{ \|\mathbf{R}_{\mathbf{Z}_B} \mathbf{Z}_{B^C} \bm{\pi}_{B^C}^* + \mathbf{R}_{\mathbf{Z}_B} \mathbf{D}_n (\hat{\beta}^{TSLS(B)} - \beta^*) + \mathbf{R}_{\mathbf{Z}_B} \bm{\epsilon}\|_2^2 }{n - c(B) - 1}
\]
Under the null, $\bm{\pi}_{B^C}^* = 0$ and $\hat{\beta}^{TSLS(B)} - \beta^* \to 0$, leading to $\|\hat{\bm{\epsilon}}^{TSLS(B)}\|_2^2/(n - c(B) - 1) \to \sigma_{2}^2$. Under the alternative, $ \|\mathbf{R}_{\mathbf{Z}_B} \mathbf{Z}_{B^C} \bm{\pi}_{B^C}^* \|_2^2 / (n - c(B) - 1) = \|\mathbf{R}_{\mathbf{Z}_B} \mathbf{Z}_{B^C} \bm{\Delta}_{2} \|_2^2/(n (n - c(B) - 1)) \to 0$ and $\beta^{TSLS(B)} - \beta^* \to 0$, again leading to $\|\hat{\bm{\epsilon}}^{TSLS(B)}\|_2^2/(n - c(B) - 1) \to \sigma_{2}^2$. Then, under both the null and the alternative, we have
\begin{align*}
&\sqrt{n - c(B) - 1} \frac{\hat{\beta}^{TSLS(B)} - \beta^*}{\sqrt{\|\hat{\bm{\epsilon}}^{TSLS(B)}\|_2^2 /  \| \mathbf{R}_{\mathbf{Z}_B}\mathbf{P}_{\mathbf{Z}_n}\mathbf{D}_n \|_2^2}} \\
=& \frac{\sqrt{n}\left\{
 \frac{\mathbf{D}_n^T \mathbf{R}_{\mathbf{Z}_B} \mathbf{Z}_{B^C} \bm{\pi}_{B^C}^*/n}{\mathbf{D}_n^T \mathbf{R}_{\mathbf{Z}_B} \mathbf{Z}_{B^C} (\mathbf{Z}_{B^C}^T \mathbf{R}_{\mathbf{Z}_B} \mathbf{Z}_{B^C})^{-1} \mathbf{Z}_{B^C}^T \mathbf{R}_{\mathbf{Z}_B} \mathbf{D}_n/n} + \frac{\mathbf{D}_n^T \mathbf{R}_{\mathbf{Z}_B} \mathbf{Z}_{B^C} (\mathbf{Z}_{B^C}^T \mathbf{R}_{\mathbf{Z}_B} \mathbf{Z}_{B^C})^{-1} \mathbf{Z}_{B^C}^T \mathbf{R}_{\mathbf{Z}_B} \bm{\epsilon}/n}{\mathbf{D}_n^T \mathbf{R}_{\mathbf{Z}_B} \mathbf{Z}_{B^C} (\mathbf{Z}_{B^C}^T \mathbf{R}_{\mathbf{Z}_B} \mathbf{Z}_{B^C})^{-1} \mathbf{Z}_{B^C}^T \mathbf{R}_{\mathbf{Z}_B} \mathbf{D}_n/n}
\right\}}{\sqrt{ \{\|\hat{\bm{\epsilon}}^{TSLS(B)}\|_2^2 / (n - c(B) - 1)\}  / ( \| \mathbf{R}_{\mathbf{Z}_B}\mathbf{P}_{\mathbf{Z}_n}\mathbf{D}_n \|_2^2/n)}}\\
\to& N\left(\frac{\kappa(B)}{\sqrt{\mu(B) \sigma_2^2}},1\right) 
\end{align*}
Combining all of it together, the local power of $\text{\TSLS}(\beta_0,B)$ is
\begin{align*}
&\text{pr}\left\{ \sqrt{n - c(B) - 1} \left| \frac{\hat{\beta}^{TSLS(B)} - \beta_0}{\sqrt{\|\hat{\bm{\epsilon}}^{TSLS(B)}\|_2^2 / \| \mathbf{R}_{\mathbf{Z}_B} \mathbf{P}_{\mathbf{Z}_n}\mathbf{D}_n \|_2^2}} \right| \geq z_{1-\alpha/2} \right\} \\
=& \text{pr}\left\{ \sqrt{n - c(B) - 1} \frac{\hat{\beta}^{TSLS(B)} - \beta^* + \Delta_{1}/\sqrt{n}}{\sqrt{\|\hat{\bm{\epsilon}}^{TSLS(B)}\|_2^2 / \| \mathbf{R}_{\mathbf{Z}_B} \mathbf{P}_{\mathbf{Z}_n}\mathbf{D}_n \|_2^2}} \geq z_{1-\alpha/2} \right\} \\
&\quad{}+ \text{pr}\left\{ \sqrt{n - c(B) - 1} \frac{\hat{\beta}^{TSLS(B)} - \beta^* + \Delta_{1}/\sqrt{n}}{\sqrt{\|\hat{\bm{\epsilon}}^{TSLS(B)}\|_2^2 / \| \mathbf{R}_{\mathbf{Z}_B} \mathbf{P}_{\mathbf{Z}_n}\mathbf{D}_n \|_2^2}} \leq -z_{1-\alpha/2} \right\} \\
\to& 1 - \Phi\left\{z_{1-\alpha} - \left(\frac{\Delta_1}{\sqrt{\sigma_{2}^2/\mu(B)}} + \frac{\kappa(B)}{\sqrt{\sigma_{2}^2 \mu(B)}}\right)\right\} + \Phi\left\{-z_{1-\alpha} - \left(\frac{\Delta_1}{\sqrt{\sigma_{2}^2/\mu(B)}} + \frac{\kappa(B)}{\sqrt{\sigma_{2}^2\mu(B) }}\right)\right\}
\end{align*}
where $\Phi(\cdot)$ is the cumulative distribution function of the Normal distribution.
\end{proof}


\subsection{Accuracy of Asymptotic Power Formula Under Invalid Instruments}  \label{sec:accuracy}
In this section, we assess how accurate our asymptotic power framework in Theorem \ref{thm:powerTSLS} is in approximating finite-sample power of two-stage least squares under invalid instruments. Specifically, for each fixed sample size and a selected set of invalid instruments $B$, we will empirically estimate the power curve of the two-stage least squares under different alternatives of $\beta^*$ and $\bm{\pi}^*$. We will also compute the asymptotic power expected from theoretical calculations in Theorem \ref{thm:powerTSLS} under different alternatives of $\beta^*$ and $\bm{\pi}^*$. Afterwards, we will contrast the two power curves to see if the asymptotic power provides a decent approximation to the empirically generated power curve at each sample size $n$. We expect that as $n$ grows, the theoretical asymptotic power will match very closely with the empirically generated power curve for various alternatives and different selections of $B$. As we will see, our asymptotic framework provides a good approximation of power under invalid instruments even at $n = 250$.

The setup for our power curve comparisons are as follows. For the data generating model, we follow the model in Section \ref{sec:powerTheory}. For the parameters in the data generating model, we let $\epsilon_i$ and $\xi_i$ have mean zero, variance one, and covariance $0.8$. We assume $L = 10$ instruments are generated from a multivariate Normal with zero mean and identity covariance matrix. For instrument strength, we set $\gamma_j^* = 1$ for all $j = 1,\ldots,L$, which indicates strong instruments; note that since two-stage least squares behave well only under strong instruments, we will only compare power under this scenario. Among the 10 instruments, there are $s^{*} = 3$ invalid instruments. We vary $\boldsymbol{\pi}^*$ and $\beta^*$, which are the alternatives in our testing framework, $H_0: \beta^*=  \beta_0, \boldsymbol{\pi}_{B}^* = \mathbf{0}$ versus the alternative $H_a: \beta^* \neq \beta_0$, $\boldsymbol{\pi}_{B}^* \neq \mathbf{0}$. We compute power under $n = 50, 100, 250, 500, 1000$, and $5000$. We repeat the simulation $5000$ times.

The first simulation in Figure \ref{fig:powerApprox1} demonstrates power when $B = B^*$ and $\beta^* - \beta_0$ varies. Under $B = B^*$, one knows exactly which instruments are valid so that $\boldsymbol{\pi}_{B}^* = 0$ under the null and the alternative and the power of the two-stage least squares reduces to the usual power curve for two-stage least squares. Figure \ref{fig:powerApprox1} shows that the dotted lines, which are the asymptotic power curves from theory, are very close to the solid lines, which are the finite sample power curves, as the sample size increases. In fact, after $n \geq 100$, the theoretical power curve and the empirically generated power curve are nearly indistinguishable. The result in Figure \ref{fig:powerApprox1} suggests that the theory-based asymptotic power curve is a good approximation of two-stage least square's power even for relatively small sample size.

\begin{figure}[H]
\includegraphics[width=\textwidth]{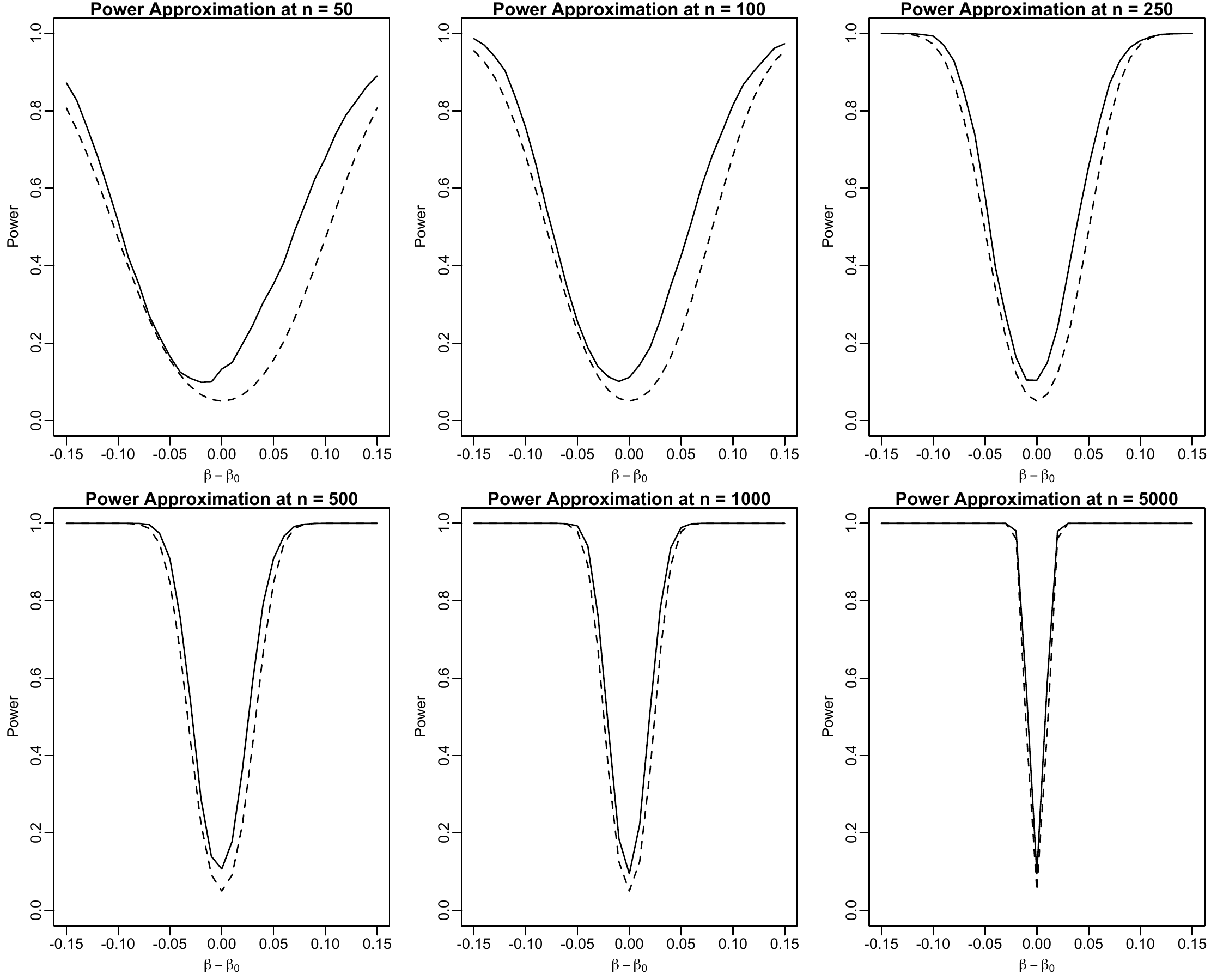}
\caption{Power curves for two-stage least squares under invalid instruments. The dotted line represents the asymptotic power curve and the solid line represents the simulated power. We set $B = \{1,2,3\}$ and $B^* = \{1,2,3\}$; $B$ contains all the invalid instruments.}
\label{fig:powerApprox1}
\end{figure}

The next simulation in Figure \ref{fig:powerApprox2} demonstrates power when $B$ doesn't contain all the invalid instruments. Specifically, we set $B = \{1,2\}$ and $\pi^* = (1,1,1,0,\ldots,0)$ so that $B^* = \{1,2,3\}$ and carry out the simulations. Figure \ref{fig:powerApprox1} shows, again, the asymptotic power curve matches the empirical power curve as sample size grows. After $n \geq 100$, the asymptotic power approximates the finite-sample behavior of the two-stage least squares under invalid instruments. We also observe similar behavior when we set $\boldsymbol{\pi}^* = (2,2,2,0,\ldots,0)$, $\boldsymbol{\pi}^* =(0.5,0.5,0.5,0,\ldots,0)$, $\boldsymbol{\pi}^* = (0.5,0.5,2,0,\ldots,0)$, and $\boldsymbol{\pi}^* = (2,2,0.5,0,\ldots,0)$.

\begin{figure}[H]
\includegraphics[width=\textwidth]{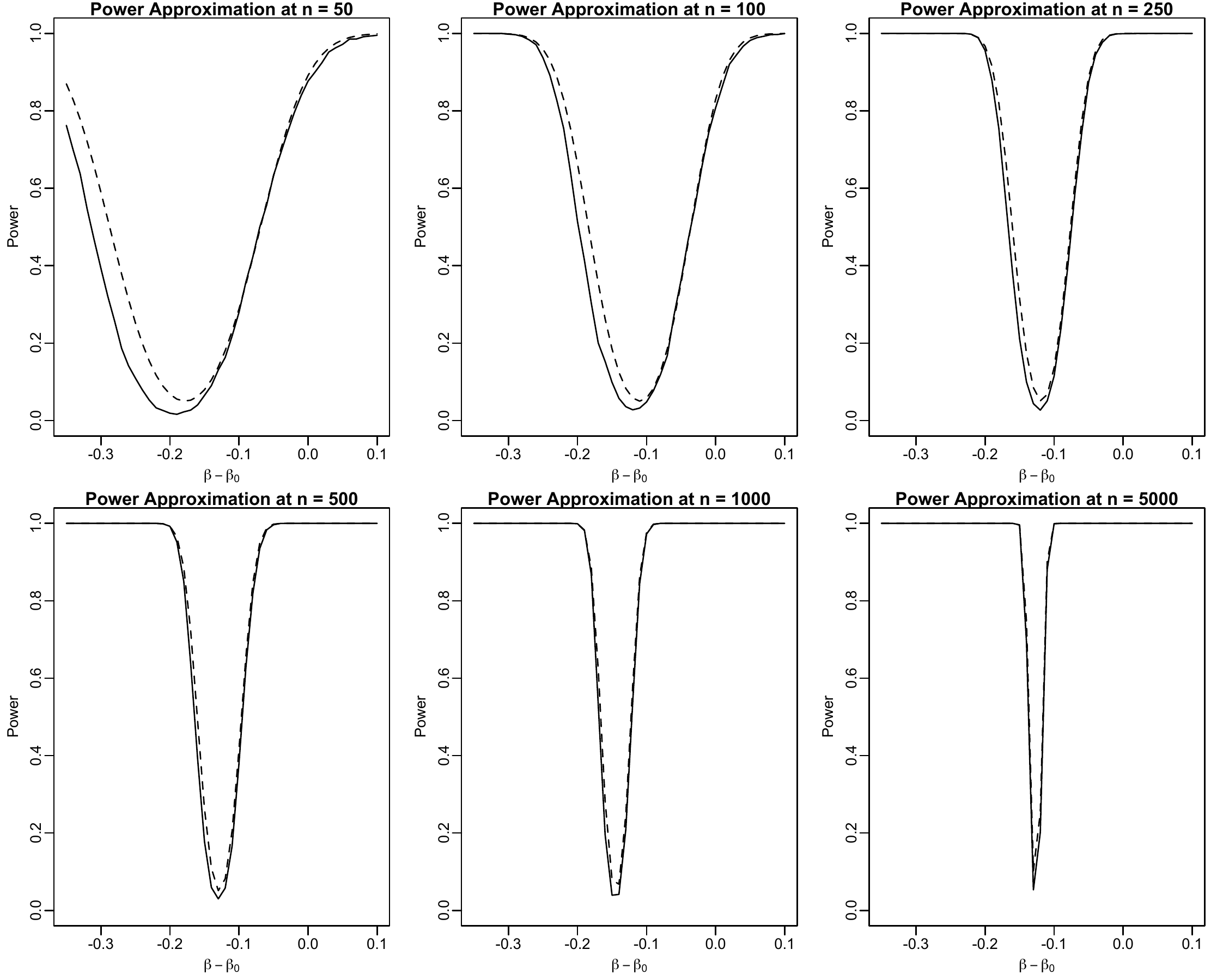}
\caption{Power curves for two-stage least squares under invalid instruments. The dotted line represents the asymptotic power curve and the solid line represents the simulated power. We set $B = \{1,2\}$ and $B^* = \{1,2,3\}$; $B$ does not contain all the invalid instruments.}
\label{fig:powerApprox2}
\end{figure}

The final simulation in Figure \ref{fig:powerApprox4} is the same as Figure \ref{fig:powerApprox2} except $B = \{1\}$ and $B^* = \{1,2,3\}$. Figure \ref{fig:powerApprox1} shows, again, the asymptotic power curve matches the empirical power curve as sample size grows. If we set $\boldsymbol{\pi}^* = (2,2,2,0,\ldots,0)$ or $\boldsymbol{\pi}^* =(0.5,0.5,0.5,0,\ldots,0)$, we see similar results.

\begin{figure}[H]
\includegraphics[width=\textwidth]{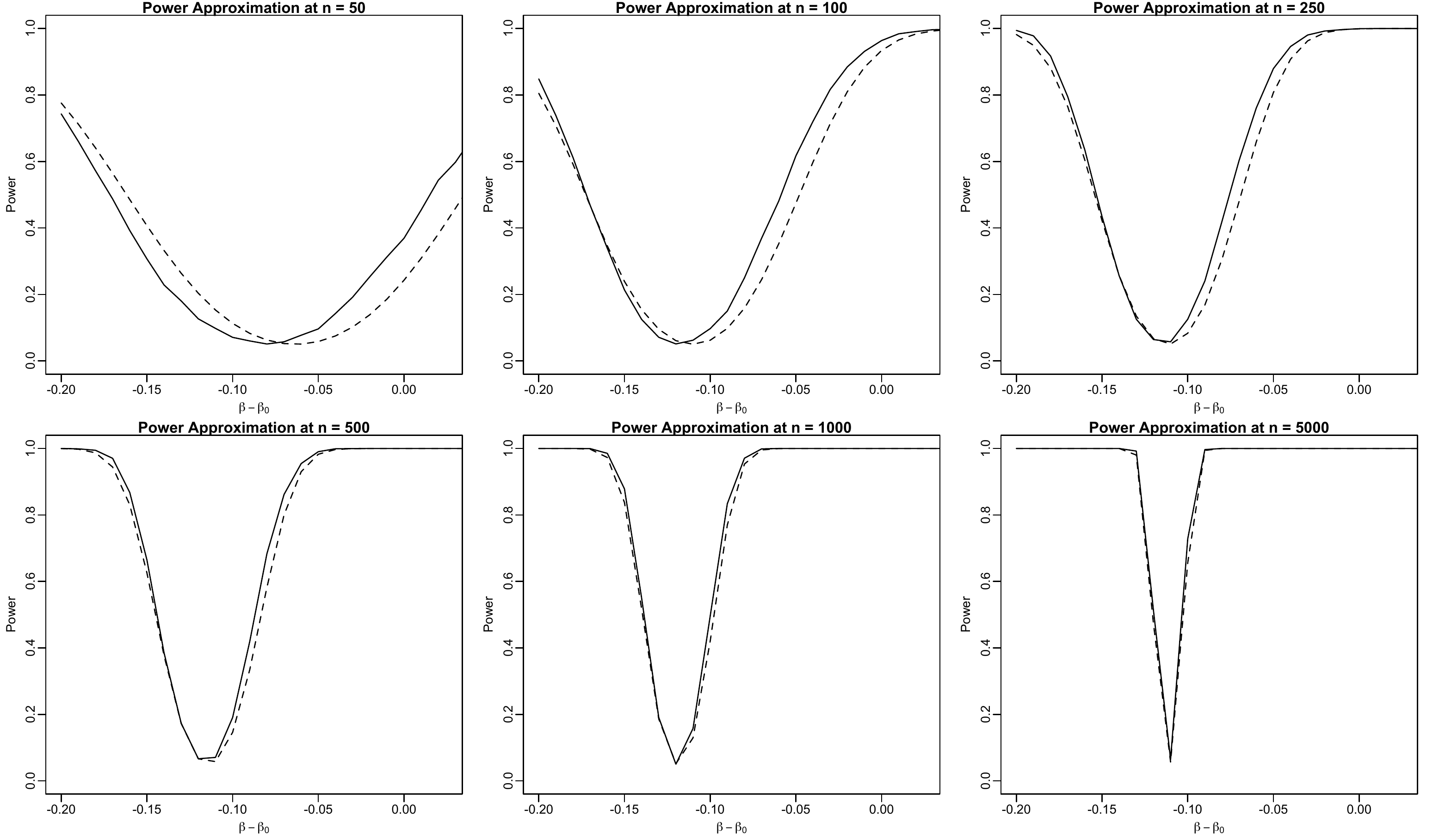}
\caption{Power curves for two-stage least squares under invalid instruments. The dotted line represents the asymptotic power curve and the solid line represents the simulated power. We set $B = \{1\}$ and $B^* = \{1,2,3\}$; $B$ does not contain all the invalid instruments.}
\label{fig:powerApprox4}
\end{figure}

In summary, our simulation results show promise that the asymptotic power under invalid instruments for two-stage least squares is a good approximation for the finite sample power of two-stage least squares under invalid instruments.

\section{Additional Simulations Results}
\label{sec:addition}
\subsection{Choice of Test Statistics for Method 1}
In this section, we present the additional simulations results with the same settings of the union method.  First, Table~\ref{tab:strong20_fixU10_length} examines the median length of 95\% confidence intervals under strong instruments when we vary the number of invalid instruments from $s^{*} = 0$ to $s^{*} = 9$ and use a fixed upper bound of $\bar{s}=10$. The Anderson-Rubin test and the conditional likelihood ratio test produce confidence intervals with almost equivalent lengths when used in the union  method. In contrast, the oracle confidence intervals based on the conditional likelihood ratio test and two-stage least squares are shorter than those based on the Anderson-Rubin test. 

\begin{table}[H]
\centering
\resizebox{\textwidth}{!}{\begin{tabular}{rlllllllllll}
 \hline
Method & Test Statistic & $s^{*}$=0 & $s^{*}$=1 & $s^{*}$=2 & $s^{*}$=3 & $s^{*}$=4 & $s^{*}$=5 & $s^{*}$=6 & $s^{*}$=7 & $s^{*}$=8 & $s^{*}$=9 \\ 
  \hline
  Union & TSLS & 0.628 & 1.573 & 1.824 & 1.946 & 2.027 & 2.071 & 2.122 & 2.113 & 2.101 & 2.039 \\ 
 &AR & 0.676 & 1.659 & 1.946 & 2.087 & 2.173 & 2.218 & 2.281 & 2.273 & 2.256 & 2.190 \\ 
  & CLR & 0.676 & 1.659 & 1.946 & 2.087 & 2.173 & 2.218 & 2.281 & 2.273 & 2.256 & 2.190 \\ 
 Oracle &  TSLS &  0.105 & 0.111 & 0.117 & 0.126 & 0.136 & 0.148 & 0.167 & 0.193 & 0.235 & 0.338 \\ 
 & AR & 0.168 & 0.176 & 0.181 & 0.190 & 0.202 & 0.211 & 0.228 & 0.251 & 0.283 & 0.350 \\ 
 & CLR & 0.106 & 0.113 & 0.119 & 0.128 & 0.138 & 0.151 & 0.170 & 0.197 & 0.241 & 0.350 \\ 
   \hline
\end{tabular}}
\caption{\label{tab:strong20_fixU10_length} TSLS, two-stage least squares; AR, Anderson-Rubin test; CLR, conditional likelihood ratio test. The median lengths of the 95\% confidence intervals when instruments are strong and $\bar{s} = 10$.}
\end{table}

Tables~\ref{tab:weak20_U5} and~\ref{tab:weak20_length} present the coverage proportion and the median lengths of 95\% confidence intervals when instruments are weak and $\bar{s} = 5$. Pretesting methods were not considered because the Sargan test is known to perform poorly with weak instruments. The coverage proportion presented in Table~\ref{tab:weak20_U5} suggests that the two-stage least squares method performs poorly with weak instruments, producing less than 95\% coverage rate even in the oracle setting. Like the strong instruments case, as the number of invalid instruments, $s^{*}$, increases, the Anderson-Rubin test and the conditional likelihood ratio test under the union method get closer to 95\% coverage.

\begin{table}[ht]
\centering
\resizebox{0.6\textwidth}{!}{\begin{tabular}{rllllll}
  \hline
Method & Test Statistic & $s^{*}$=0 & $s^{*}$=1 & $s^{*}$=2 & $s^{*}$=3 & $s^{*}$=4 \\
  \hline
Naive & TSLS & 66.9 & 2.4 & 0.4 & 0.0 & 0.0 \\
  & AR & 93.0 & 0.0 & 0.0 & 0.0 & 0.0 \\
 & CLR & 94.6 & 0.0 & 0.0 & 0.0 & 0.0 \\
 Union & TSLS & 100.0 & 99.7 & 99.1 & 95.3 & 79.2 \\
 & AR & 100.0 & 100.0 & 100.0 & 99.5 & 95.0 \\
  & CLR & 100.0 & 100.0 & 99.9 & 99.1 & 94.6 \\
 Oracle & TSLS & 66.9 & 72.4 & 70.9 & 73.8 & 77.6 \\
  & AR & 93.0 & 94.5 & 93.0 & 94.3 & 95.0 \\
  & CLR & 94.6 & 95.1 & 95.7 & 94.4 & 94.6 \\
   \hline
\end{tabular}}
\caption{\label{tab:weak20_U5} TSLS, two-stage least squares; AR, Anderson-Rubin test; CLR, conditional likelihood ratio test. Comparison of coverage rates of 95\% confidence intervals under weak instruments and $\bar{s} = 5$.}
\end{table}

\begin{table}[ht]
\centering
\resizebox{0.6\textwidth}{!}{\begin{tabular}{rllllll}
  \hline
Case & Test & $s^{*}$=0 & $s^{*}$=1 &  $s^{*}$=2 & $s^{*}$=3 & $s^{*}$=4 \\
  \hline
  Our method 
 & AR &   5.737 & 3.900 & 8.523 & 19.352 & 28.609 \\ 
 & CLR & 1.821 &  $\infty$ &  $\infty$ & $\infty$ &  $\infty$ \\ 
 Oracle 
  & AR & 0.876 & 0.944 & 0.964 & 1.051 & 1.173 \\
 & CLR &  0.521 & 0.557 & 0.598 & 0.652 & 0.717 \\ 
   \hline
\end{tabular}}
\caption{\label{tab:weak20_length} TSLS, two-stage least squares; AR, Anderson-Rubin test; CLR, conditional likelihood ratio test. Comparison of median lengths between 95\% confidence intervals under weak instruments and $\bar{s} = 5$. }
\end{table}
Table~\ref{tab:weak20_length} shows the median lengths of the 95\% confidence intervals when instruments are weak and $\bar{s}= 5$. The conditional likelihood ratio test under the union method produces infinite intervals. These infinite lengths suggest that weak instruments can greatly amplify the bias caused by invalid instruments, thereby forcing our method to produce infinite intervals to retain honest coverage; see \citet{small_war_2008} for a similar observation. The Anderson-Rubin test under our method also produces very wide confidence intervals. In contrast, the oracle intervals produce finite intervals since instrumental validity is not an issue; note that if the instrument is arbitrary weak, infinite confidence intervals are necessary for honest coverage ~\citep{dufour_impossibility_1997}. Finally, the oracle conditional likelihood ratio intervals is shorter than the oracle Anderson-Rubin intervals, as expected, but the relationship is reversed in the union method where we do not know which instruments are invalid.

\subsection{Comparison of Methods 1 and 2}
\begin{figure}[H]
	\centering
	\begin{subfigure}[b]{0.4\textwidth}
		\includegraphics[width=\textwidth]{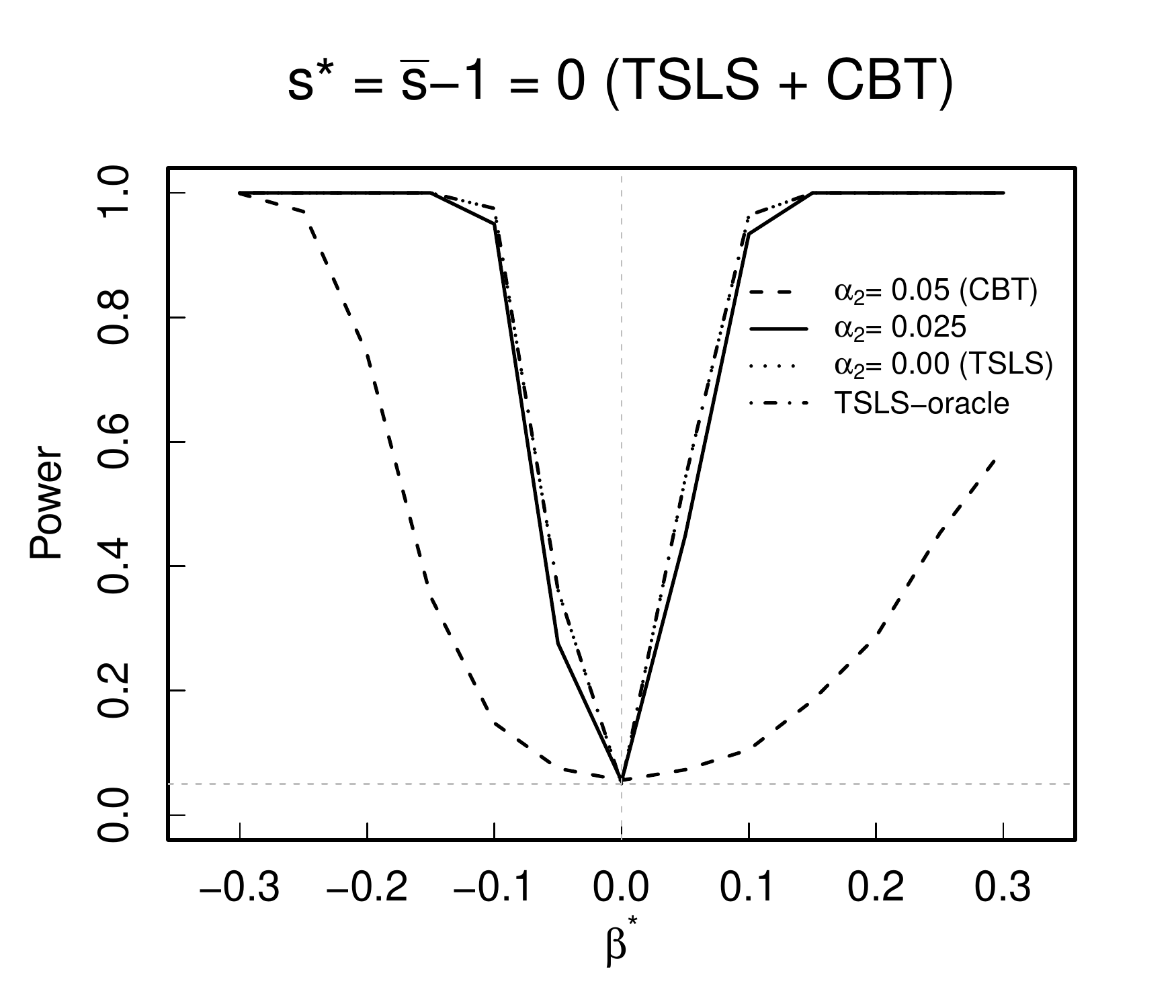}
	\end{subfigure}
	\begin{subfigure}[b]{0.4\textwidth}
		\includegraphics[width=\textwidth]{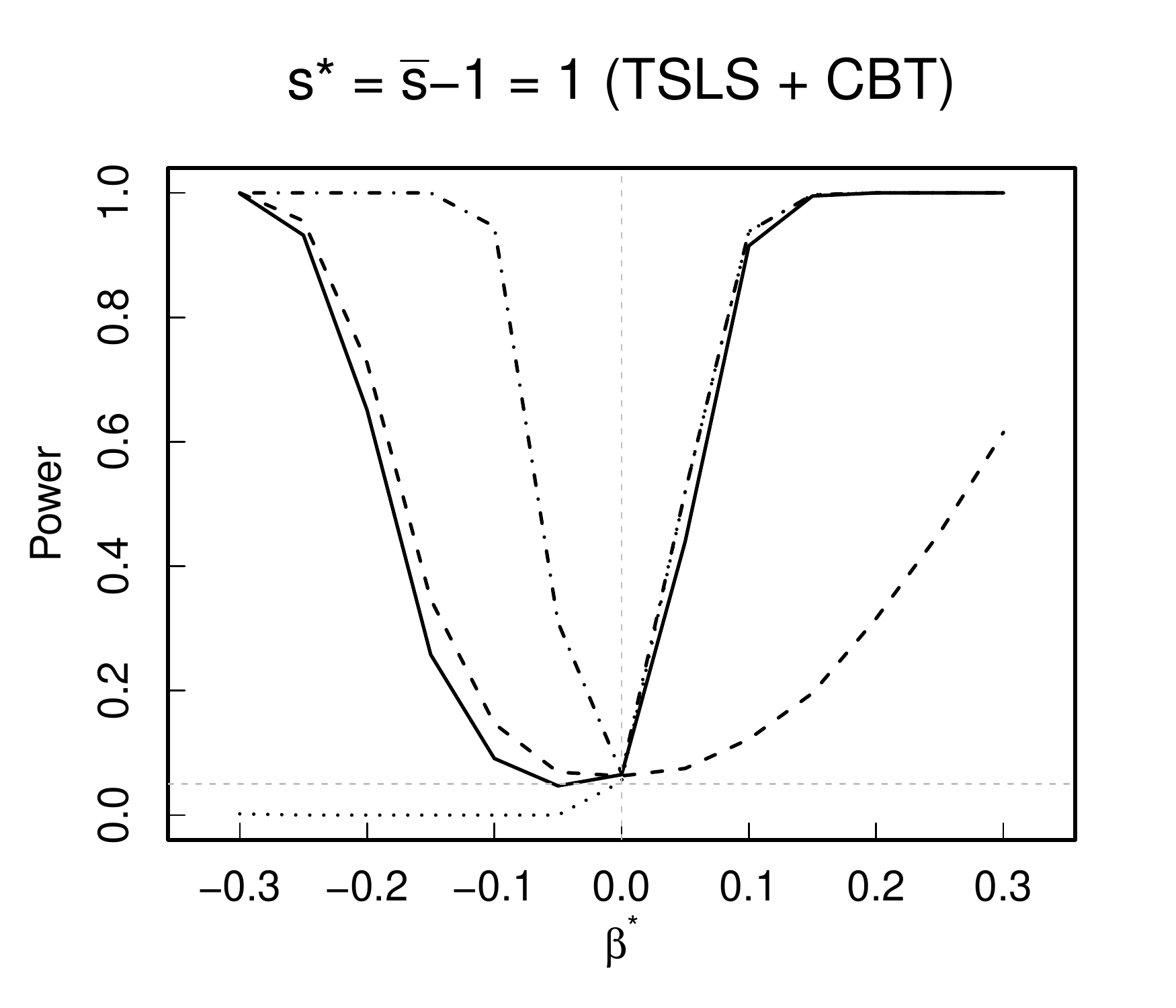}
	\end{subfigure}
	\begin{subfigure}[b]{0.4\textwidth}
		\includegraphics[width=\textwidth]{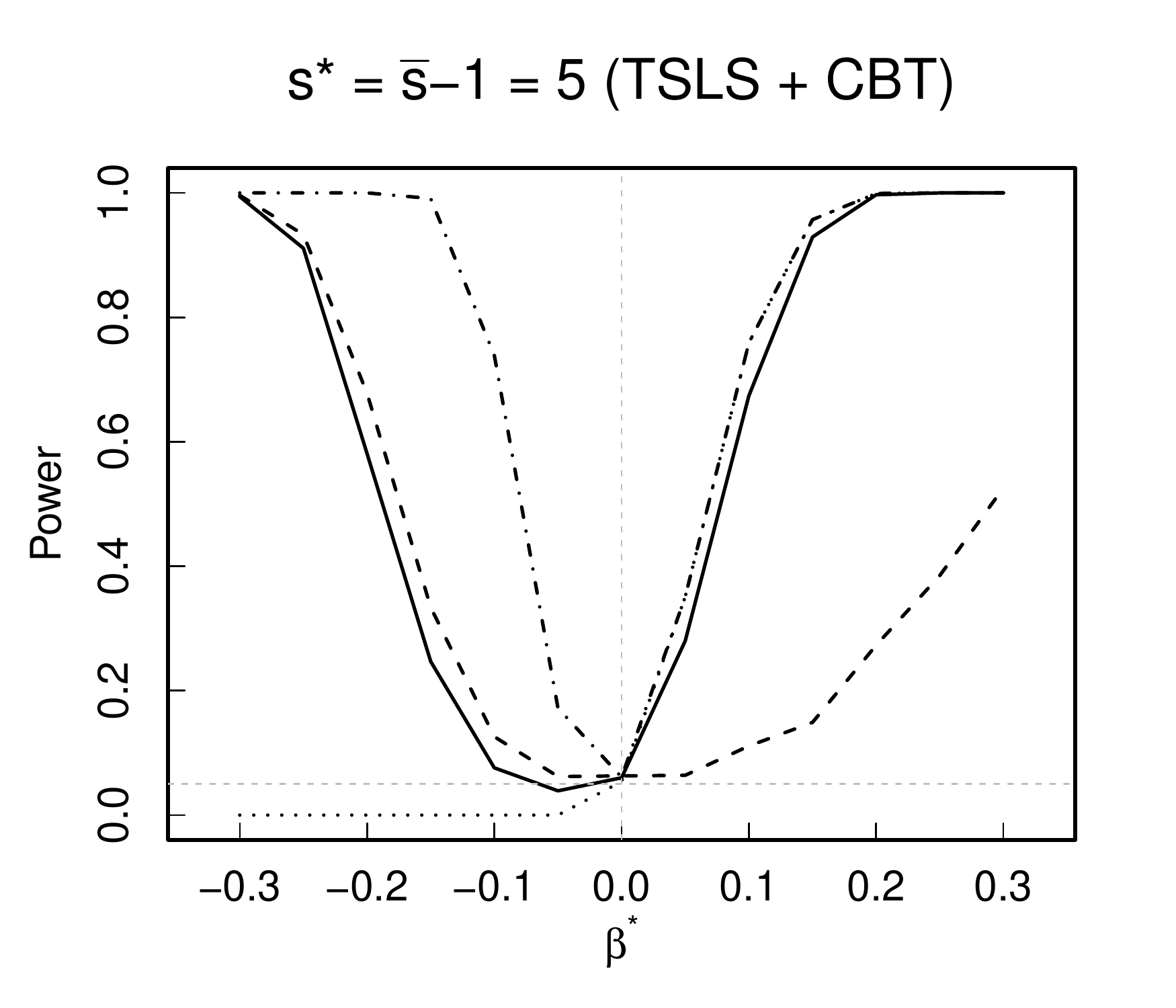}
	\end{subfigure}	
	\begin{subfigure}[b]{0.4\textwidth}
		\includegraphics[width=\textwidth]{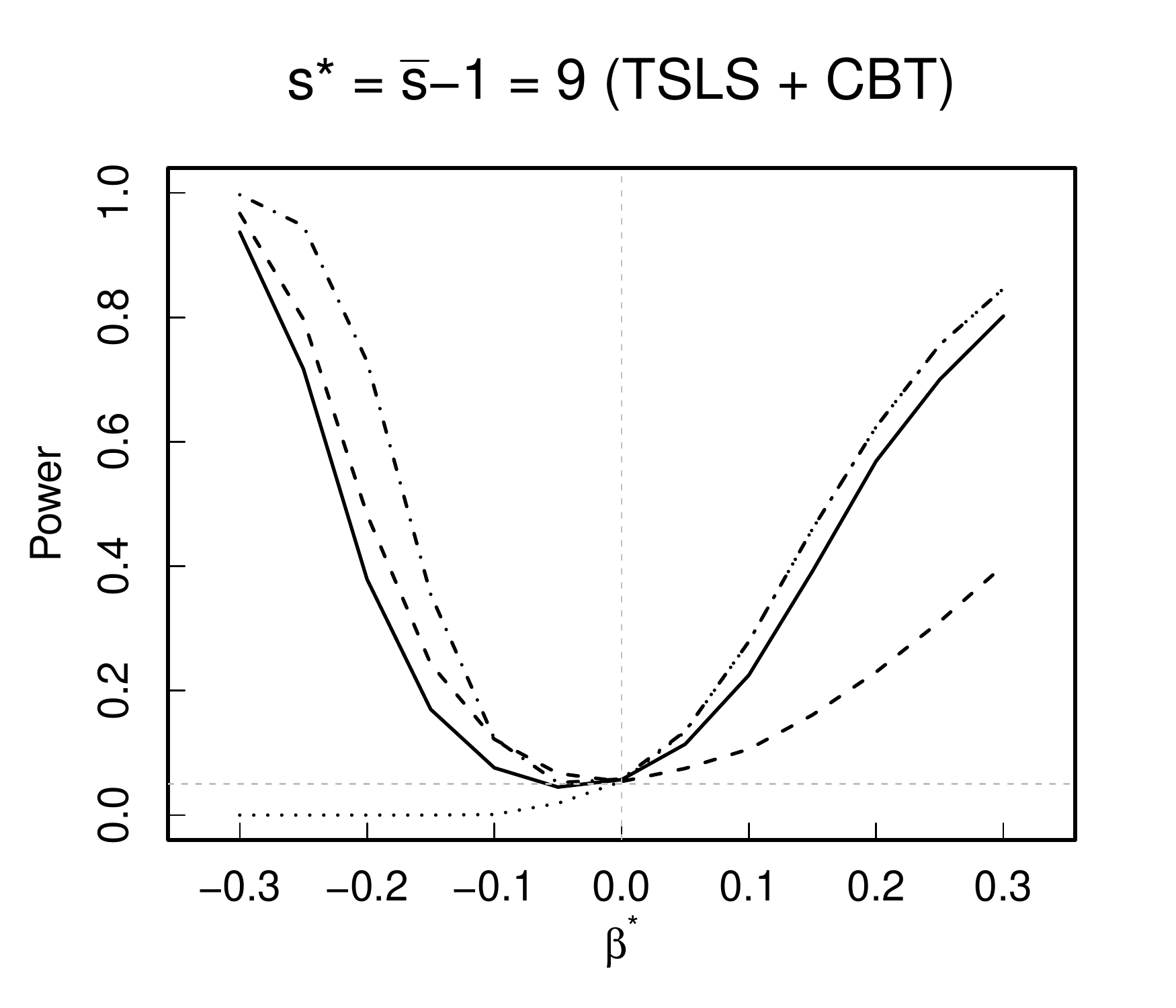}
	\end{subfigure}
	\caption{\label{fig:strong_tsls} TSLS: two-stage least squares; the CBT: collider bias test. Power of different methods under strong instruments with different numbers of invalid instruments. We fix $\alpha = 0.05$ and vary $\alpha_{2}$ to be $0.0, 0.025$, or $0.05$. When $\alpha_{2} = 0.05$, the combined test is equivalent to the collider bias test. When $\alpha_{2} = 0$, the combined test is equivalent to the union method. The oracle TSLS is a test based on TSLS that knows exactly which instruments are valid.}
\end{figure}
Figure \ref{fig:strong_tsls} compares the power between the union method using two-stage least squares, the collider bias test, and the combined method; the setup is identical to the main text. When the two-stage least squares method is used in the union procedure, Figure~\ref{fig:strong_tsls} demonstrates the asymmetric patterns in power (dotted lines) starting from $s^{*}=1$, and the collider bias test has better power when $\beta^{*} < 0$ and $s^{*} > 0$. The combined test (solid lines) essentially has the best of both worlds, where it achieves good power across most values of the alternative.

\subsection{Binary Instruments and Outcome}
We consider a simulation study when the instruments or outcomes are binary and assess the sensitivity of the methods' assumptions to distributional assumptions. To create binary instruments, we replace the model for Normal model in the main text with a Bernouilli model.
\[ 
Z_{ij} \overset{i.i.d.}{\sim} B(0.3)~i=1,2,\ldots,n;~j=1,2,\ldots, L=10.
\]
Figure~\ref{fig:binaryIV} shows the results of the simulation setting in Figure~\ref{fig:strong_either}, but with binary instruments. We see that the power curves across all methods are nearly identical to each other, suggesting that the Normality assumption on $\mathbf{Z}$ for the collider bias test can be relaxed.
\begin{figure}[!ht]
	\centering
	\begin{subfigure}[b]{0.4\textwidth}
		\includegraphics[width=\textwidth]{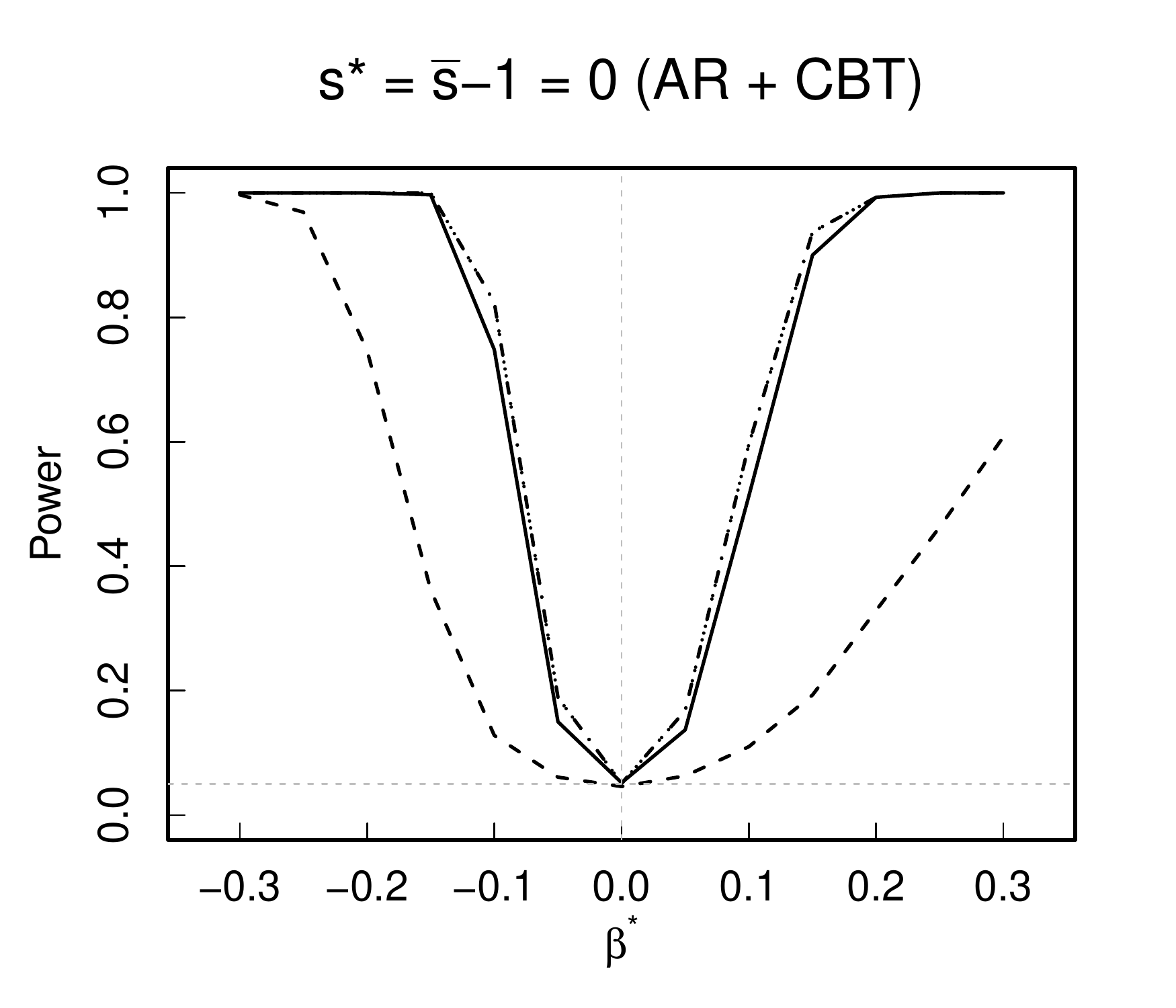}
	\end{subfigure}
	\begin{subfigure}[b]{0.4\textwidth}
		\includegraphics[width=\textwidth]{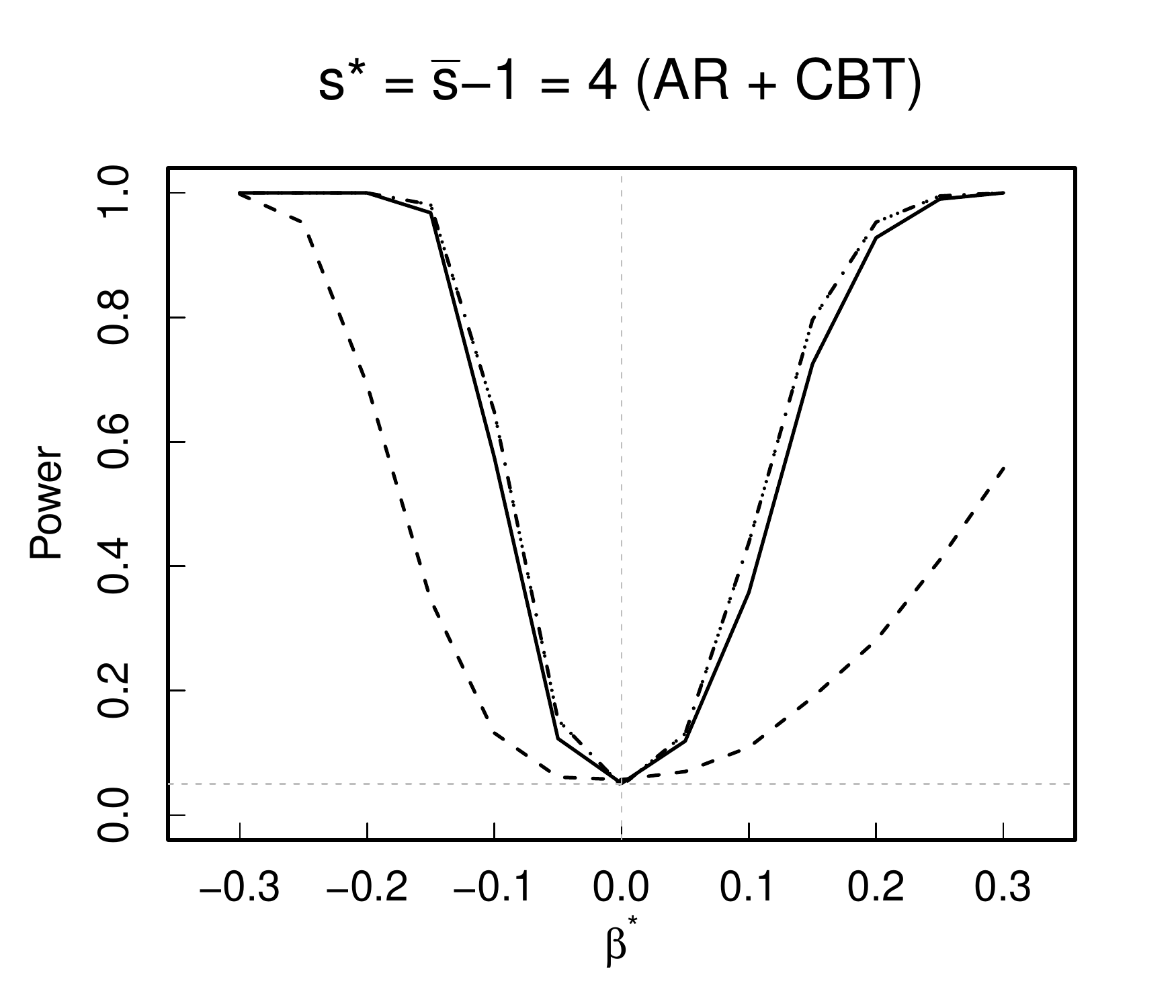}
	\end{subfigure}
	\begin{subfigure}[b]{0.4\textwidth}
		\includegraphics[width=\textwidth]{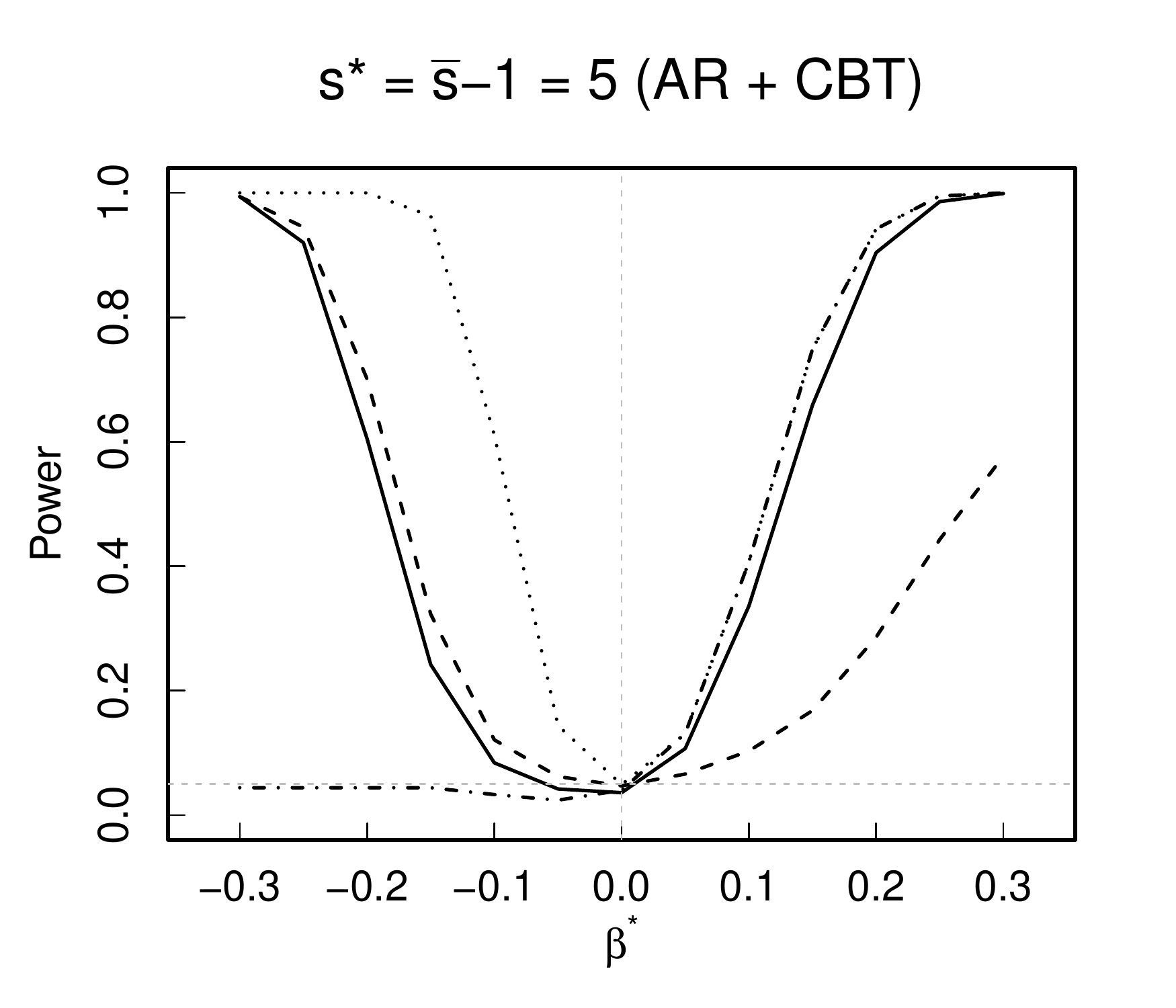}
	\end{subfigure}	
	\begin{subfigure}[b]{0.4\textwidth}
		\includegraphics[width=\textwidth]{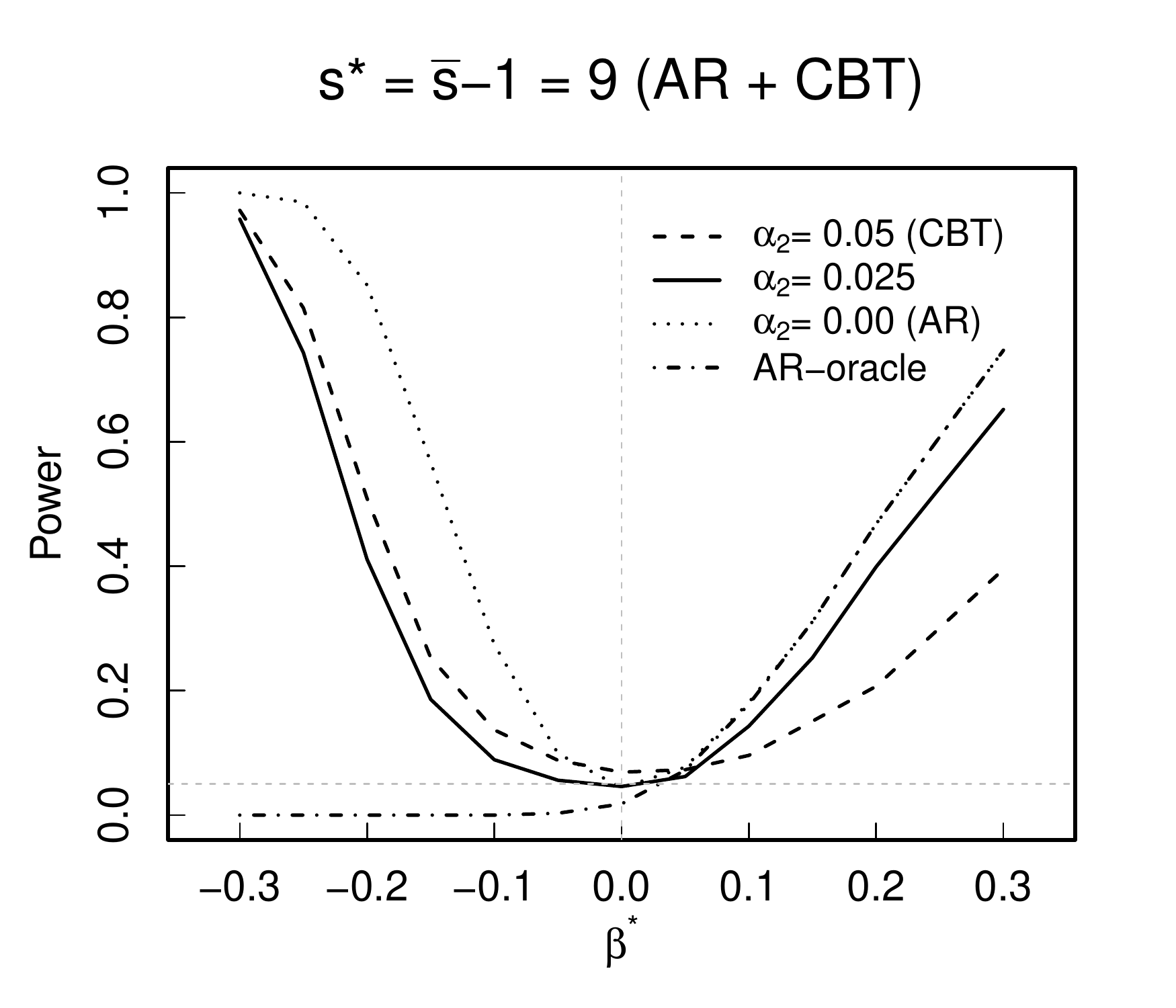}
	\end{subfigure}
	\caption{\label{fig:binaryIV} AR: Anderson-Rubin test; CBT: collider bias test. Power of different methods under strong, binary instruments with different numbers of invalid instruments. We fix $\alpha = 0.05$ and vary $\alpha_{2}$ to be $0.0, 0.025$, or $0.05$. When $\alpha_{2} = 0.05$, the combined test is equivalent to the collider bias test. When $\alpha_{2} = 0$, the combined test is equivalent to the union method. The oracle AR test is the AR test that knows exactly which instruments are valid.
}
\end{figure}
Next, we replace the linear model for the outcome in the main text with a logistic regression model 
\[
Y_{i} \overset{ind}{\sim} B\left( \mbox{logistic} ( \mathbf{Z}_{i\cdot}^T \boldsymbol{\tilde{\pi}}^* + D_i \tilde{\beta}^* + \epsilon_i ) \right),
\]
where $\mbox{logistic}(x) = (1 + e^{-x})^{-1}$. Then, $\tilde{\beta}^{*}$ is the expected change in log odds for a one-unit increase in $D_{i}$ when $\mathbf{Z}_{i \cdot}$ is held constant. The null hypothesis of no effect in the linear model $H_0: \beta^* = 0$ implies $\tilde{\beta}^* = 0$. Additionally, if instruments are independent of each other, $\pi_j^*= 0$ implies $\tilde{\pi}_j^* = 0$. Thus, the number of invalid instruments $s^*$ and its upper bound $\bar{s}$ match with the the number of zeros in $\tilde{\pi}^*$ and its upper bound in a binary model. Figure~\ref{fig:binaryY} replicates the simulation setting in Figure~\ref{fig:strong_either}, but with binary outcomes. The overall shape and trend of the power curves look very similar to Figure~\ref{fig:strong_either} across different methods. 

\begin{figure}[H]
	\centering
	\begin{subfigure}[b]{0.4\textwidth}
		\includegraphics[width=\textwidth]{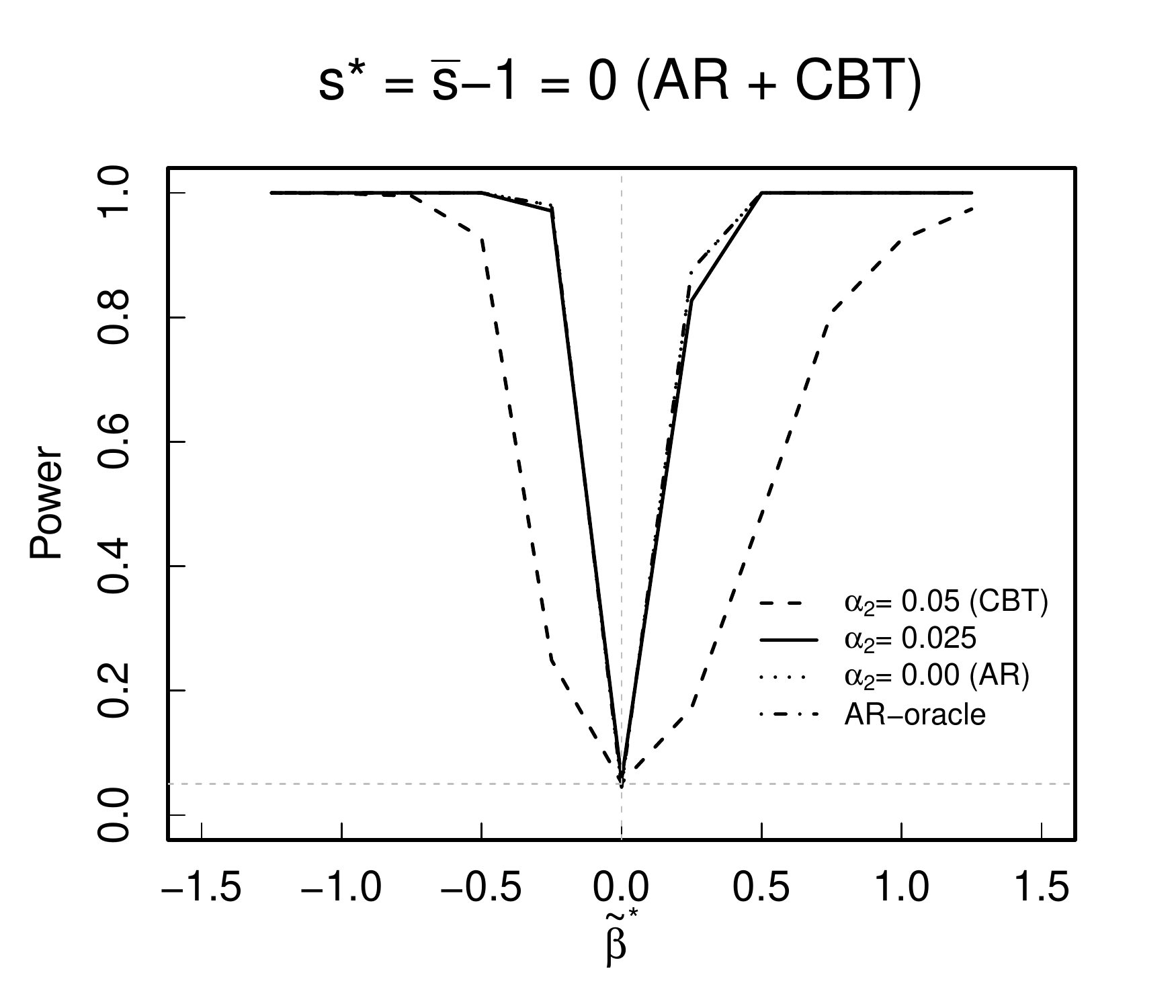}
	\end{subfigure}
	\begin{subfigure}[b]{0.4\textwidth}
		\includegraphics[width=\textwidth]{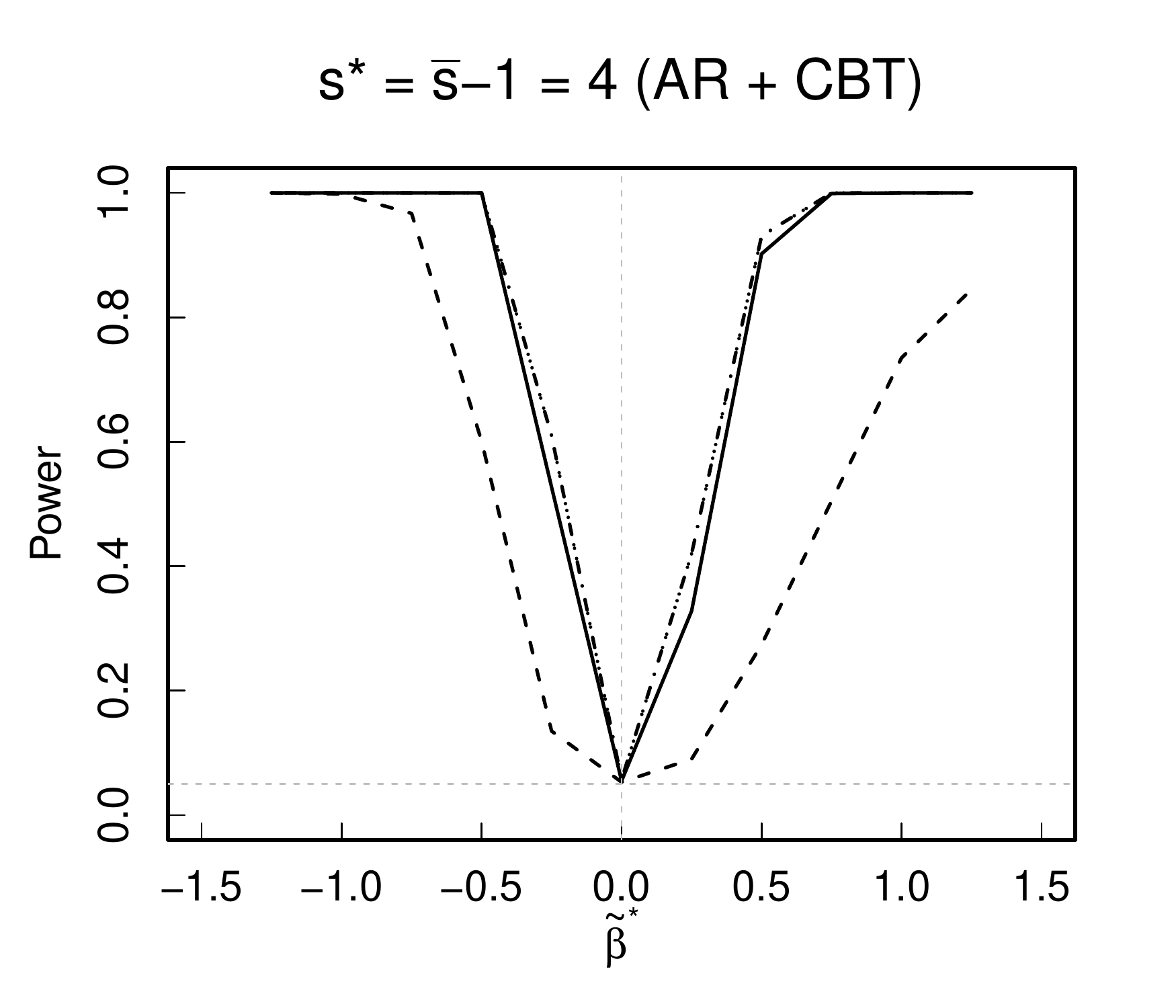}
	\end{subfigure}
	\begin{subfigure}[b]{0.4\textwidth}
		\includegraphics[width=\textwidth]{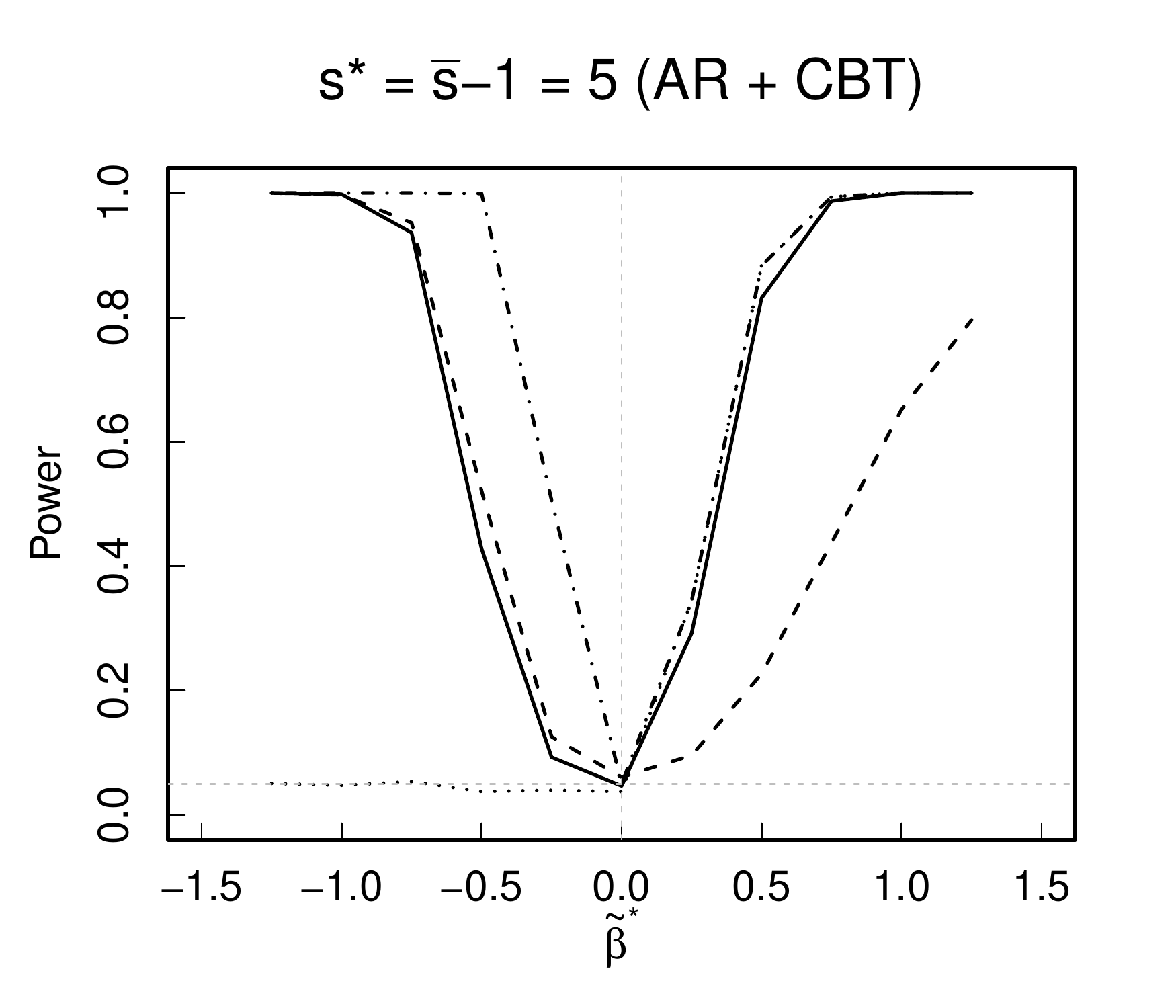}
	\end{subfigure}	
	\begin{subfigure}[b]{0.4\textwidth}
		\includegraphics[width=\textwidth]{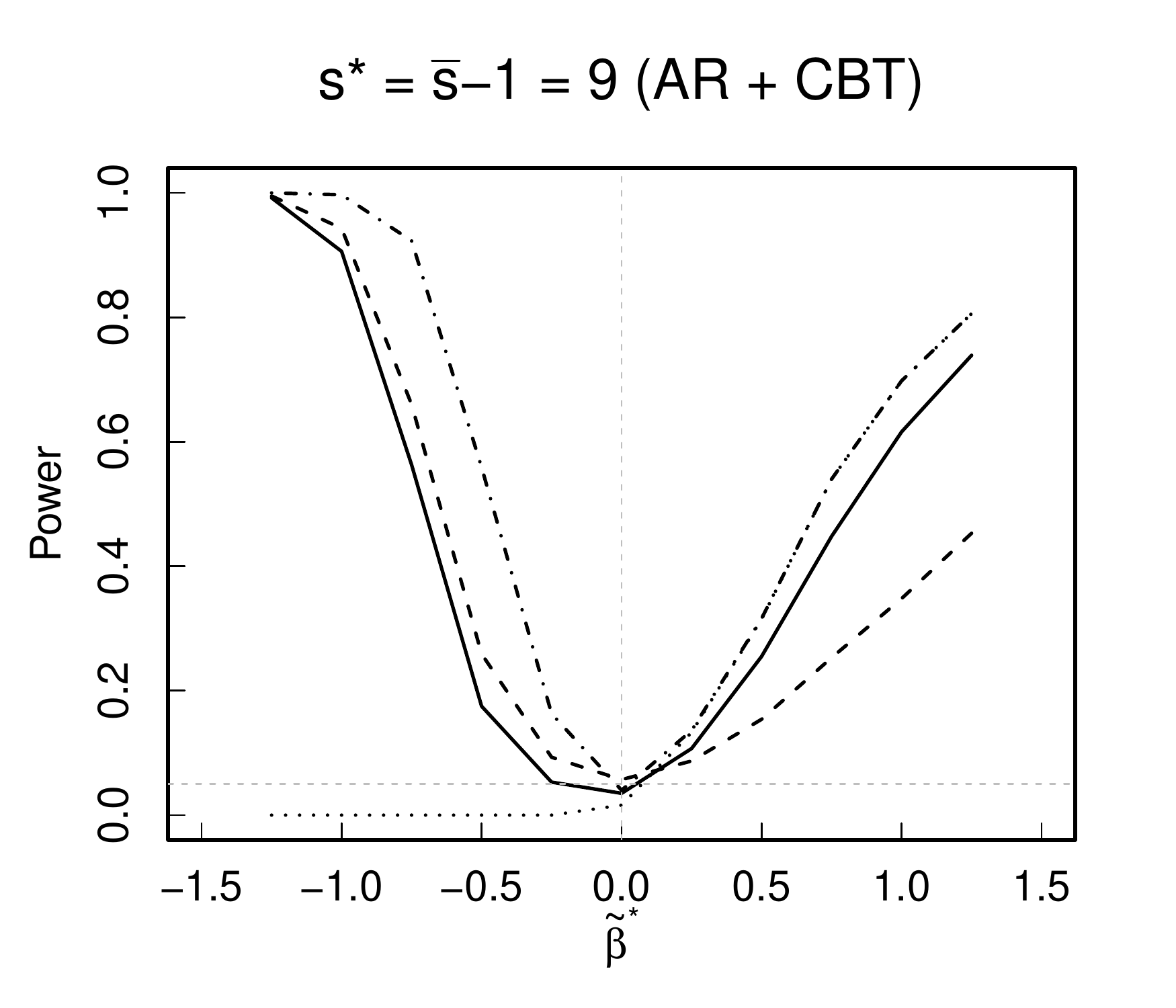}
	\end{subfigure}
	\caption{\label{fig:binaryY} AR: Anderson-Rubin test; CBT: collider bias test. Power of different methods under strong instruments with different numbers of invalid instruments and the outcome is binary. We fix $\alpha = 0.05$ and vary $\alpha_{2}$ to be $0.0, 0.025$, or $0.05$. When $\alpha_{2} = 0.05$, the combined test is equivalent to the collider bias test. When $\alpha_{2} = 0.0$, the combined test is equivalent to the union method. The oracle AR test is the AR test that knows exactly which instruments are valid.}
\end{figure}

\section{Additional Data Analysis}

Table~\ref{tab:SNP_D} presents the location of each SNP used in Section~\ref{sec:FHS}, each SNP's marginal association with LDL-C (exposure) and CVD incidence (outcome) via a linear regression and logistic regression, respectively. As expected, we see that all the instruments are strong, with t-statistics concerning the marginal association of $Z_j$ and $D$ exceeding $1.8$, which suggests that the ten instruments reasonably satisfy (A1). 
\begin{table}[H]
\centering
\resizebox{\textwidth}{!}{\begin{tabular}{rl|ll|ll}
  \hline
\multirow{2}{*}{SNP ($Z_{j}$)} & \multirow{2}{*}{Position} & \multicolumn{2}{c}{$D \sim Z_{j}$} & \multicolumn{2}{c}{$Y \sim Z_{j}$}
\\ & & Estimate (Std. Error) & t value (p-value) & Estimate (Std. Error) & t value (p-value) \\ 
  \hline
 rs11591147 & chr1:55039974 &         15.796  (3.339) &          4.731 (0.000) &         -0.053 (0.038) &         -1.378 (0.168) \\ 
  rs10455872 & chr6:160589086 &          4.012 (1.750) &          2.292 (0.022) &          0.012 (0.020) &          0.602 (0.547) \\ 
  rs646776 & chr1:109275908 &         6.637 (1.013) &         6.550 (0.000) &         0.004  (0.012) &         0.333 (0.739) \\ 
  rs693 & chr2:21009323 &         4.278 (0.824) &         5.195 (0.000) &         0.010 (0.009) &         1.030 (0.303) \\ 
  rs2228671 & chr19:11100236 &          4.998 (1.278) &          3.911 (0.000) & -0.037 (0.014) &  -2.592 (0.010) \\ 
  rs2075650 & chr19:44892362 &          5.099  (1.278) &          3.989 (0.000) &          0.015 (0.014) &          1.046 (0.295) \\ 
  rs4299376 & chr2:43845437 &          4.050 (0.883) &          4.589 (0.000) &          0.020 (0.010) & 1.981 (0.048) \\ 
  rs3764261 & chr16:56959412 &          1.936 (0.918) &          2.108 (0.035) &         -0.001 (0.010) &    -0.079 (0.937) \\ 
  rs12916 & chr5:75360714 & 1.776 (0.866) &  2.051 (0.040) &         0.009 (0.010) &   0.918 (0.359) \\ 
  rs2000999 & chr16:72074194 &  1.963 (1.048) &         1.872 (0.061) &         0.020 (0.012) &  1.696 (0.090) \\  
  \hline
\end{tabular}}
\caption{\label{tab:SNP_D} chr: chromosome. SNPs that are located in the same chromosome (e.g. rs2228671 and rs2075650) are in linkage equilibrium. Estimates and t-values are derived from a linear regression of $D$ or a logistic regression of $Y$ on $Z_{j}$. }
\end{table}
To mitigate concerns for population crypticness, we selected one subject at random from each family in the Offspring Cohort linked by sibling or marriage relationships; note that there is no parent-child relationship within Offspring Cohort. By doing so, we retain only 60\% of the subjects ($n=1,726$) used in the main study. This may reduce power of our analysis~\citep{pierce2010power}, but possibly lead to a more valid analysis of the true effect \citep{lee2019network}. All subsequent analysis will use $n=1,726$ subject.

\begin{table}[H]
\centering
\resizebox{\textwidth}{!}{\begin{tabular}{rl|ll|ll}
  \hline
\multirow{2}{*}{SNP ($Z_{j}$)} & \multirow{2}{*}{Position} & \multicolumn{2}{c}{$D \sim Z_{j}$} & \multicolumn{2}{c}{$Y \sim Z_{j}$}
\\ & & Estimate (Std. Error) & t value (p-value) & Estimate (Std. Error) & t value (p-value) \\ 
  \hline
rs562338 & chr2:21065449 & 2.656 (1.384) & 1.919 (0.055) & 0.026 (0.015) & 1.724 (0.085) \\
rs4299376 & chr2:43845437 & 4.376 (1.190) & 3.676 (0.000) & 0.017 (0.013) & 1.262 (0.207) \\
rs2000999 & chr16:72074194 & 1.609 (1.396) & 1.153 (0.249) & 0.024 (0.015) & 1.528 (0.127) \\
rs17321515 & chr8:125474167 & 1.720 (1.128) & 1.524 (0.128) & 0.018 (0.013) & 1.470 (0.142) \\
   \hline
\end{tabular}}
\caption{\label{tab:foursnps} The location of four SNPs $Z_{j}~(j=1,2,3,4)$ used in the second analysis and the inference on the regression coefficients from a linear regression of $D$ and a logistic regression of $Y$ on each SNP.}
\end{table}

For the second analysis, we consider four SNPs, rs562338, rs4299376, rs2000999, and rs17321515 (see Table~\ref{tab:foursnps}) as candidate instruments. Table~\ref{tab:fourtab1} summarizes some results from the union procedure using five different methods. The conditional likelihood ratio test at $\alpha_{1} = 0.05$ results in rejecting the null causal effect while allowing at most two invalid instruments among four; whereas the two-stage least squares method and the conditional likelihood ratio test with pretesting allow at most one invalid instrument and the Anderson-Rubin test and the two-stage least squares method with pretesting require no invalid instrument at all to be able to reject the null effect, both at $
\alpha_{1} = 0.05$ level.

\begin{table}[H]
\centering
\resizebox{0.6\textwidth}{!}{\begin{tabular}{rlllll}
  \hline
 & AR & CLR & TSLS & SarganTSLS & SarganCLR \\ 
  \hline
  $\mathbf{\alpha_{1}  = 0.05}$ \\ 
$\bar{s}$=1 & \cellcolor{Gray} 0.000 & \cellcolor{Gray} 0.002 & \cellcolor{Gray} 0.002 & \cellcolor{Gray} 0.001 & \cellcolor{Gray} 0.001 \\ 
  $\bar{s}$=2  &  -0.001 & \cellcolor{Gray} 0.001 & \cellcolor{Gray} 0.001 & -0.000 & \cellcolor{Gray} 0.000 \\ 
 $\bar{s}$=3  & -0.001 & \cellcolor{Gray} 0.000 & -0.001 & -0.003 & -0.002 \\ 
  \hline
$\mathbf{\alpha_{1}  = 0.025}$ \\ 
  $\bar{s}$=1 &  -0.001 & \cellcolor{Gray}0.001 & \cellcolor{Gray}0.001 & \cellcolor{Gray}0.000 & \cellcolor{Gray}0.001 \\ 
  $\bar{s}$=2  &  -0.001 & \cellcolor{Gray}0.000 & -0.000 & -0.001 & -0.001 \\ 
  $\bar{s}$=3 &   -0.003 & -0.002 & -0.003 & -0.004 & -0.004 \\ 
   \hline
\end{tabular}}
\caption{\label{tab:fourtab1} Lower bound of an one-sided confidence interval for $\beta^{*}$ at different sizes $\alpha_{1} (=0.05, 0.025)$ for each union procedure assuming different values of $\bar{s}$. For the Sargan test, we use $\alpha_{s} = \alpha_{t} = \alpha_{1} / 2$. There are $L=4$ candidate SNPs.
Gray cells indicate a confidence interval that does not include the null effect.}
\end{table}

As an another tool to test, we implemented conditional independence test between the instruments and the outcome and obtained a likelihood ratio test statistic of $\lambda_{n} =  2.018$ based on $n=1726$ subjects, which fails to reject the null at $\alpha_{2} = 0.05$ even we set $\bar{s} = 1$. Therefore, the conclusion from the combined test of the union procedure and the conditional independence test
at $\{ (\alpha_{1}, \alpha_{2}) :~\alpha_{1} + \alpha_{2} = 0.05 \}$ depends solely on the results from the union procedure at $\alpha_{1}$ level. See Table~\ref{tab:secondtab} for the upper bound $\bar{s}$ on $s^{*}$ plus one to reject the null at different combinations of $(\alpha_{1}, \alpha_{2})$ if we could. At $\alpha_{1}=\alpha_{2}=0.025$ (gray cells in Table~\ref{tab:secondtab}), we do not reject the null using the Anderson-Rubin method, need less than two invalid instruments using the conditional likelihood ratio method with and without pretesting and the two-stage least squares method with pretesting, and need no invalid instrument without pretesting for the two-stage least squares method.

 \begin{table}[H]
\centering
\resizebox{0.9\textwidth}{!}{\begin{tabular}{r|rrrrrrrrrrr}
  \hline
$\alpha_{1}$ & 0 & 0.005 & 0.01 & 0.015 & 0.02 & \cellcolor{Gray} 0.025 & 0.03 & 0.035 & 0.04 & 0.045 & 0.05 \\ 
$\alpha_{2}$ & 0.05 & 0.045 & 0.04 & 0.035 & 0.03 & \cellcolor{Gray} 0.025 &  0.02 & 0.015 & 0.01 & 0.005 & 0 \\
  \hline
AR & NR & NR & NR & NR & NR & \cellcolor{Gray} NR & NR & NR & NR & NR & 1 \\ 
  CLR & NR & NR & 1 & 1 & 1 & \cellcolor{Gray} 2 & 2 & 2 & 2 & 2 & 3 \\ 
  TSLS & NR & NR & 1 & 1 & 1 & \cellcolor{Gray} 1 & 2 & 2 & 2 & 2 & 2 \\ 
  SarganTSLS & NR & 1 & 1 & 1 & 2 & \cellcolor{Gray} 2 & 2 & 2 & 2 & 2 & 2 \\ 
  SarganCLR & NR & 1 & 1 & 2 & 2 & \cellcolor{Gray} 2 & 2 & 2 & 2 & 3 & 3 \\ 
   \hline
\end{tabular}}
\caption{\label{tab:secondtab} 
Each cell indicates the smallest upper bound $\bar{s}$ needed to reject the null or not to reject the null (NR) using five different union procedures at $\alpha_{1}$ and the conditional independence test at $\alpha_{2}$. The gray cells indicate the results when the error is splitted equally, i.e. $\alpha_{1} = \alpha_{2} = 0.025$.}
\end{table}

\end{document}